\title{Facility Location Problem with Aleatory Agents}
\author{
    Gennaro Auricchio\thanks{Department of Mathematics, University of Padua, Padua, Italy. Email: gennaro.auricchio@unipd.it} \and
    Jie Zhang\thanks{Department of Computer Science, University of Bath, Bath, UK. Email: jz2558@bath.ac.uk} 
}
\date{}
\newtheorem{theorem}{Theorem}
\newtheorem{corollary}{Corollary}
\newtheorem{remark}{Remark} 
\newtheorem{problem}{Problem}
\newtheorem{example}{Example}
\newtheorem{definition}{Definition}
\newtheorem{mechanism}{Mechanism}
\def \Dnn{\Delta_{n_r,n_u}}
\def \PP{\mathcal{P}}
\def \RR{\mathcal{R}}
\def \RRn{\mathcal{R}(n_r,n_u)}
\def \EE{\mathbb{E}}
\def \erre{\mathbb{R}}
\def \enne{\mathbb{N}}
\def \ESC{\mathcal{ESC}}
\def \II{\mathbb{I}}
\def \Flam{F_{\lambda,\mu,\vec x}}
\def \fg{f_{\gamma}}
\def \PQM{\texttt{PQM}}
\def \SAR{\texttt{SAR}}
\DeclareMathOperator*{\argmin}{arg\,min}
\DeclarePairedDelimiter\ceil{\lceil}{\rceil}
\DeclarePairedDelimiter\floor{\lfloor}{\rfloor}
\begin{document}
\maketitle

\begin{abstract}
In this paper, we introduce and study the Facility Location Problem with Aleatory Agents (FLPAA), where the facility can accommodate a number of agents, namely $n$, which is larger than the number of agents reporting their preferences, namely $n_r$.
The spare capacity is used by $n_u := n - n_r$ aleatory agents which are assumed to be samples of a probability distribution $\mu$. 
The goal of FLPAA is to find a location $y$ that minimizes the \textit{ex-ante social cost}, which is defined as the expected cost of the $n_u$ agents sampled from $\mu$ plus the classic social cost incurred by the agents reporting their position. 
We show that there exists a discrete set that always contains at least one optimal solution.
We then investigate the mechanism design aspects of the FLPAA under the assumption that the mechanism designer lacks knowledge of the distribution $\mu$ but can query $k$ quantiles of $\mu$.
We explore the trade-off between acquiring more insights into the probability distribution and designing a better-performing mechanism, which we describe through the strong approximation ratio (SAR).
The SAR of a mechanism measures the highest ratio between the cost of the mechanisms and the cost of the optimal solution on the worst-case input $\vec x$ and worst-case distribution $\mu$, offering a stringent metric for efficiency loss that does not depend on $\mu$.
To better exemplify the challenges of our framework, we divide our study into four different information settings: \begin{enumerate*}[label=(\roman*)]
    \item the \textit{zero information case}, in which the mechanism designer has access to no quantiles,
    \item the \textit{median information case}, in which the mechanism designer has only access to the median of $\mu$,
    \item the \textit{$n_u$-quantile information case}, in which the mechanism designer has access to $n_u$ quantiles of its choice, and  
    \item the \textit{$k$-quantile information case}, in which the mechanism designer has access to $k<n_u$ quantiles of its choice.
\end{enumerate*}
For all these frameworks, we propose a mechanism that is either optimal or achieves a small constant SAR and pair it with a lower bound on the SAR.
In most cases, the lower bound matches the upper bound, proving that our mechanisms are tight, that is no truthful mechanism can achieve a lower SAR.
Lastly, we extend the FLPAA to include cases in which we must locate two facilities.
\end{abstract}

\maketitle

\vspace{1cm}

{\textbf{ Keywords.}} Facility Location Problem, Mechanism Design, Game Theory, Social Choice Theory\\

\section{Introduction}
The Facility Location Problem (FLP) is a classic problem in combinatorial optimization whose objective is to determine the optimal placement of facilities to minimize transportation costs associated with servicing customers \cite{hochbaum1982heuristics}.
Ever since its introduction, the FLP has become a key subproblem in several social choice-related topics, such as disaster relief \cite{doi:10.1080/13675560701561789}, supply chain management \cite{MELO2009401}, healthcare \cite{ahmadi2017survey}, clustering \cite{hastie2009elements}, and public facilities accessibility \cite{barda1990multicriteria}.
Due to its practical significance, various formulations of the facility location problems have garnered considerable attention across diverse fields, including operations research, theoretical computer science, economics, and computational game theory \cite{celik2020comparative}.
In economics and computational game theory, the study of the FLP has a distinct perspective. 
Instead of finding the best algorithm to compute a solution, the interest is focused on studying how to define a routine that elicits the position of a facility from the information reported by strategic customers, or \textit{agents}.
This research area is also known as Algorithmic Mechanism Design.
Since every agent needs to access a facility, they will misrepresent their information if this influences the routine to place the facilities in a position that they prefer, adding a novel layer of complexity to the FLP.
Indeed, optimizing a shared objective solely relying on the reported preferences of agents can result in undesirable manipulation fueled by the self-interested behaviour of the agents.
Hence, a crucial characteristic that a mechanism must possess is \textit{truthfulness}, which ensures that no agent can benefit by misrepresenting their private information.
Committing to truthful routines, however, leads to suboptimal solutions, thus resulting in an efficiency loss.
The standard value to quantify the trade-off between achieving the optimal objective and implementing a truthful mechanism is the \textit{approximation ratio}, which is the worst-case ratio between the objective achieved by the mechanism and the optimal objective attainable, \cite{nisan1999algorithmic}.
Since a higher approximation ratio indicates a greater deviation from the optimal solution, one of the main challenges in Algorithmic Mechanism Design is to define truthful routines whose approximation ratio is small. 
In the classic FLP problem, the facility location is decided only based on the agents' reports.
In some cases, this assumption is limiting.
For example, hospitals, bus stops, or other public facilities are open to everyone, not only to the agents engaging with the eliciting procedure.
Therefore, optimizing solely on the reports of interested agents without considering external participants does not necessarily locate the facility at the best possible place.
Moreover, the facility might be able to accommodate a large number of agents, therefore gathering an equal number of reports might be impossible or prohibitive.
Consider for example the case in which a chain of coffee shops wants to open a new branch on a street.
Some of the people living on the street are already a customer of said chain of coffee shops, thus they will engage in the process of eliciting the position of the new branch by reporting their preferences on where the new shop should be placed.
However, the number of agents that have already been customers of the agency is usually much smaller than the total number of agents the shop can serve.
In this case, it is necessary to mind that part of the population living on the street will use the facility after its opening.
Denoted with $\mu$ the probability measure describing the population on the street, the mechanism designer has to define a routine that minimizes the cost of the agents reporting their position while considering that new customers drawn from  $\mu$ will be accessing the facility as well.
To describe this scenario, we propose the Facility Location Problem with Aleatory Agents (FLPAA) and study its Algorithmic and Mechanism Design aspects.

\paragraph*{Our Contribution and Technique Overview}

In this paper, we develop a novel framework for the Facility Location Problem (FLP) where the number of agents the facility can accommodate, namely $n$, is larger than the number of agents reporting their preferences, namely $n_r$.
We assume that the spare capacity of the facility, namely $n_u:=n-n_r$, is used by external agents modelled as independent and identically distributed  (i.i.d.) samples of a probability distribution $\mu\in\PP(\erre)$, where $\PP(\erre)$ denotes the set of probability measures over $\erre$.
To evaluate the quality of a facility location $y$, we introduce the \textit{ex-ante social cost}, which measures the expected cost of $n_u$ i.i.d. agents distributed according to $\mu$ added to the classic social cost of the agents reporting their position, \textit{i.e.}
\begin{equation}
    \label{eq:introESC}
    \ESC(\vec x, y;\mu):=\sum_{i=1}^{n_r}|x_i-y|+\sum_{j=1}^{n_u}\EE_{X_j\sim \mu}[|X_j-y|]=\sum_{i=1}^{n_r}|x_i-y|+n_u\EE_{X\sim \mu}[|X-y|].
\end{equation}
Given $n\in\enne$, $\vec x\in\erre^{n_r}$ where $n_r\le n$ and a probability distribution $\mu$, the Facility Location Problem with Aleatory Agents (FLPAA) consists in finding $y\in\erre$ that minimizes the objective in \eqref{eq:introESC}.
We focus on the scenario where $n$ is odd throughout the paper. The case in which $n$ is even can be addressed similarly.

First, we fully characterize the set of optimal solutions of the FLPAA for any given $\mu$ and $\vec x$. 
We recall that $f\in\erre$ is the $q$-th quantile of $\mu$ if it holds $F_\mu(f)=q$, where $F_\mu$ is the cumulative distribution function (c.d.f.) of $\mu$.
In what follows, we use $\vec f\in\erre^{k}$ to denote the vector containing the quantiles associated with the values $\vec q\in[0,1]^k$, so that $f_j$ is the $q_j$-th quantile of $\mu$.
Notice that, given $\mu$, the vector $\vec q$ uniquely identifies $\vec f$, and \textit{vice-versa}.
We show that to retrieve the optimal solution it is not necessary to have access to a full description of $\mu$, but rather to $n_u$ carefully chosen quantiles of the distribution.
Given $F_\mu$ the c.d.f. associated with $\mu$, let $F_\mu^{[-1]}$ denote its pseudo-inverse function; then the optimal solution of the FLPAA belongs to the set $\mathcal{X}\cup\mathcal{F}_{n_u}$, where
\begin{enumerate*}[label=(\roman*)]
    \item $\mathcal{X}=\{x_i\}_{i\in[n_r]}$ is the set containing all the positions reported by the agents; and
    \item \label{point2} $\mathcal{F}_{n_u}=\{f_k\}$ is the set containing $n_u$ quantiles of $\mu$. The quantiles we need to consider depend on whether $n$ is even or odd. If $n$ is odd, we set $f_k=F_\mu^{[-1]}(\frac{2k-1}{2n_u})$. If $n$ is even, we set $f_k=F_\mu^{[-1]}(\frac{k}{n_u})$ for $k=1,\dots,n_u$.
\end{enumerate*}
This characterization holds for every couple of $n_r$ and $n_u$. However, in specific cases, it is possible to reduce the number of quantiles needed.
For this reason, we introduce the set of relevant quantiles of $\mu$ as 
\begin{equation}
\label{intro:def:rrn}
    \RRn:=\Big\{j\;\text{s.t.}\;\exists \vec x\in\erre^{n_r},\mu\in\PP(\erre);\; F_\mu^{[-1]}\Big(\frac{2j-1}{2n_u}\Big)\in\argmin_{y\in\erre}\;\ESC(\vec x, y;\mu)\Big\}.
\end{equation}
We then study the mechanism design aspects of the FLPAA.
We consider the case in which the mechanism designer does not know the distribution $\mu$, but they can query $k\in\enne$ quantiles of $\mu$.
This allows us to outline the trade-off between gathering more insight into the probability distribution $\mu$ and defining a better-performing truthful mechanism.
{\color{black} Please notice that, while the first set of $n_r$ agents report their preferences to a mechanism, the remaining $n_u$ agents do not submit their preferences to the mechanism as they become clients after the facility has opened.
Therefore, the mechanisms need to be truthful only with respects to the reports of $n_r$ agents, while the Social Cost has to be computed with respect to all the $n_r+n_u$ agents.}
The challenge posed by our framework is to determine which quantiles to query in order to define a mechanism whose cost is as close as possible to the optimal one, regardless of the agents' report $x$ and the distribution $\mu$.
For this reason, we introduce the notion of strong approximation ratio (SAR) which evaluates the mechanism on the worst-case input $\vec x$ and the worst possible distribution $\mu$, that is
\[
    \SAR(M):=\sup_{\mu\in\PP(\erre)}\sup_{\vec x\in\erre^{n_r}}\frac{\ESC_{M}(\vec x; \mu)}{\ESC_{opt}(\vec x;\mu)}
\]
where $M$ is a truthful mechanism, $\ESC_{M}(\vec x;\mu):=\ESC(\vec x, M(\vec x);\mu)$ denotes the ex-ante social cost achieved by the mechanism, and $\ESC_{opt}(\vec x;\mu):=\min_{y\in\erre}\ESC(\vec x,y;\mu)$ is the optimal ex-ante social cost.
Notice that the SAR is a stricter metric than the classic approximation ratio, as it provides efficiency guarantee bounds that do not depend on $\mu$.
Throughout our study, we focus our attention on the set of Phantom Quantile Mechanisms (PQM).
Given a vector $\vec w\in[0,1]^{n_u}$, the PQM associated with $\vec w$, namely $\PQM_{\vec w}$, is defined as $\PQM_{\vec w}(\vec x):=med(\vec x,\vec f)=med(\vec x,(F_\mu^{[-1]}(w_1), F_{\mu}^{[-1]}(w_2),\dots, F_{\mu}^{[-1]}(w_{n_u})))$, where $med$ is a function that returns the median of a vector.
We study the SAR guarantees of the PQM depending on the number of quantiles that the mechanism designer can query.
We divide our investigation into four information settings determined by the value of $k$.
For each information setting, we provide an upper and lower bound on the SAR attainable by truthful mechanisms and characterize the PQM attaining the minimal SAR.
\begin{enumerate}
    \item First, we study the \textit{zero information case}, in which the mechanism designer cannot query any quantile of $\mu$.
    We show that placing the facility at the median of the agents' reports defines a truthful routine whose SAR depends only on the ratio between the number of agents reporting their position and the total number of agents $\lambda=\frac{n_r}{n}$. Furthermore, no other truthful mechanism can achieve a lower SAR. Hence, this median mechanism attains the lowest SAR possible.
    \item Second, we consider the \textit{$n_u$-quantiles information case}, where the mechanism designer can query for at least $n_u$ quantiles of $\mu$ and use such information to elicit the facility position.
    We show that in this case, the PQM induced by $\vec q=(\frac{1}{2n_u}\frac{3}{2n_u},\dots,\frac{2n_u-1}{2n_u})\in[0,1]^{n_u}$ defines a truthful mechanism that attains the optimal cost. That is, its SAR is equal to $1$.
    \item Third, we consider the \textit{median case}, in which the mechanism designer has access only to the median of $\mu$, which we denote with $m$.
    In this case, we consider the PQM induced by $\vec m=(0.5,0.5,\dots,0.5)$ and show that its \textbf{SAR is always lower than $3$}, regardless of $n_r$ and $n_u$.
    We then provide a lower bound on the SAR achievable by any truthful mechanism that elicits the facility position based only on the agents' reports and the median of the probability distribution $\mu$.
    \item Lastly, we consider the \textit{$k$-quantiles case}, in which the mechanism designer has access to $k$ quantiles where $k=2,\dots,n_u$. 
    Since the mechanism designer can query only $k$ quantiles and the PQM needs a $n_u$-dimensional vector, we define the lift operator $L:[0,1]^{k}\to[0,1]^{n_u}$ as
    \begin{equation}
        L:\vec q \to L(q):=(\underbrace{q_1,\dots,q_1}_\text{$t_1$-times},\underbrace{q_2,\dots,q_2}_\text{$t_2$-times},\dots,\underbrace{q_{k-1},\dots,q_{k-1}}_\text{$t_{k-1}$-times},\underbrace{q_k,\dots,q_k}_\text{$t_k$-times}),
    \end{equation}
    where $t_j$ is the number of elements of $\RRn$ whose closest entry of $\vec q$ is $q_j$.
    Notice that $L$ gives a natural way to identify a PQM to any $k$-dimensional vector $\vec q\in[0,1]^k$, that is $\vec q\to\PQM_{L(\vec q)}$.
    We then study the SAR of $\PQM_{L(\vec q)}$ and show that $\SAR(\PQM_{L(\vec q)})=1+\frac{4(1-\lambda)\Dnn(L(\vec q))}{1-2(1-\lambda)\Dnn(L(\vec q))}$, where $\Dnn(\vec w)=\max_{j\in\RRn}\min_{l\in[n_u]}\big|w_l-\frac{2j-1}{2n_u}\big|$.
    Owing to this characterization, we infer that \begin{enumerate*}[label=(\roman*)]
        \item for every $\vec q\in[0,1]^{n_u}$, $\PQM_{L(\vec q)}$ is the PQM with the lowest SAR induced by a $n_u$-dimensional vector whose the entries are belong to $\{q_j\}_{j\in [k]}$.
        More formally, we have that $\SAR(\PQM_{L(\vec q)})\le \SAR(\PQM_{\vec w})$, for every $\vec w\in[0,1]^{n_u}$ such that, given $i\in [n_u]$ there exists a $j\in [k]$ such that $w_i=q_j$.
        \item The best $k$ quantiles that the mechanism designer can query for are the ones minimizing $\Dnn(L(\vec q))$. Since every quantile identified by the set $\RRn$ is such that $q_{i+1}-q_i=\frac{1}{n_u}$, we characterize the vector $\vec q$  explicitly.
        Lastly, we propose a lower bound of the SAR of any mechanism that has access to the $k$ quantiles of $\mu$ induced by a vector $\vec q\in[0,1]^k$ whose entries are equi-distanced, that is $q_j=\frac{2j-1}{2k}$.
    \end{enumerate*}

\end{enumerate}

In Table \ref{tab:introres}, we summarize our results in terms of upper and lower bounds.
In conclusion, we extend our framework and findings to situations requiring the location of two facilities capable of accommodating $c$ agents.
In each scenario, we either define a truthful mechanism with a bounded SAR or demonstrate that no truthful mechanism can achieve a bounded SAR.
Due to space limits, part of the proofs and our results on the FLPAA with two facilities are deferred to the appendix.

\begin{table}[t!]
    \centering
    \begin{tabular}{c@{\hskip 0.7in} c@{\hskip 0.7in} c}
    \specialrule{2.5pt}{1pt}{1pt}
         & Lower Bound & Upper Bound \\
         \specialrule{1.5pt}{1pt}{1pt}
          $k=0$ & $\frac{2}{\lambda}-1$ & $\frac{2}{\lambda}-1$ \\ 
         \hline

         $k=1$  & $\begin{cases}
            \max\big\{\frac{4}{1+\lambda},2\big\}-1\quad\quad&\text{if}\quad \lambda\ge\frac{1}{3} \\
            1+\frac{2\lambda}{1-\lambda} \quad\quad &\text{otherwise} 
        \end{cases}$ & $\begin{cases}
           \max\big\{\frac{2}{\lambda + \frac{1}{n}},2\big\}-1\quad\quad &\text{if} \;\; \lambda\ge \frac{1}{2} \\
            1+\frac{2\lambda}{1-\lambda}\quad\quad\quad\quad &\text{otherwise} 
            \end{cases}$ \\ 
         \hline
         
         $1< k< n_u$  & $1+\dfrac{2(1-\lambda)\sigma}{(1+\lambda)n_u+(1-\lambda)\sigma}$ & $1+\dfrac{2(1-\lambda)(\sigma-1)}{n_u-(1-\lambda)(\sigma-1)}$ \\ 
         
         \hline

         $k\ge n_u$ & 1  & 1 \\ 
         \specialrule{2.5pt}{1pt}{1pt}
    \end{tabular}
    \caption{A table containing the upper and lower bounds on the Strong Approximation Ratio for all the four frameworks we consider. 
    The value $n$ represents the number of agents the facility can serve, $n_r$ the number of agents reporting their position, and $n_u:=n-n_r$. 
    The value $\lambda=\frac{n_r}{n}$ is the fraction of agents reporting their position. 
    For the sake of simplicity,  in the case $1<k<n_u$, we assume that $k$ divides $n_u$, that is $n_u=\sigma\, k$ where $\sigma\in \enne$.
    Moreover, the lower bound we reported is restricted to the case in which $k$ is even and the quantiles are induced by a vector $\vec q\in[0,1]^k$ whose entries are equi-distanced, that is $q_j=\frac{2j-1}{2k}$.}
    \label{tab:introres}
\end{table}

\paragraph*{Related Works.}

The Facility Location Problem (FLP) and its variations are significant issues in various practical domains, such as disaster relief \cite{doi:10.1080/13675560701561789}, supply chain management \cite{MELO2009401}, healthcare \cite{ahmadi2017survey}, clustering \cite{hastie2009elements}, and public facilities accessibility \cite{barda1990multicriteria}.
Procaccia and Tennenholtz initially delved into the Mechanism Design study of the $m$-FLP, laying the groundwork for this field in their pioneering work \cite{procaccia2013approximate}.
Following that, a range of mechanisms with constant approximation ratios for placing one or two facilities on trees \cite{DBLP:conf/sigecom/FeldmanW13,DBLP:conf/atal/FilimonovM21}, circles \cite{DBLP:conf/sigecom/LuSWZ10,DBLP:conf/wine/LuWZ09}, general graphs \cite{10.2307/40800845,DBLP:conf/sigecom/DokowFMN12}, and metric spaces \cite{DBLP:conf/sagt/Meir19,DBLP:conf/sigecom/TangYZ20} were introduced.
Despite the generality of the underlying space, it is important to stress that all these positive results are confined to the case in which we have at most two facilities to place and/or the number of agents is limited.
Moreover, different works tried to generalize the initial framework proposed in \cite{procaccia2013approximate}, by considering different agents' preferences \cite{church1978locating,MEI201846}, different costs \cite{feigenbaum2017approximately,cai2016facility}, and additional constraints \cite{zou2015facility,feldman2016voting}.
%
%

%
The $m$-Capacitated Facility Location Problem ($m$-CFLP) is a variant of the $m$-FLP in which each facility can accommodate a finite number of agents.
The Mechanism Design aspects of the $m$-CFLP have only recently begun to attract attention.
Indeed, the first game theoretical framework for the $m$-CFLP was introduced in \cite{aziz2020facility}.
This work defined and studied various truthful mechanisms, like the InnerPoint Mechanism, the Extended Endpoint Mechanism, and the Ranking Mechanisms.
A more theoretical study of the problem was later presented in \cite{ijcai2022p75}, demonstrating that no mechanism can place more than two capacitated facilities while adhering to truthfulness, anonymity, and Pareto optimality.
Notice that, by dropping Pareto optimality, it is possible to define truthful mechanisms capable of placing more than two facilities and has bounded approximation ratios \cite{auricchio2024facility}.
Lastly, papers that deal with different Mechanism Design aspects of the $m$-CFLP are \cite{auricchio2023extended}, where the $m$-CFLP is studied in a Bayesian setting, and \cite{aziz2020capacity,auricchiocapacitated}, where the authors investigate the case in which the facility to place cannot accommodate all the agents.
To some extent, our framework is similar to one used in Bayesian Mechanism design, in which the mechanism designer has to design routines to serve a population distributed according to $\mu$.
Bayesian Mechanism Design has been applied to investigate routing games \cite{gairing2005selfish}, facility location problems \cite{zampetakis2023bayesian,auricchio2023extended}, combinatorial mechanisms using $\epsilon$-greedy mechanisms \cite{lucier2010price}, and, notably, auction mechanism design \cite{chawla2009sequential,chawla2007algorithmic,hartline2009simple,yan2011mechanism,ensthaler2014bayesian}.
However, our framework distinguishes itself from Bayesian mechanism design for two reasons.
\begin{enumerate*}[label=(\roman*)]
    \item Unlike what happens in Bayesian mechanism design, in our case the mechanism designer does not know the probability distribution $\mu$, but only some qualitative information of $\mu$, \textit{e.g.} the quantiles.
    \item In Bayesian mechanism design, the performances of the mechanisms are measured in expectation, while in our case, the SAR is defined as a worst-case ratio.
\end{enumerate*}
Finally, a few works study and propose a model for the FLP in which it involves a degree of uncertainty.
In this case, the agents' preferences and locations are based on distributional assumptions.
In \cite{caragiannis2016truthful}, the authors explored single-facility locations in settings where agents' locations are independently and identically drawn from an unknown distribution. 
Another different approach is presented in \cite{menon2019mechanism}.
In this case, the authors addressed a scenario where each agent is associated with an interval on the line representing all possible locations, delving into robust mechanisms designed to perform well across all potential unknown true preferred locations within those intervals. 

\section{Setting Statement}
In what follows, we assume that agents and the facility are placed on a line.
Let $n$ denote the total amount of agents that the facility can serve, $n_r$ denote the number of agents reporting their position, and $n_u=n-n_r$ denote the spare capacity of the facility used by the agents belonging to the population $\mu$.
We name \textit{deterministic agents} the agents reporting their position, while the other agents are called \textit{aleatory agents}. 
Lastly, let us denote with $\PP(\erre)$ the set of the probability measures supported over $\erre$.
Given the position of the $n_r$ deterministic agents, our goal is to place a facility in a position that minimizes the combined costs of the $n_r$ deterministic agents and the expected costs of $n_u$ i.i.d. aleatory agents distributed according to $\mu\in\PP(\erre)$.
Given a position $y\in\erre$, a deterministic agent located at $x_i$ incurs a cost of $c_i(x_i,y)=|x_i-y|$ to access the facility, while an aleatory agent sampled from $X\sim\mu$ incurs in an ex-ante cost equal to $c(y,\mu)=\EE_{X\sim\mu}[|X-y|]$.

\begin{problem}
\label{problem:fac1}
Let $n=n_r+n_u$ be the capacity of a facility.
Given $\vec x\in\erre^{n_r}$, the Facility Location Problem with Aleatory Agents (FLPAA) associated with $\vec x$ and $\mu$ consists in finding the $y^*\in\erre$ that minimizes the ex-ante social cost function, namely $\ESC$, that is
\begin{equation}
    \label{eq:defproblemonefacility}
    y\to \mathcal{ESC}(\vec x, y ; \mu)= \sum_{i=1}^{n_r}c_{i}(x_i,y)+\sum_{j=1}^{n_u}c_j(y,\mu)=\sum_{i=1}^{n_r}|x_i-y|+n_u\EE_{X\sim\mu}[|X-y|].
\end{equation}
\end{problem}

\paragraph*{Basic Assumptions.}

Finally, we layout the two basic assumptions of our framework.
In what follows, we tacitly assume that the underlying distribution $\mu$ satisfies the two following properties:
\begin{enumerate*}[label=(\roman*)]
\item $\mu\in\PP(\erre)$ has finite first moment, i.e. $\int_\erre |x|d\mu < +\infty$. This condition is essential, as otherwise the expected ex-ante cost of the aleatory agents would be not finite.
\item The measure $\mu$ is absolutely continuous. We denote with $\rho_\mu$ its probability density function. This assumption is not essential, but it allows us to simplify the discussion. 
Indeed, up to an arbitrary small error, every probability measure can be approximated by an absolutely continuous probability measure, \cite{villani2009optimal}.
Throughout the paper, we make extensive use of this property to approximate discrete probability measures.
For the sake of simplicity, we say that a sequence of probability measures $\mu_\ell$ \textit{concentrates the probability at} one or more points $a\in\mathcal{A}\subset \erre$ as $\ell\to\infty$ if $\mu_\ell$ converges to a discrete probability measure supported over $\mathcal{A}$.
A standard example, is given by the sequence $\mu=2\ell\mathcal{U}_{[x-\frac{1}{\ell},x+\frac{1}{\ell}]}$, which converges to a probability measure that assigns probability $1$ to $x\in \erre$, that is the sequence $\mu_\ell$ concentrates all the probability at $x$.
In Figure \ref{fig:battleofvectors_firsttest}, we give a graphical description of what concentrating the probability means in practice.
\end{enumerate*}
Lastly, notice that, according to this set of assumptions, the cumulative distribution function (c.d.f.) of $\mu$, namely $F_\mu$, is locally bijective.
%

\begin{figure}[t!]
  \centering
  \begin{subfigure}{0.4\linewidth}
  \centering
    \includegraphics[width=\linewidth]{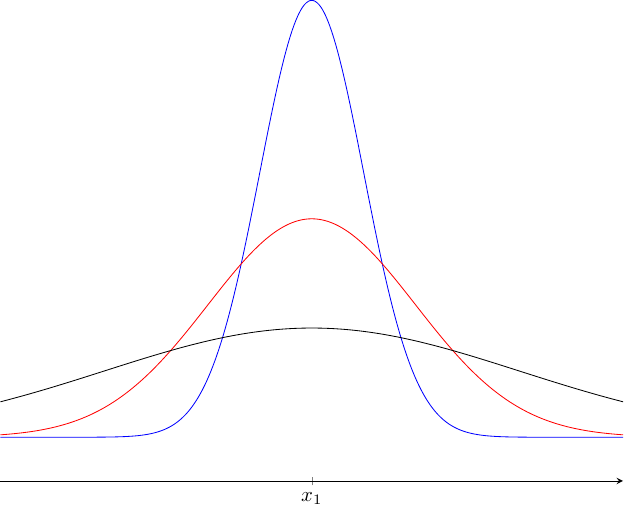}
    
  \end{subfigure}
  \hspace{15mm}
  \begin{subfigure}{0.4\linewidth}
    \centering
    \includegraphics[width=\linewidth]{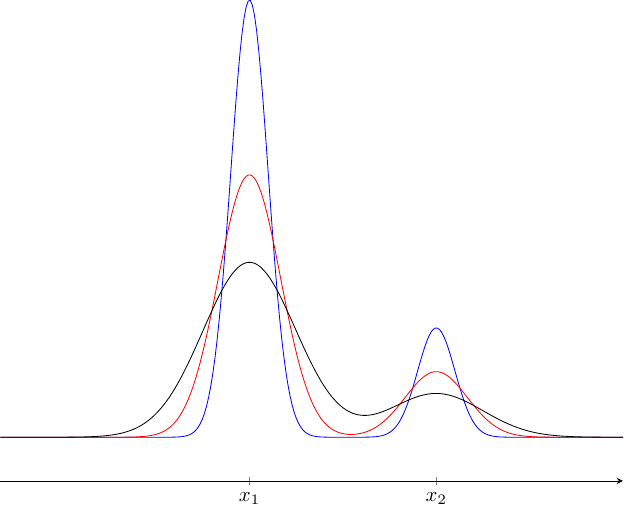}
  \end{subfigure}

  \caption{Two sequences of probability measures that concentrate the probability at one and two points, respectively.
The sequence of measures on the left concentrates all the probability at $x_1$, and in the limit, it converges to a probability measure whose random variable is equal to $x_1$ with a probability of $1$.
On the right, the sequence concentrates the probability at $x_1$ and $x_2$. In the limit, it converges to a probability measure whose associated random variable is equal to $x_1$ with a probability of $0.8$ or equal to $x_2$ otherwise.
\label{fig:battleofvectors_firsttest}}
\end{figure}

\subsection{Computing the Optimal Solution of the FLPAA}

In this section, we study the optimal solution to Problem \ref{problem:fac1}.
Given a vector $\vec x=(x_1,\dots,x_{n_r})$, the \textit{empirical cumulative distribution function} (ecdf) associated with $\vec x$ is defined as $F_{\vec x}(t)=\frac{1}{n_{r}}\sum_{i=1}^{n_r}\II_{\{x_i\le t\}}(t)$, where $\II_{\{x_i\le t\}}(t)$ is the indicator function of the set $\{x_i\le t\}$, which is equal to $1$ if $x_i\le t$ and $0$ otherwise.
Given $\frac{n_r}{n}=\lambda\in[0,1]$, we set
\begin{equation}
\label{eq:flam}
    F_{\lambda,\mu,\vec x}(t)=\lambda F_{\vec x}(t) + (1-\lambda)F_{\mu}(t).
\end{equation}
Since $F_{\lambda,\mu,\vec x}$ is the convex combination of two c.d.f., we have that $F_{\lambda,\mu,\vec x}$ is a c.d.f. as well.
As a first result, we characterize the set containing the optimal solutions to the FLPAA.

\begin{theorem}
\label{thm:problem_opt}
    Given $\vec x\in\erre^{n_r}$ and $\mu\in\PP(\erre)$, every median of $\Flam$ is a solution, thus $y^*=\inf\big\{y\in\erre,\;\;\text{s.t.}\;\; F_{\lambda,\mu,\vec x}(t)\ge \frac{1}{2}\big\}$ is a solution.
    Moreover, given $\vec x\in\erre^{n_r}$ and $\mu\in\PP(\erre)$, $y\in\erre$ is a solution to Problem \ref{problem:fac1} if and only if $y\in[a,b]$ where $a=\sup\{t\in\erre,\,\text{s.t.}\,\,\partial_y\mathcal{ESC}(t)< 0\}$ and $b=\inf\{t\in\erre,\,\text{s.t.}\,\,\partial_y\mathcal{ESC}(t) > 0\}$.
    Finally, if none of the elements in $\mathcal{X}:=\{x_i\}_{i\in[n_r]}$ is an optimal solution, then the optimal solution $y^*$ is unique, $y^*$ belongs to the support of $\mu$, and $\partial_y\ESC(y^*)=0$.
\end{theorem}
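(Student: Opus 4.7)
The plan is to exploit the convexity of $\ESC(\vec x,\cdot;\mu)$: it is a finite sum of convex functions in $y$ (each $|x_i-\cdot|$ is convex, and $y\mapsto \EE_{X\sim \mu}[|X-y|]$ is convex as an expectation of convex functions), so the minimizer set is a closed interval $[a,b]\subseteq\erre$ with $a\le b$. The key computation is the derivative formula. Because $\mu$ is absolutely continuous, $\partial_y \EE_{X\sim \mu}[|X-y|]=2F_\mu(y)-1$ for every $y$, and for any $y\notin \mathcal{X}$ we have $\partial_y \sum_{i=1}^{n_r}|x_i-y|=n_r(2F_{\vec x}(y)-1)$. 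Summing these contributions I would obtain
\[
\partial_y \ESC(\vec x,y;\mu)=n\bigl(2F_{\lambda,\mu,\vec x}(y)-1\bigr) \quad\text{for all } y\notin \mathcal{X},
\]
while at each $x_i\in \mathcal{X}$ the subdifferential is the closed interval bounded by the corresponding one-sided values of the above expression.

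First I would deduce the first two claims from standard convex analysis. The first-order optimality condition $0\in \partial\ESC(y^*)$ translates, via the monotonicity of $F_{\lambda,\mu,\vec x}$, into the median condition $F_{\lambda,\mu,\vec x}(y^{*-})\le \tfrac12\le F_{\lambda,\mu,\vec x}(y^*)$, i.e. $y^*$ is a minimizer if and only if it is a median of $F_{\lambda,\mu,\vec x}$. In particular $y^*=\inf\{y:F_{\lambda,\mu,\vec x}(y)\ge \tfrac12\}$, the smallest such median, is always a valid solution. Since $\partial_y\ESC$ is non-decreasing (by convexity), it is strictly negative on $(-\infty,a)$ and strictly positive on $(b,\infty)$, so the minimizer set is exactly the interval $[a,b]$ defined in the statement.

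The hard part will be the last assertion. Assume $\mathcal{X}\cap[a,b]=\emptyset$; then $F_{\vec x}$ is constant on a neighbourhood of $[a,b]$, so the derivative formula applies throughout $[a,b]$ and is continuous there. If $a<b$, convexity forces $\partial_y\ESC\equiv 0$ on $[a,b]$, hence $F_{\lambda,\mu,\vec x}\equiv\tfrac12$ on $[a,b]$; subtracting the constant contribution of $F_{\vec x}$ then forces $F_\mu$ to be constant on $[a,b]$, contradicting the basic assumption that $F_\mu$ is locally bijective. Hence $a=b$, giving uniqueness, and $\partial_y\ESC(y^*)=0$ follows from the continuity of the derivative at $y^*\notin\mathcal{X}$. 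For the support membership, if $y^*$ lay outside $\mathrm{supp}(\mu)$, then $F_\mu$ would be constant on an open neighbourhood of $y^*$, again violating local bijectivity. The main obstacle here is to handle the locally bijective assumption on $F_\mu$ cleanly so that the same argument rules out both flat segments of $F_\mu$ inside the optimal interval (yielding uniqueness) and gaps of the support containing $y^*$ (yielding membership), since absolute continuity alone does not preclude such flat plateaus.
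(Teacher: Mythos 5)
Your proposal is correct and follows essentially the same route as the paper's proof: both compute the piecewise derivative $\partial_y\ESC(y)=n\bigl(2\Flam(y)-1\bigr)$ off $\mathcal{X}$, use its monotonicity to identify the minimizer set with the closed interval $[a,b]$ of medians of $\Flam$, and obtain uniqueness and support membership from the bijectivity of $F_\mu$. The obstacle you flag at the end --- that absolute continuity alone does not rule out flat plateaus of $F_\mu$ --- is genuine, but the paper resolves it exactly as you propose, by invoking the standing assumption (stated at the end of the Basic Assumptions) that $F_\mu$ is locally bijective.
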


From Theorem \ref{thm:problem_opt} we infer that the FLPAA has at least a solution.
Moreover, we show that there exists a discrete set that contains at least one solution to the FLPAA and characterize it.

\begin{theorem}
\label{thm:setoffeasibleopt}
    Given $n,n_r,n_u$ and $\mu$, let us denote with $\mathcal{F}_{n_u}=\{f_j\}_{j\in[n_u]}$ where $f_j=F^{[-1]}_\mu(\frac{2j-1}{2n_u})$ if $n$ is odd and $f_j=F^{[-1]}_\mu(\frac{j}{n_u})$ if $n$ is even.
    Then, for any given $\vec x\in\erre^{n_r}$, at least one element of the set $\mathcal{X}\cup\mathcal{F}_{n_u}$ is an optimal solution to Problem \ref{problem:fac1}, where $\mathcal{X}=\{x_i\}_{i\in[n_r]}$ is the set of containing all the agents report.
    In particular, Problem \ref{problem:fac1} is solvable in polynomial time.
\end{theorem}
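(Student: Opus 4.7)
The plan is to combine Theorem~\ref{thm:problem_opt} with a direct first-order optimality analysis. By Theorem~\ref{thm:problem_opt}, at least one optimal solution $y^*$ to Problem~\ref{problem:fac1} exists, and either $y^* \in \mathcal{X}$ --- in which case trivially $y^* \in \mathcal{X} \cup \mathcal{F}_{n_u}$ --- or else $y^*$ lies in the support of $\mu$ and satisfies $\partial_y \ESC(y^*) = 0$. Only the latter case requires work, and it is what I analyze next.

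Since $y^* \notin \mathcal{X}$, the ex-ante social cost is differentiable at $y^*$, and using the absolute continuity of $\mu$ I would compute
$$\partial_y \ESC(y^*) = (2k - n_r) + n_u\bigl(2F_\mu(y^*) - 1\bigr),$$
where $k := |\{i \in [n_r] : x_i < y^*\}| \in \{0, 1, \dots, n_r\}$. Setting this expression to zero and solving gives the identity $F_\mu(y^*) = \frac{n - 2k}{2n_u}$. The crux of the proof is then a parity argument on $n$: when $n$ is odd, $n - 2k$ is odd for every admissible $k$, so $n - 2k = 2j - 1$ for some integer $j$, and the constraint $F_\mu(y^*) \in [0, 1]$ forces $j \in \{1, \dots, n_u\}$; hence $y^* = F_\mu^{[-1]}(\frac{2j-1}{2n_u}) = f_j \in \mathcal{F}_{n_u}$. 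The even case is handled analogously, with $n - 2k = 2j$ and a boundary check excluding $j = 0$, matching the alternative definition $f_j = F_\mu^{[-1]}(\frac{j}{n_u})$.

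The polynomial-time claim follows immediately because $|\mathcal{X} \cup \mathcal{F}_{n_u}| \le n_r + n_u = n$, so it suffices to evaluate $\ESC$ at each of the $n$ candidates (assuming oracle access to $F_\mu^{[-1]}$ and to $\EE_{X \sim \mu}[|X - y|]$) and return the minimizer. The main technical obstacle I anticipate is the boundary behaviour when $F_\mu(y^*) \in \{0, 1\}$, which corresponds to $y^*$ lying at an endpoint of the support of $\mu$; verifying that the pseudo-inverse $F_\mu^{[-1]}$ returns a meaningful value there, and that the first-order condition is actually attained at such a point, requires a short limiting argument that relies on the absolute-continuity assumption and the local bijectivity of $F_\mu$ noted in the basic assumptions.
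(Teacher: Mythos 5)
Your proof is correct and follows essentially the same route as the paper's: invoke Theorem~\ref{thm:problem_opt} to reduce to the case $y^*\notin\mathcal{X}$, use the first-order condition $\partial_y\ESC(y^*)=0$ to derive $F_\mu(y^*)=\frac{n-2k}{2n_u}$ with $k=\#\{x_i\le y^*\}$, and conclude by the parity/reindexing argument (the paper's change of variable $s=\frac{n+1}{2}-k$ is exactly your $j$ with $n-2k=2j-1$). The only difference is cosmetic: you spell out the range check on $j$ and the boundary discussion slightly more explicitly than the paper does.
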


Notice that the $q$ values defining $\mathcal{F}_{n_u}$ do not depend on $\mu$.
Indeed, every $f_j\in\mathcal{F}_{n_u}$ is the quantile associated to the value $\frac{2j-1}{2n_u}$ regardless of the probability measure $\mu$. 
By refining the proof of Theorem \ref{thm:setoffeasibleopt}, it is possible to restrict the set containing the possible optimal solutions whenever $n_r<n_u$.

\begin{corollary}
\label{cor:essentialpoints}
    If $n_r<n_u$, we have that the set $\mathcal{X}\cup\tilde{\mathcal{F}}_{n_u}$, where $\tilde{\mathcal{F}}_{n_u}=\{f_j\}_{j\in\{\alpha,\dots,\beta\}}$, $\alpha=\frac{n+1}{2}-n_r$, and $\beta=n_u-\alpha$ contains at least an optimal solution.
\end{corollary}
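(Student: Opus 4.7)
The plan is to refine the proof of Theorem~\ref{thm:setoffeasibleopt} under the extra hypothesis $n_r<n_u$. Theorem~\ref{thm:setoffeasibleopt} already places at least one optimum in $\mathcal{X}\cup\mathcal{F}_{n_u}$, so the task is only to rule out those $f_j\in\mathcal{F}_{n_u}$ whose index falls outside the block $\{\alpha,\dots,\beta\}$ under the working assumption that no $x_i$ is optimal. In that case, Theorem~\ref{thm:problem_opt} guarantees that the unique optimum $y^{*}$ lies in the support of $\mu$ and is a smooth stationary point of $\ESC$, i.e.\ $\partial_y\ESC(y^{*})=0$.

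The first step is to compute this stationarity condition in closed form. Letting $k:=\#\{i:x_i<y^{*}\}\in\{0,\dots,n_r\}$ and using the standard identity $\partial_y\,\EE_{X\sim\mu}[|X-y|]=2F_\mu(y)-1$, termwise differentiation yields
\[
\partial_y\ESC(y^{*})\;=\;(2k-n_r)\,+\,n_u\bigl(2F_\mu(y^{*})-1\bigr)\;=\;0,
\]
equivalently $F_\mu(y^{*})=\tfrac{n-2k}{2n_u}$. Since by hypothesis $y^{*}=f_j=F_\mu^{[-1]}\!\bigl(\tfrac{2j-1}{2n_u}\bigr)$ for some $j\in[n_u]$, matching quantile levels forces the integer identity $j=\tfrac{n+1}{2}-k$, which is well-defined because $n$ is odd. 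Letting $k$ run over $\{0,\dots,n_r\}$, this describes a contiguous block of $n_r+1$ candidate indices whose smallest element is precisely $\alpha=\tfrac{n+1}{2}-n_r\ge 1$, the lower bound following from $n_r<n_u\Rightarrow n_r\le\tfrac{n-1}{2}$.

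The second step is to argue that at least one extremal value of $k$ can be absorbed into $\mathcal{X}$, so as to trim the index block down to length $n_r$. Concretely, when $k$ is at one of its extremes, $y^{*}$ is forced outside the convex hull $[\min_i x_i,\max_i x_i]$, and a one-sided derivative comparison at the nearest $x_i$, exploiting the convexity of $\ESC$ together with the inequalities enforced by $n_r<n_u$, shows that this endpoint agent also achieves the minimum. This contradicts the working assumption that no $x_i$ is optimal, and therefore removes the offending index from the candidate list. What remains is exactly $\{\alpha,\dots,\beta\}$ with $\beta=n_u-\alpha$, as claimed.

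The main obstacle is this second step: the derivative computation and the arithmetic linking $j$ and $k$ are routine, whereas the boundary comparison is delicate because it must be uniform over all $\vec x\in\erre^{n_r}$ and all $\mu\in\PP(\erre)$. How much mass $\mu$ places outside the convex hull of $\mathcal{X}$ governs which extreme of $k$ can actually be realised, and the hypothesis $n_r<n_u$ has to be used essentially here to guarantee that the corresponding endpoint of the index block can always be pruned while keeping the conclusion of Theorem~\ref{thm:setoffeasibleopt} intact.
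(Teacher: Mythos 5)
Your first step coincides exactly with the paper's proof: starting from the stationarity identity $F_\mu(y^*)=\frac{n-2k}{2n_u}$ with $k=\#\{x_i\le y^*\}$ obtained in the proof of Theorem~\ref{thm:setoffeasibleopt}, the only new ingredient is the cardinality bound $k\le n_r$, which forces $F_\mu(y^*)\ge\frac{n-2n_r}{2n_u}$ and hence $j=\frac{n+1}{2}-k\ge\alpha$. Up to that point your argument is sound and is, in fact, all the paper itself does.

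The gap is in your second step. You correctly observe that $k\in\{0,\dots,n_r\}$ yields the block $\{\alpha,\dots,\alpha+n_r\}=\{\frac{n+1}{2}-n_r,\dots,\frac{n+1}{2}\}$ of $n_r+1$ indices, one more than $\{\alpha,\dots,\beta\}$ with $\beta=n_u-\alpha=\frac{n-1}{2}$, and you propose to prune the extremal index by showing that when $k=0$ the nearest report $\min_i x_i$ must also be a minimizer. That claim is false. Take $n=3$, $n_r=1$, $n_u=2$, $\mu=\mathcal{U}([0,1])$ and $x_1=10$: on $(0,1)$ the map $y\mapsto\ESC(\vec x,y;\mu)$ has derivative $4y-3$, and it is strictly increasing on $(1,10)$, so the unique optimum is $f_2=F_\mu^{[-1]}(\frac{3}{4})$; the index $j=\frac{n+1}{2}=2>\beta=1$ is realized and neither $x_1$ nor $f_1$ is optimal. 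Hence the index $j=\frac{n+1}{2}$ cannot be discarded, no one-sided derivative comparison will rescue the pruning, and the correct output of the computation is the block $\{\frac{n+1}{2}-n_r,\dots,\frac{n+1}{2}\}$ (equivalently, the statement should read $\beta=n_u-\alpha+1$). For what it is worth, the paper's own proof never addresses the upper index at all --- it only records the restriction $F_\mu(y)\ge\frac{n-2n_r}{2n_u}$ --- so your instinct that something extra is needed there was right; it is the direction of the fix (removing an index rather than enlarging $\beta$ by one) that cannot work.
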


Lastly, we define the set of relevant quantiles $\RRn$ as the set of indexes $j\in[n_u]$ for which there exists $\vec x\in\erre^{n_r}$ and $\mu\in\PP(\erre)$ such that the optimal solution to Problem \ref{problem:fac1} is $F_\mu^{[-1]}(\frac{2j-1}{2n_u})$.
More formally, when $n$ is odd, we have
\begin{equation}
\label{def:rrn}
    \RRn:=\bigg\{j\in[n_u]\;\;\text{s.t.}\;\;\exists \vec x\in\erre^{n_r},\;\; F_\mu^{[-1]}\Big(\frac{2j-1}{2n_u}\Big)\in\argmin_{t\in\erre}\Big\{ F_{\lambda,\mu,\vec x}(t)\ge \frac{1}{2}\Big\}\bigg\}.
\end{equation}
Similarly, we define $\RRn$ for $n$ even, in both cases, that the cardinality of $\RRn$ is finite.

\subsection{The Mechanism Design Aspects for the FLPAA} 
A mechanism is a function $M:\erre^{n_r}\to\erre$ that takes in input the reports of the $n_r$ deterministic agents $\vec x\in\erre^{n_r}$ and outputs a facility position $y=M(\vec x)\in\erre$.
A mechanism $M$ is said to be \textit{truthful} (or \textit{strategy-proof}) if the cost of a deterministic agent is minimized when it reports its true position. 
That is, $c_i(x_i,M(\vec x))\le c_i(x_i,M(\vec x_{-i},x_i'))$ for any misreport $x_i'\in \erre$, where $\vec x_{-i}$ is the vector $\vec x$ without its $i$-th component.
Although deploying a truthful mechanism prevents agents from getting a benefit by misreporting their positions, this leads to a loss of efficiency.
{\color{black} Again, we stress that, while the mechanism has to be truthful with respect to the set of $n_r$ deterministic agents as they are the one reporting the information before the facility gets located.
The remaining $n_u$ agents do not submit their preferences to the mechanism as thus cannot manipulate the outcome of the eliciting process.}
To evaluate this efficiency loss, Nisan and Ronen introduced the notion of approximation ratio \cite{nisan1999algorithmic}.
Given a truthful mechanism $M$ and a probability distribution $\mu$, the approximation ratio of $M$ with respect to the ex-ante social cost is 
\begin{equation}
    ar_\mu(M):=\sup_{\vec x \in \mathbb{R}^{n}}\frac{\ESC_{M}(\vec x;\mu)}{\ESC_{opt}(\vec x;\mu)},
\end{equation}
where $\ESC_M(\vec x;\mu)$ is the ex-ante social cost of placing the facility at $M(\vec x)$, while $\ESC_{opt}(\vec x;\mu)$ is the optimal ex-ante social cost achievable when the agents' report is $\vec x$.
Notice that the approximation ratio of $M$ does depend on the probability measure $\mu$.
We are interested in defining routines that do not depend on specific $\mu$, but that work well across every possible probability distribution.
For this reason, we consider the notion of strong approximation ratio (SAR), which measures the ratio of the cost attained by a mechanism and the optimal cost over every possible input $\vec x$ and over every possible distribution $\mu$, that is
\begin{equation}
    \label{eq:SAR}
    \SAR(M):= \sup_{\mu\in\PP(\erre)}ar_\mu(M)=\sup_{\mu\in\PP(\erre)}\sup_{\vec x \in \mathbb{R}^{n}}\frac{\ESC_{M}(\vec x;\mu)}{\ESC_{opt}(\vec x;\mu)}.
\end{equation}
Notice that the SAR is a stricter metric than the classic approximation ratio since, for every truthful mechanism $M$, $ar_\mu(M)\le \SAR(M)$.
In our study, we consider the scenario in which the mechanism designer does not know the probability distribution $\mu$, but it can access a $k$-tuple of quantiles of $\mu$, which are decided before the mechanism designer receives the agents' reports.
The challenge in this framework is to define the best truthful mechanism $M$ in terms of SAR based on the information on $\mu$ that the mechanism designer can gather.
To this extent, we consider the class of \textit{Phantom Quantile Mechanism} (PQM).
\begin{definition}
    Given $n_u\in\enne$, let $\vec q\in[0,1]^{n_u}$ be such that $q_i\le q_{i+1}$.
    Then, the Phantom Quantile Mechanism (PQM) associated with $\vec q$ is defined as $\PQM_{\vec q}(\vec x)=med(\vec x, \vec f)$, where $f_k=F_{\mu}^{[-1]}(q_k)$ for every $k\in[n_u]$.
\end{definition}
It is easy to see that the PQM is always truthful and anonymous as, for any given $\mu$, they are instances of the Phantom Peak Mechanisms, introduced in \cite{moulin1980strategy}.

\begin{theorem}
    \label{them:truthfulPQM}
    $\PQM_{\vec q}$ is truthful and anonymous for every $\vec q \in [0,1]^{n_u}$ and every $\mu\in\PP(\erre)$.
\end{theorem}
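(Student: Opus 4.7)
The plan is to observe that, once $\mu$ is fixed, $\PQM_{\vec q}$ reduces to a generalized median mechanism with a fixed set of \emph{phantom} voters, and then invoke Moulin's classical characterization of strategy-proof voting rules on the line \cite{moulin1980strategy}.

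First I would note that the input to the mechanism is the report vector $\vec x \in \erre^{n_r}$, and that the probability measure $\mu$ is fixed before the reports are collected. Consequently the values $f_k = F_\mu^{[-1]}(q_k)$, for $k \in [n_u]$, are fixed real numbers that do \emph{not} depend on $\vec x$. Thus $\PQM_{\vec q}(\vec x) = \mathrm{med}(\vec x, \vec f)$ is exactly the Phantom Peak Mechanism with $n_r$ strategic peaks $x_1,\ldots,x_{n_r}$ and $n_u$ stationary phantom peaks at $f_1,\ldots,f_{n_u}$.

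Next I would invoke the fact that each deterministic agent's cost function $y \mapsto |x_i - y|$ is single-peaked on $\erre$ with peak at $x_i$. Moulin's theorem then guarantees that any generalized median with an odd total number of peaks (the $n_r$ reports together with the $n_u$ phantoms; recall $n = n_r + n_u$ is assumed odd) is strategy-proof on the domain of single-peaked preferences. Hence no deterministic agent $i$ can strictly decrease $|x_i - \PQM_{\vec q}(\vec x)|$ by misreporting, which is exactly the required truthfulness condition $c_i(x_i, \PQM_{\vec q}(\vec x)) \le c_i(x_i, \PQM_{\vec q}(\vec x_{-i}, x_i'))$.

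For anonymity I would simply remark that the median function $\mathrm{med}(\cdot)$ depends only on the multiset of its arguments, so $\PQM_{\vec q}(\vec x) = \mathrm{med}(\vec x, \vec f)$ is invariant under any permutation of the coordinates of $\vec x$. No obstacle is anticipated: once one recognizes that $\vec f$ is data-independent given $\mu$, both properties are immediate consequences of standard facts about the median on the line. If a self-contained argument were preferred over citing Moulin, the truthfulness step can be verified directly by a case analysis on whether $\mathrm{med}(\vec x, \vec f)$ lies to the left of, equal to, or to the right of $x_i$, and checking in each case that replacing $x_i$ by any $x_i' \in \erre$ moves the median only monotonically in the direction of $x_i'$, which never takes it closer to $x_i$ than the truthful output.
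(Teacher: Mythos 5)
Your proposal is correct and takes essentially the same route as the paper: the paper's proof likewise observes that for any fixed $\mu$ the quantiles $\vec f$ are data-independent constants, so $\PQM_{\vec q}$ coincides with a Phantom Peak Mechanism, and then cites Moulin's characterization for both truthfulness and anonymity. Your additional remarks (the explicit statement that $\vec f$ does not depend on $\vec x$, and the sketch of a self-contained monotonicity argument) only elaborate on the same idea.
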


Notice that the PQM defines a class of mechanisms, not a single mechanism, which brings us to the two main issues we address in the following: 
\begin{enumerate*}[label=(\roman*)]
    \item to determine the minimum number of quantiles, namely $k_{min}$, needed to define an optimal mechanism and
    \item to determine the best quantiles to query for when the number of quantiles the mechanism designer can query is less than $k_{min}$. 
\end{enumerate*}  

\section{Tailor-making a Mechanism to different Levels of Information}

We study how to define truthful mechanisms that, given in input a vector containing the agents' reports, elicit the facility position based on the input $\vec x$ and the available information on the distribution $\mu$.
We assume that the mechanism designer can query $k$ quantiles from $\mu$, which they can use to improve the performances of the mechanism.
Therefore, in this framework, the parameter $k$ quantifies the mechanism designer's degree of knowledge of the probability distribution $\mu$.
We assume that the $k$ quantiles are queried before the mechanism designer receives the agents' reports.
This is a key assumption, as it means that the $q$ values associated with the quantiles do not depend on the vector $\vec x$.
We study this problem for every $k$, however, for the sake of the exposition, we divide our study as follows:
\begin{enumerate*}[label=(\roman*)]
    \item in the \textit{zero information case}, in which $k$ is equal to $0$;
    \item the \textit{$n_u$-quantile information case}, in which $k\ge n_u$;
    \item the \textit{median information case}, in which the mechanism designer has access to the median of $\mu$, \textit{i.e.} $k=1$; and
    \item the \textit{$k$-quantile information case}, in which the mechanism designer has access to any $1<k<n_u$ quantiles of $\mu$.
\end{enumerate*}
For the sake of simplicity, from now on, we assume that the number of agents $n$ that the facility can serve is odd.
All our results are extendable to the case in which $n$ is even in a trivial way.

\subsection{Zero Information case}

In this case, the mechanism designer does not have access to any information on the probability distribution $\mu$, therefore the mechanism elicits the position of the facilities based solely upon the vector reporting the agents' positions, that is $\vec x\in\erre^{n_r}$.
In this case, the median mechanism that places the facility at the median of $\vec x$ defines a truthful mechanism that achieves bounded SAR.

\begin{theorem}
\label{thm:SAR0inf}
    The mechanism $M:\vec x \to med(\vec x)$ is truthful.
    Moreover, we have that
    \begin{equation}
        \SAR(M)=\begin{cases}
            \frac{2n_u+n_r-1}{n_r+1}=\frac{2-\lambda-\frac{1}{n}}{\lambda+\frac{1}{n}}\quad\quad &\text{if}\quad n_r \; \text{is odd},\\
            \frac{2n_u+n_r}{n_r}=\frac{2-\lambda}{\lambda}\quad\quad &\text{if}\quad n_r \; \text{is even.}
        \end{cases}
    \end{equation}
\end{theorem}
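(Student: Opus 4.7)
The plan is to prove truthfulness first, then split the $\SAR$ claim into a matching upper and lower bound that can be handled in parallel for the two parities of $n_r$. Truthfulness of $M(\vec x)=med(\vec x)$ is a textbook fact: $M$ is a generalized median-voter scheme in the sense of Moulin (1980), equivalently the Phantom Peak Mechanism in which all phantoms are sent to $\pm\infty$, so Theorem \ref{them:truthfulPQM} applies in the degenerate limit and no deterministic agent can strictly benefit from misreporting.

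For the upper bound I would start from the identity $\EE_{X\sim\nu}[|X-y|]=\int_{-\infty}^y F_\nu(t)\,dt+\int_y^{+\infty}(1-F_\nu(t))\,dt$ applied to both the empirical distribution of $\vec x$ and the measure $\mu$, and combine them into
\[
\ESC(\vec x,y;\mu)=n\int_{-\infty}^{y}F_{\lambda,\mu,\vec x}(t)\,dt+n\int_{y}^{+\infty}\bigl(1-F_{\lambda,\mu,\vec x}(t)\bigr)\,dt.
\]
Let $y_M:=med(\vec x)$ and let $y^*$ be an optimum, which by Theorem \ref{thm:problem_opt} is a median of $\Flam$; assume WLOG $y^*\ge y_M$. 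Subtracting the two integral expressions gives the telescoping identity
\[
\ESC(\vec x,y_M;\mu)-\ESC(\vec x,y^*;\mu)=n\int_{y_M}^{y^*}\bigl(1-2F_{\lambda,\mu,\vec x}(t)\bigr)\,dt.
\]
Because $F_{\vec x}(t)\ge \tfrac{n_r+1}{2n_r}$ on $(y_M,y^*]$ when $n_r$ is odd, and $F_\mu\ge 0$ everywhere, one obtains $1-2F_{\lambda,\mu,\vec x}(t)\le 1-\lambda\tfrac{n_r+1}{n_r}$; multiplying by $n(y^*-y_M)$ and using $n\lambda=n_r$ upper-bounds the numerator above by $(n_u-1)(y^*-y_M)$. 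Conversely, the $(n_r+1)/2$ deterministic agents at positions $\le y_M$ each contribute at least $y^*-y_M$ to $\ESC(\vec x,y^*;\mu)$, so $\ESC_{opt}\ge\tfrac{n_r+1}{2}(y^*-y_M)$; dividing yields $\SAR(M)\le 1+\tfrac{2(n_u-1)}{n_r+1}=\tfrac{2n_u+n_r-1}{n_r+1}$. When $n_r$ is even the only changes are $F_{\vec x}(t)\ge\tfrac{1}{2}$ on $(y_M,y^*]$ and the lower bound $\ESC_{opt}\ge\tfrac{n_r}{2}(y^*-y_M)$, producing the second formula $\tfrac{2n_u+n_r}{n_r}$.

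For the matching lower bound I would exhibit concrete instances realising these ratios, using the paper's convention of approximating a point mass by an absolutely continuous sequence. For $n_r$ odd, place $\tfrac{n_r+1}{2}$ deterministic agents at $0$ and $\tfrac{n_r-1}{2}$ at some $P>0$, and let $\mu$ concentrate its mass at $P$. Then $y_M=0$ and $\ESC_M=\bigl(\tfrac{n_r-1}{2}+n_u\bigr)P$; checking the median condition of Theorem \ref{thm:problem_opt} gives $y^*=P$ whenever $n_u\ge 2$, with $\ESC_{opt}=\tfrac{n_r+1}{2}P$, so the ratio equals $\tfrac{2n_u+n_r-1}{n_r+1}$. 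For $n_r$ even, the analogous instance uses $\tfrac{n_r}{2}$ agents at $0$ and $\tfrac{n_r}{2}$ at $P$ with $\mu$ concentrated at $P$, giving $\ESC_M=\bigl(\tfrac{n_r}{2}+n_u\bigr)P$ and $\ESC_{opt}=\tfrac{n_r}{2}P$, hence ratio $\tfrac{2n_u+n_r}{n_r}$.

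The main subtleties I expect are convention-dependent: one must fix a specific tie-breaking rule for the median (e.g. the lower median when $n_r$ is even), and in the odd case the boundary $n_u=1$ is a degenerate situation in which $F(y_M^+)=\tfrac{1}{2}$ exactly, the optimum is an entire interval, and both upper and lower bounds collapse to $1$, consistently with the formula. The derivative/jump analysis of $\ESC$ across atoms of $F_{\vec x}$ and of $\mu$ must be done with one-sided limits, which is routine but notationally delicate; beyond that, the whole argument amounts to careful book-keeping of the piecewise-linear slope $n(2F_{\lambda,\mu,\vec x}-1)$ of the ex-ante social cost function between the two candidate points $y_M$ and $y^*$.
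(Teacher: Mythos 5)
Your proof is correct, and the lower-bound half is essentially the paper's: both construct the two-point instance with the median of $\vec x$ on one side and a sequence of measures concentrating at the other side, and both arrive at the same limiting ratio. Where you genuinely diverge is the upper bound. The paper bounds $\ESC_M$ by the triangle inequality and then argues that a sequence of instance modifications (moving agents left of $y$ to $y$, agents right of $y^*$ to $y^*$, agents in between to $y^*$, and dropping the $n_u\EE_{X\sim\mu}[|X|]$ term) can only increase the ratio, reducing the worst case to the same canonical two-point configuration. You instead write $\ESC(\vec x,y;\mu)=n\int_{-\infty}^{y}\Flam+n\int_{y}^{+\infty}(1-\Flam)$, telescope to get $\ESC(\vec x,y_M;\mu)-\ESC(\vec x,y^*;\mu)=n\int_{y_M}^{y^*}(1-2\Flam(t))\,dt$, bound the integrand pointwise by $n_u-1$ (resp.\ $n_u$) using $F_{\vec x}(t)\ge\frac{n_r+1}{2n_r}$ (resp.\ $\ge\frac12$) to the right of the median, and lower-bound $\ESC_{opt}$ by the contribution of the $\lceil n_r/2\rceil$ agents weakly left of $y_M$. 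This buys you a cleaner, fully quantitative upper bound that holds instance-by-instance without any "modifying the instance increases the ratio" reasoning, and it makes the parity distinction and the tie-breaking asymmetry of the lower median transparent (the $y^*<y_M$ direction gives a strictly better bound, so the WLOG is harmless); the paper's route, in exchange, sets up the instance-modification machinery it reuses in several later proofs. Your treatment of the degenerate boundary $n_u\le 1$ is also consistent with Theorem \ref{thm:armedianmech}.
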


\begin{proof}[Sketch of the Proof]
    For the sake of argument, let us consider $n_r$ even.
    Consider an instance in which half of the reporting agents are located at $0$ and the other half is located at $1$.
    By definition, the mechanism places the facility at $0$, therefore, the worst instance happens when the distribution $\mu$ is supported on an interval that is small and close to $1$.
    By doing so, we ensure that the $n_u$ aleatory agents are all close to $1$, while the facility is wrongly placed at $0$.
    By taking a sequence of probability measures that concentrate the mass at $1$ in the limit, we have that all the aleatory agents are located at $1$.
    A simple computation shows that the cost of the mechanism is then $n_u+\frac{n_r}{2}$, while the optimal cost is $\frac{n_r}{2}$.
    This proves that the SAR of the mechanism is larger than $\frac{n+n_u}{n_r}=\frac{2-\lambda}{\lambda}$.
\end{proof}

We complement Theorem \ref{thm:SAR0inf} with an example showcasing how the approximation ratio of a mechanism $M$ for a fixed measure $\mu$ is smaller than the SAR of the mechanism.

\begin{example}
    Let us fix $n=5$, $n_r=3$, and $n_u=2$.
    Owing to Theorem \ref{thm:SAR0inf}, the mechanism $M:\vec x\to med(\vec x)$ has a SAR equal to 
    $\frac{3}{2}$.
    Let us now consider the case in which $\mu$ is the uniform distribution on the interval $[1,2]$, that is $\mu=\mathcal{U}([1,2])$.
    Following the proof of Theorem \ref{thm:SAR0inf}, we infer that the worst case instance is $\vec x=(0,0,1.25)$, since $F_\mu^{[-1]}(\frac{1}{4})=1.25$.
    By a simple computation, we have $\ESC_{opt}(\vec x)=\frac{15}{4}$, while $\ESC_{M}(\vec x)=\frac{17}{4}$, hence $ar_{\mathcal{U}([1,2])}(M)=\frac{17}{15}\sim 1.13<\frac{3}{2}$.
\end{example}

We close this section by showing that placing the facility at the median of the agents' reports is the best possible mechanism for this framework.

\begin{theorem}
\label{thm:app}
    No truthful mechanism can achieve a SAR that is lower than $\frac{2-\lambda-\frac{1}{n}}{\lambda+\frac{1}{n}}$ if $n_r$ is odd or lower than $=\frac{2-\lambda}{\lambda}$ if $n_r$ is even. 
\end{theorem}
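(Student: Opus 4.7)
The plan is to establish the matching lower bound by constructing, for every truthful mechanism $M$, an adversarial family of inputs $\vec x_R$ indexed by a scale parameter $R \to \infty$, together with a probability distribution $\mu$ that pushes the ex-ante social cost ratio up to the target value. The structural ingredient I rely on is the classical characterization of strategy-proof mechanisms for single-peaked preferences on the line: Moulin's theorem gives that any anonymous truthful $M$ has the form $M(\vec x) = med(x_1, \ldots, x_{n_r}, \alpha_1, \ldots, \alpha_{n_r-1})$ for fixed phantoms $\alpha_1 \le \cdots \le \alpha_{n_r-1} \in \mathbb{R} \cup \{-\infty, +\infty\}$, and the non-anonymous case is handled by the option-set characterization of Barber\`a, Gul, and Stacchetti. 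The consequence I exploit is that $M(\vec x_R)$ must coincide with one of the reports $x_i$ or with one of the finitely many fixed phantoms; no intermediate ``midpoint'' value is possible.

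I would take $\vec x_R = (\underbrace{0, \ldots, 0}_{k}, \underbrace{R, \ldots, R}_{n_r - k})$ with $k = n_r/2$ when $n_r$ is even and $k = (n_r+1)/2$ when $n_r$ is odd. A counting argument on the sorted merged list of reports and phantoms shows that, once $R$ exceeds every finite phantom, $M(\vec x_R)$ lies in the finite set $\{0, R\} \cup \{\alpha_i : \alpha_i \text{ finite}\}$, and splits into two regimes according to the number of phantoms at $+\infty$: either $M(\vec x_R)$ stays uniformly bounded as $R \to \infty$ (the ``left-biased'' regime, covering min-like and phantom-like mechanisms), or $M(\vec x_R) = R$ for all large $R$ (the ``right-biased'' regime, covering max-like mechanisms). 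In the left-biased regime I let $\mu$ concentrate its mass at $R$; in the right-biased regime I swap the roles of $0$ and $R$ in the cost analysis and let $\mu$ concentrate its mass at $0$. The two analyses are mirror images and yield the same asymptotic ratio.

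For the left-biased regime with $\mu$ concentrating at $R$, Theorem \ref{thm:problem_opt} localizes the optimum at $R$ with $\ESC_{opt}(\vec x_R; \mu) = kR$; expanding $\ESC(\vec x_R, y; \mu) = (n-k) R + (2k-n)\, y$ for $y \in [0, R]$ and evaluating at $y = M(\vec x_R) = O(1)$ gives mechanism cost $(n-k) R + O(1)$, so the SAR ratio tends to $(n-k)/k$ as $R \to \infty$. Substituting $k = n_r/2$ recovers $(2-\lambda)/\lambda$, and substituting $k = (n_r+1)/2$ recovers $(2-\lambda-\tfrac{1}{n})/(\lambda+\tfrac{1}{n})$, matching the two parity cases. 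The right-biased computation with $\mu$ at $0$ is symmetric and, in the odd case, actually delivers a strictly larger ratio, so the overall supremum over adversaries is never below the claimed bound.

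The main obstacle is ruling out the ``midpoint'' regime in which $M(\vec x_R) \approx R/2$, which would only yield the weaker lower bound $1/\lambda$. This is precisely where the Moulin and Barber\`a-Gul-Stacchetti characterizations are indispensable: since $M(\vec x_R)$ must equal either a report (i.e., $0$ or $R$) or one of the finitely many fixed phantoms, and $R/2$ is neither of these for large $R$, no truthful mechanism can sit at the midpoint and dodge both adversarial distributions simultaneously. A secondary technical point is handling ties when a phantom coincides with $0$ or with one of the reports, which I would resolve by perturbing $R$ infinitesimally and appealing to the continuity of $\ESC(\vec x_R, y; \mu)$ in $R$.
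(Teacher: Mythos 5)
Your adversarial construction (two clusters of reports, a distribution concentrating at the cluster far from the facility, and the final ratio computation giving $(n-k)/k$ with $k=\lfloor\frac{n_r+1}{2}\rfloor$) is exactly the paper's, and your arithmetic for both parity cases is correct. The gap is in how you rule out the ``midpoint'' regime, which you correctly identify as the crux. You invoke Moulin's phantom-median characterization and the Barber\`a--Gul--Stacchetti option-set characterization to conclude that $M(\vec x_R)$ must be a report or a fixed phantom. Those theorems characterize strategy-proofness on the \emph{full} single-peaked domain, whereas truthfulness in this paper is defined only for the symmetric costs $c_i(x_i,y)=|x_i-y|$. The class of mechanisms truthful on this smaller domain is strictly larger: for instance, with one agent, $M(x)=\arg\min_{a\in\{0,1\}}|x-a|$ (projection onto a disconnected closed set) is truthful for symmetric costs but is not a generalized median voter scheme and is not continuous in the report. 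So the structural claim ``no intermediate value is possible'' is not delivered by the theorems you cite, and without it your argument only yields the weaker bound $1/\lambda$ against a hypothetical mechanism sitting near $R/2$.

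The fix is elementary and is what the paper does: if $M(\vec x)=y$ is not at either cluster, move the agents of one cluster to $y$ one at a time. Truthfulness for symmetric costs directly implies that when an agent relocates its true position to the current facility location, the output cannot move (otherwise that agent, truly located at $y$, would strictly prefer to misreport its old position), so after all such moves the facility sits at one of the two clusters of the modified instance, which is an affine image of the original one; the ratio computation then proceeds unchanged up to rescaling. Substituting this argument for the characterization step makes your proof correct and essentially identical to the paper's.
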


\begin{proof}[Sketch of the Proof]
    Consider an instance in which half of the reporting agents are located at $0$ and the other half is located at $1$.
    Since $M$ is truthful, we can assume without loss of generality that the mechanism places the facility at a location $y$ that is either $0$ or $1$.
    Indeed, otherwise, we can move all the agents at $0$ to $y$ and, up to a scaling factor, the argument follows the same.
    If the mechanism places the facility at $0$, the proof follows by considering a sequence of probability measures that concentrates the mass at $1$.
    If the mechanism places the facility at $1$, we proceed similarly.
    By computing the two possible ratios and taking the minimum, we conclude the proof.
\end{proof}

\subsection{\texorpdfstring{$n_u$}{k}-Quantile Information Case}

When $k\ge n_u$, it is possible to define an optimal and truthful mechanism.
Moreover, the routine of the optimal mechanism does not depend on the value of $n_r$.
Indeed, given $n_u$, the PQM induced by the vector $\vec q=(\frac{1}{2n_u},\frac{3}{2n_u},\dots,\frac{2n_u-1}{2n_u})$, is both truthful and optimal, hence placing the facility at the median of the vector $(\vec x,\vec f)$, defines an optimal routine.

\begin{theorem}
\label{thm:opt_discretemech}
    The mechanism $\PQM_{\vec q}$ where $q_k=\frac{2k-1}{2n_u}$ for every $k\in[n_u]$ is truthful and optimal regardless of $\mu$, that is $\SAR(M)=1$.
\end{theorem}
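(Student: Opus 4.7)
The truthfulness claim is immediate: the routine $\PQM_{\vec q}$ is a Phantom Quantile Mechanism, so by Theorem~\ref{them:truthfulPQM} it is truthful (and anonymous). Consequently, proving $\SAR(\PQM_{\vec q}) = 1$ reduces to showing that, for every $\vec x \in \erre^{n_r}$ and every admissible $\mu$, the output $y^{*} := \PQM_{\vec q}(\vec x) = med(\vec x, \vec f)$ is an optimal facility location for the instance $(\vec x, \mu)$.

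The plan is to invoke Theorem~\ref{thm:problem_opt}, which identifies the optimal locations with the medians of $F_{\lambda, \mu, \vec x}$. Since $F_{\lambda, \mu, \vec x}(y) = \frac{1}{n}\bigl(A(y) + n_u F_\mu(y)\bigr)$ with $A(y) := |\{i : x_i \le y\}|$, the optimality condition $F_{\lambda, \mu, \vec x}(y^{*-}) \le \tfrac{1}{2} \le F_{\lambda, \mu, \vec x}(y^{*})$ rewrites as
\[
    A(y^{*-}) + n_u F_\mu(y^{*-}) \;\le\; \frac{n}{2} \;\le\; A(y^{*}) + n_u F_\mu(y^{*}),
\]
and the entire task reduces to verifying this two-sided inequality at $y^{*}$.

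Writing $n = 2p + 1$, I would sort the multiset $(x_1,\dots,x_{n_r},f_1,\dots,f_{n_u})$ as $z_{(1)} \le \dots \le z_{(n)}$ so that by definition $y^{*} = z_{(p+1)}$, and denote by $a$ and $b$ the numbers of $x_i$'s and $f_k$'s lying strictly below $y^{*}$. In the generic situation $a + b = p$ (no further copies of $y^{*}$ in the multiset), the monotonicity of $\vec f$ forces the $f_k$'s below $y^{*}$ to be exactly $f_1, \dots, f_b$, which, combined with the defining identity $F_\mu(f_k) = \tfrac{2k-1}{2n_u}$, pins down $F_\mu$ near $y^{*}$. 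If $y^{*} = f_k$ for some $k$, then $b = k-1$ and direct substitution yields $A(y^{*}) + n_u F_\mu(y^{*}) = a + k - \tfrac{1}{2} = p + \tfrac{1}{2} = \tfrac{n}{2}$, with the same value attained at $y^{*-}$: both inequalities hold with equality. If instead $y^{*} = x_j$ for some $j$, the sandwich $f_b < y^{*} < f_{b+1}$ together with the absolute continuity of $\mu$ forces $n_u F_\mu(y^{*}) \in [b - \tfrac{1}{2}, b + \tfrac{1}{2})$; paired with $A(y^{*-}) = a$ and $A(y^{*}) = a + 1$, this delivers $A(y^{*-}) + n_u F_\mu(y^{*-}) < \tfrac{n}{2} \le A(y^{*}) + n_u F_\mu(y^{*})$, as required.

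The main technical obstacle, and the step requiring the most care, is handling ties. Coincidences among the $x_i$'s, among the $f_k$'s (which arise when $F_\mu^{[-1]}$ is constant on a sub-interval of $[0,1]$, so that several $q_k$'s share the same pseudo-inverse), and coincidences across the two groups all break the clean identity $a + b = p$ and force a recount of how many copies of $y^{*}$ sit in each group. I would address this either by a limiting argument, perturbing both $\vec x$ and $\mu$ so as to land in the generic case analyzed above and then passing to the limit using the right-continuity of $F_{\lambda, \mu, \vec x}$ together with the closedness of the optimality set from Theorem~\ref{thm:problem_opt}, or by extending the bookkeeping of $a$ and $b$ to separately track the multiplicities at $y^{*}$; either route leads to the same conclusion.
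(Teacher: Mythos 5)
Your proposal is correct and follows essentially the same route as the paper: both invoke Theorem~\ref{thm:problem_opt} and split into the cases where the median of $(\vec x,\vec f)$ is some $f_k$ versus some $x_j$, verifying the first-order optimality condition in each (your two-sided inequality on $F_{\lambda,\mu,\vec x}$ is just the affine reparametrization $\partial_y\ESC = 2F_{\lambda,\mu,\vec x}-1$ of the paper's derivative computation). The only substantive difference is that you spell out the tie-handling that the paper dismisses as a routine extension, which is a welcome but not essential refinement.
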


\begin{proof}[Sketch of the Proof]
    The proof of this theorem heavily relies upon the characterization presented in Theorem \ref{thm:problem_opt}.
    Given $\vec x\in\erre^{n_r}$, the ex-ante social cost $\ESC$ is a differentiable function with respect to $y$.
    Moreover, we have $\partial_{y}\ESC(\vec x, y;\mu)=n\big(\lambda \frac{2\# \{x_i\le y\}-n_r}{n_r}+2(1-\lambda)F_\mu(y)\big)$.
    Since $\PQM_{\vec q}$ with $\vec q=(\frac{1}{2n_u},\frac{3}{2n_u},\dots,\frac{2n_u-1}{2n_u})$ places the facility at the median of $(\vec x,(F_\mu^{[-1]}\big(\frac{1}{2n_u}\big),\dots,F_\mu^{[-1]}\big(\frac{2n_u-1}{2n_u}\big)))$, it is possible to evaluate the value of $\partial_y\ESC$ on the output of $\PQM_{\vec q}$. 
    Indeed, by definition of $\PQM_{\vec q}$, we have that if $\PQM_{\vec q}(\vec x)=\Flam^{[-1]}(\frac{2l-1}{2n_u})$ then $\{x_i\le y\}=\frac{n+1}{2}-l$, hence $\partial_y\ESC(\PQM_{\vec q}(\vec x))$.
    Through a similar argument, we handle the case in which $\PQM_{\vec q}(\vec x)=x_i$ where $i\in[n_r]$.
\end{proof}

Notice that, as a consequence, we infer that if the mechanism designer has access to the complete distribution $\mu$, they can design a truthful and optimal mechanism.

\subsection{Median Information Case}
\label{sec:justmedian}
We now consider the case in which the mechanism designer has only access to the median of the measure $\mu$, which we denote with $m$.
In this case, we consider the PQM induced by the vector $\vec q=(0.5,0.5,\dots,0.5)\in[0,1]^{n_u}$, that is $\PQM_{\vec q}(\vec x)=med(\vec x,\vec m)$, where $\vec m=(m,m,\dots,m)\in\erre^{n_u}$.
This mechanism is truthful and has bounded SAR, however, its SAR is always larger than $1$ when $1<n_u<n-1$.

\begin{theorem}
\label{thm:armedianmech}
    Let $\vec q=(0.5,0.5,\dots,0.5)$, where $m$ is the median of $\mu$.
    The mechanism $\PQM_{\vec m}$ is optimal if and only if $n_u\in\{0,1,n\}$.
    In all other cases, we have
    \[
        \SAR(\PQM_{\vec q})=\begin{cases}
            1+\max\big\{\frac{2n}{n-n_u-1},0\big\}=\max\big\{\frac{2}{\lambda + \frac{1}{n}},2\big\}-1\quad\quad &\text{if} \;\; \lambda\ge \frac{1}{2} \\
            1+\frac{2n}{n_u}=1+\frac{2\lambda}{1-\lambda}\quad\quad\quad\quad &\text{otherwise.}
        \end{cases}
    \]
    In particular, $\SAR(\PQM_{\vec q}) \le 3$.
\end{theorem}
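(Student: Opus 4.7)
The plan is to first dispose of the three trivial cases $n_u\in\{0,1,n\}$ and then, for the general case $1<n_u<n$, to produce matched upper and lower bounds on $\SAR(\PQM_{\vec m})$. For the trivial cases, I would verify directly that the mechanism coincides with the optimum of Theorem \ref{thm:problem_opt}: when $n_u=0$ the mechanism returns $med(\vec x)$, which is the median of $F_{\lambda,\mu,\vec x}=F_{\vec x}$; when $n_u=n$ it returns $m$, which is the median of $F_\mu$; and when $n_u=1$ it coincides with the optimal PQM of Theorem \ref{thm:opt_discretemech} since the single phantom equals $F_\mu^{[-1]}(1/2)=m$. Non-optimality in the remaining regimes will follow from the lower-bound constructions below.

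For the lower bound, I would use two-atom constructions. When $\lambda<1/2$, I place all $n_r$ deterministic agents at some point $A>0$ and let $\mu$ approximate an atomic measure with mass $\frac{1}{2}+\epsilon$ at $0$ and $\frac{1}{2}-\epsilon$ at $A$, giving $m=0$. The $n_u\ge n_r+1$ phantoms dominate the mechanism's median, placing the facility at $0$, while Theorem \ref{thm:problem_opt} places the optimum at $A$ for small $\epsilon$; passing to the limit $\epsilon\to 0$ yields ratio $(2n_r+n_u)/n_u=1+\frac{2\lambda}{1-\lambda}$. When $\lambda\ge 1/2$, I would split the deterministic agents, placing $k^*=(n_r-n_u+1)/2$ at $0$ and the remaining $n_r-k^*$ at $A$, with $\mu$ approximating mass $\frac{1}{2}$ at $0$ and $\frac{1}{2}$ at $A$, tilted so that $m=0$. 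The integer $k^*$ is the smallest value pinning the mechanism's median at $0$ while still keeping $F_{\lambda,\mu,\vec x}(0)<1/2$ (so the optimum's median lies at $A$); a direct calculation gives ratio $\frac{n+n_u-1}{n_r+1}=\frac{2n}{n_r+1}-1$, matching the claimed $\frac{2}{\lambda+1/n}-1$.

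For the upper bound I would use the derivative formula $\partial_y\ESC(y)=n(2F_{\lambda,\mu,\vec x}(y)-1)$ from the proof of Theorem \ref{thm:problem_opt}. Since $|\partial_y\ESC|\le n$, integrating between $y_M$ and $y^*$ yields $\ESC(y_M)-\ESC(y^*)\le n|y_M-y^*|$. The technical heart is to bound $|y_M-y^*|$: the mechanism's output $y_M$ is the median of $\lambda F_{\vec x}+(1-\lambda)F_{\delta_m}$ while $y^*$ is the median of $\lambda F_{\vec x}+(1-\lambda)F_\mu$, and the defining property of $m$ yields $|F_\mu(y)-F_{\delta_m}(y)|\le\frac{1}{2}$ pointwise. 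This lets one bound the displacement in terms of the slope of $F_{\lambda,\mu,\vec x}$ near its median, which is controlled by the spacing of the deterministic agents and the density of $\mu$ around $y^*$. A matching lower bound on $\ESC(y^*)$ follows by counting how many deterministic agents and how much mass of $\mu$ must lie on either side of $y^*$; the case split $\lambda\lessgtr 1/2$ then reflects whether the mechanism's median is pinned to $m$ (purely aleatory distortion) or to a deterministic agent (where the odd-$n$ parity produces the $1/n$ correction). Finally, $\SAR\le 3$ follows from algebraic inspection, since both branches of the formula tend to $3$ as $\lambda\to 1/2$.

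The main obstacle will be isolating the worst-case $\mu$ to the two-atom configurations used in the lower bound. Since, for fixed $y_M$ and $y^*$, the quantity $\ESC_M-C\cdot\ESC_{opt}$ is linear in $\mu$, its supremum over the convex set of measures with fixed median $m$ is attained at an extreme point, which, in the weak limit, is a two-atom measure at $\{0,A\}$; this will justify that the lower-bound constructions above are sharp.
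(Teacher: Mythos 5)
Your treatment of the trivial cases and your two lower-bound constructions are sound and essentially reproduce the paper's argument: the paper likewise builds a two-point instance (agents split between $t$ and $m$, with $\mu$ concentrating mass $\tfrac12$ near each point and tilted so that the median sits at the mechanism's output) and passes to the limit, obtaining the ratios $\frac{n+n_u-1}{n_r+1}=\frac{2}{\lambda+1/n}-1$ and $1+\frac{2\lambda}{1-\lambda}$. Your choice $k^*=\frac{n_r-n_u+1}{2}$ is exactly the count that pins the median of $(\vec x,\vec m)$ at $m$ while keeping $\Flam(m)<\tfrac12$, so that part checks out.

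The genuine gap is in the upper bound. The inequality $|\partial_y\ESC|\le n$ gives $\ESC(y_M)-\ESC(y^*)\le n|y_M-y^*|$, and combined with the counting bound $\ESC(y^*)\ge\frac{n_r+1}{2}|y_M-y^*|$ this yields only $1+\frac{2n}{n_r+1}$, which exceeds the claimed $\frac{2n}{n_r+1}-1$ by $2$ and in particular does not even give $\SAR\le 3$. Your proposed remedy --- ``bound $|y_M-y^*|$'' via the pointwise estimate $|F_\mu-F_{\delta_m}|\le\tfrac12$ and the local density of $\mu$ --- cannot work: the problem is scale-invariant, so $|y_M-y^*|$ admits no absolute bound; only the ratio of the two integrals matters, and both scale linearly with $|y_M-y^*|$. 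The missing idea is a sharper bound on the integrand itself on the interval $(y^*,y_M)$: when the mechanism outputs $m$, at most $\frac{n-1}{2}$ of the $n$ points of $(\vec x,\vec m)$ lie strictly left of $m$, and since all $n_u$ phantoms sit at $m$, at most $\frac{n-1}{2}$ deterministic agents do; together with $F_\mu(s)\le\tfrac12$ for $s<m$ this gives $2\Flam(s)-1\le\frac{n_u-1}{n}$ on $(y^*,m)$ when $\lambda\ge\tfrac12$ (and $\le\lambda$ when $\lambda<\tfrac12$), whence $\ESC(y_M)-\ESC(y^*)\le(n_u-1)|y_M-y^*|$ and the tight constant follows. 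The paper reaches the same conclusion by a different route, explicitly decomposing $\EE_{X\sim\mu}[|X-t|]$ and $\EE_{X\sim\mu}[|X-m|]$ into integrals over $(-\infty,t]$, $[t,m]$, $[m,\infty)$ and showing the ratio is a decreasing function of the common mass term, so it is maximized by the two-atom configuration. Relatedly, your closing extreme-point argument for reducing to two-atom measures is only a heuristic as stated: the relevant constraint set involves not just the median of $\mu$ but also the optimality of $y^*$, and identifying the extremizer as the specific two-atom measure at $\{y^*,y_M\}$ requires the monotonicity-in-$\mu$ argument that the paper carries out explicitly. Note also that your heuristic for the case split is off: for $\lambda<\tfrac12$ the output is always $m$, and even for $\lambda\ge\tfrac12$ the worst case has output $m$; the $1/n$ correction comes from the counting bound $\ESC(y^*)\ge\frac{n_r+1}{2}|y_M-y^*|$, not from the output landing on a deterministic agent.
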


\begin{proof}[Sketch of the Proof]
    For the sake of argument, let us consider the case in which $\lambda>\frac{1}{2}$ or, equivalently, $n_r>n_u$.
    Given $\mu$ and $\vec x$, if the optimal facility location is $m$, then the mechanism returns the optimal location.
    We can then assume that the optimal facility location is $y^*<y\le m$, where $y$ is the output of the mechanism.
    By carefully handling the ratio between the cost of the mechanism and the optimal cost, we show that the worst case happens when
    \begin{enumerate*}[label=(\roman*)]
        \item the output of the mechanism is $m$,
        \item all the agents are located at either $y^*$ or $m$, and
        \item there are $\frac{n-1}{2}$ agents at $y^*$ and the others are located at $m$.
    \end{enumerate*}
    Owing to Theorem \ref{thm:problem_opt}, we infer that $y^*=F_\mu^{[-1]}(\frac{1}{2n_u})$.
    We then consider a sequence of probability measures that concentrate half of the probability at $y^*$ and the remaining at $m$.
    In this way, it is possible to show that the $\ESC$ of the mechanism converges to $\frac{n_u}{2}+\frac{n-1}{2}$, while the optimal cost converges to $n_r-\frac{n-1}{2}+\frac{n_u}{2}$, which concludes the proof.
\end{proof}

To conclude this section, we prove a lower bound on the SAR for any truthful mechanism that has access to the median of $\mu$.

\begin{theorem}
\label{thm:LBmedian}
    Let $M$ be a truthful mechanism that has access to the median of $\mu$, then
    \[
        \SAR(M)\ge\begin{cases}
            \max\big\{\min\big\{ \frac{n}{\floor{\frac{n+n_r}{4}}+1},\frac{2n}{2n-2\floor{\frac{n+n_r}{4}}-n_u} \big\},2\big\}-1\quad\quad &\text{if}\quad \lambda\ge\frac{1}{3} \\
            \frac{2\lambda}{1-\lambda}+1 \quad\quad &\text{otherwise.}
        \end{cases}
    \]
    If we take the limit for $n\to \infty$, we obtain
    \[
        \SAR(M)\ge\begin{cases}
            \max\big\{\frac{4}{1+\lambda},2\big\}-1\quad\quad &\text{if}\quad \lambda\ge\frac{1}{3} \\
            \frac{2\lambda}{1-\lambda}+1 \quad\quad &\text{otherwise.}
        \end{cases}
    \]
    In particular, if $\lambda\ge \frac{n-1}{n}$, the lower bound is $1$ and it is matched by $\PQM_{\vec q}$ where $\vec q=(0.5,\dots,0.5)$.
\end{theorem}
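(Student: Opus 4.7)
The plan is to construct, for any truthful median-access mechanism $M$, adversarial pairs $(\vec x,\mu)$ (or families of such pairs that $M$ cannot simultaneously handle) that realise the claimed ratio. The central fact I exploit is that any two probability measures sharing the median $m$ produce the same output $M(\vec x)$, so the adversary can pick the worse of a family of same-median distributions. By translation invariance I fix $m=1/2$ throughout, so that any $M$ becomes a truthful function of $\vec x$ alone.

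For the case $\lambda\ge 1/3$, note that the value $s:=\lfloor(n+n_r)/4\rfloor+1$ satisfies $s\le n_r$ exactly when $\lambda\ge 1/3$, so the input $\vec x$ with $s$ agents at $0$ and $n_r-s$ at $1$ is feasible. I pair this input with two same-median distributions: $\mu_A$ approximating $\tfrac{1}{2}\delta_{1/2}+\tfrac{1}{2}\delta_1$ and $\mu_B$ approximating $\tfrac{1}{2}\delta_0+\tfrac{1}{2}\delta_{1/2}$, each given a small absolutely continuous perturbation so the median is exactly $1/2$. Theorem~\ref{thm:problem_opt} then supplies $\ESC_{opt}(\vec x,\mu_A)=n_r/2+n_u/4$ at optimum $y^*_A=1/2$ and $\ESC_{opt}(\vec x,\mu_B)=(n_r-s)+n_u/4$ at $y^*_B=0$. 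The resulting ratios $r_A(y),r_B(y)$ are piecewise linear in the shared mechanism output $y$; their extremal boundary values simplify precisely to $r_1=n/s$ and $r_2=2n/(2n-2s-n_u)$, because the choice of $s$ was designed to balance them. A naive two-distribution argument only pins $y$ at the crossing of $r_A$ and $r_B$, giving a bound strictly weaker than $\min\{r_1,r_2\}$. To close the gap I augment with strategy-proofness across a sliding family of inputs $\vec x_{s-1},\vec x_s,\vec x_{s+1}$, using that $M$ must be monotone ($M(\vec x_{s-1})\ge M(\vec x_s)\ge M(\vec x_{s+1})$) to rule out the mechanism sitting at the favourable crossing, thereby forcing the worst-case ratio to hit $\min\{r_1,r_2\}-1$. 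The $\max\{\cdot,2\}$ guard in the statement merely keeps the bound non-trivial when $\min\{r_1,r_2\}<2$.

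For the case $\lambda<1/3$ one has $s>n_r$ and the previous construction is infeasible, so I switch to $\vec x=(0,\dots,0)$ paired with $\mu\approx\tfrac{1}{2}\delta_0+\tfrac{1}{2}\delta_1$. Theorem~\ref{thm:problem_opt} gives optimum $y^*=0$ with cost $n_u/2$, and for any $y\in[0,1]$ the mechanism's cost is $n_r y+n_u/2$, yielding a ratio of $1+2\lambda y/(1-\lambda)$. Pairing with the mirror input $\vec x'=(1,\dots,1)$ and invoking strategy-proofness ($M(\vec x)\le M(\vec x')$) guarantees that at least one of $M(\vec x)$ and $1-M(\vec x')$ is $\ge 1/2$, producing the target ratio $1+2\lambda/(1-\lambda)$ on the worse of the two instances. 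The main obstacle I anticipate is the Step~2 coupling: handling the two-distribution adversary together with the $k$-sliding monotonicity constraint, while carefully tracking the floor function $\lfloor(n+n_r)/4\rfloor$ so that the exact finite-$n$ bound (and its $n\to\infty$ simplification) match as stated. A secondary concern is that the monotonicity I use is immediate for anonymous truthful mechanisms via Moulin's characterisation, but must be derived directly from strategy-proofness for the general non-anonymous case.
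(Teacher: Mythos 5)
Your high-level plan (exploit that $M$ sees only the median, punish the mechanism with two same-median distributions, and tune the split $s\approx\lfloor\frac{n+n_r}{4}\rfloor$ to balance the two ratios) is the right shape, but the concrete construction has a genuine gap in both cases, and it stems from the same source: you place the two agent clusters at $0$ and $1$ with the median $m=\tfrac12$ strictly \emph{between} them, whereas the paper places one cluster at $t<m$ and the other cluster \emph{at $m$ itself}. With the paper's geometry, the standard move-to-the-output truthfulness argument (moving all agents at $t$ to $y=M(\vec x)$ leaves the output unchanged) pins the output to the two cluster positions $\{t,m\}$, and each of the two choices is then killed by a dedicated median-$m$ distribution (all mass at $m$, resp.\ half at $t$ and half at $m$). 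With your geometry the mechanism can hedge at $y=\tfrac12$: a direct computation with $s=\frac{n+n_r}{4}$ gives $r_A(\tfrac12)=1$ and $r_B(\tfrac12)=\frac{n_r/2+n_u/4}{(n_r-s)+n_u/4}=\frac{1+\lambda}{2\lambda}$, and $\frac{1+\lambda}{2\lambda}<\frac{4}{1+\lambda}-1$ for every $\lambda\in(\tfrac13,1)$ since $(3\lambda-1)(\lambda-1)<0$. So your two-distribution adversary provably does not reach the claimed bound. Your proposed repair --- monotonicity over the sliding family $\vec x_{s-1},\vec x_s,\vec x_{s+1}$ --- cannot close this: a truthful (generalized-median) mechanism can output $\tfrac12$ on all three of these instances while remaining monotone, so nothing rules out the mechanism ``sitting at the favourable crossing''. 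Also note that your claimed boundary values are off: $r_A(0)=\frac{2n}{n+n_r}$, not $r_1=\frac{n}{s}=\frac{4n}{n+n_r}$.

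The $\lambda<\tfrac13$ branch has the same defect plus an incorrect intermediate claim. From $M(\vec x)\le M(\vec x')$ it does \emph{not} follow that one of $M(\vec x)$ or $1-M(\vec x')$ is $\ge\tfrac12$ (take $M(\vec 0)=0.4$, $M(\vec 1)=0.6$, realized e.g.\ by $\mathrm{med}(\vec x,0.4,0.6,\dots)$); and even if it did, $y\ge\tfrac12$ only yields the ratio $1+\frac{\lambda}{1-\lambda}$, half the required excess --- you would need $y$ at the far cluster. The paper's proof instead takes all agents at a single point $t<m$, uses move-to-the-output to reduce to $y\in\{t,m\}$, and then punishes $y=t$ with a distribution concentrated at $m$ (ratio $\frac{1-\lambda}{\lambda}$) and $y=m$ with a distribution split evenly between $t$ and $m$ (ratio $1+\frac{2\lambda}{1-\lambda}$); the minimum of these is the stated bound for $\lambda<\tfrac13$. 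To fix your proof, relocate the right-hand cluster (and the far support point of the adversarial distributions) to the median itself and replace the monotonicity argument with the move-to-the-output reduction.
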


\begin{proof}[Sketch of the Proof]
    For the sake of simplicity, we consider the case in which $n\to \infty$.
    The proof of this theorem is divided into two parts, depending on whether $\lambda$ is greater or smaller than $\frac{1}{3}$.
    We sketch how to handle the case in which $\lambda$ is larger than $\frac{1}{3}$ since the other case is easier.
    Let $m$ be the median of $\mu$ and let $M$ be a truthful mechanism that has access to the median of $\mu$.
    We consider an instance where the agents are either located at $t<m$ or at $m$.
    The amount of agents located at $t$ (and consequentially the ones located at $m$) are determined as follows.
    Let $l$ be the number of agents located at $t$ and let $y$ be the position returned by the mechanism on this input.
    If $y>m$, we use the truthfulness of $M$ to move all the agents to $m$ without altering the outcome of the mechanism. 
    We can then take a sequence of probability measures whose median is $m$ and whose support is a small interval containing $m$ (consider a uniform distribution as $2\epsilon\mathcal{U}_{[m-\epsilon,m+\epsilon]}$ where $\epsilon$ is a small and positive value) and show that the ratio between the mechanism cost and the optimal cost diverges.
    If $y<m$, we can again use truthfulness to move all the agents from $t$ to $y$, hence we can assume $t=y$ in this case.
    Therefore, we have to consider only two cases: either the mechanism places the facility at $t$ or the mechanism places the facility at $m$.
    If the mechanism places the facility at $t$, we consider a sequence of probability measures that concentrates all the probability on a small interval that contains $m$.
    If the mechanism places the facility at $m$, we take a sequence of probability distribution that places half probability at $t$ and half probability at $m$.
    In both cases, the ratio between the mechanism cost and the optimal cost depends on the number of agents located at $l$; hence, to retrieve our bound, we look for the value of $l$ that maximizes the minimum of the two ratios.
\end{proof}

\subsection{\texorpdfstring{$k$}{k}-Quantile Information Case}
We now consider the case in which the mechanism designer has access to $k< n_u$ quantiles.
In particular, given $\vec q\in[0,1]^k$, we characterize the PQM that achieves the lowest SAR while having access to the quantiles associated with $\vec q$.
We then characterize the optimal $\vec q\in[0,1]^k$ the mechanism designer can query for.
To achieve these results, we need to introduce two mathematical tools.

\begin{definition}
    Given $n_u$, $n_r$, and $k$, we define the lift operator $L:[0,1]^k\to [0,1]^{n_u}$ as follows
    \begin{equation}
        L:\vec q \to L(q):=(\underbrace{q_1,\dots,q_1}_\text{$t_1$-times},\underbrace{q_2,\dots,q_2}_\text{$t_2$-times},\dots,\underbrace{q_{k-1},\dots,q_{k-1}}_\text{$t_{k-1}$-times},\underbrace{q_k,\dots,q_k}_\text{$t_k$-times}),
    \end{equation}
    where $t_j$ is the number of elements of $\RRn$ whose closest entry of $\vec q$ is $q_j$ ($t_j$ can also be equal to $0$).
    In particular, $t_j$ is the number of elements in $j\in \RR(n_r,n_u)$ for which $|q_j-\frac{2j-1}{2n_u}|=\min_{l\in[k]}|q_j-\frac{2l-1}{2n_u}|$ holds.
\end{definition}

\begin{definition}
    Given $n_u$, $n_r$, and $k$, we define $\Dnn:[0,1]^{n_u}\to\erre$ as $\Dnn(\vec w)=\max_{j\in\RRn}\big|w_j-\frac{2j-1}{2n_u}\big|$, where $\RRn$ is the set of relevant indexes defined in \eqref{def:rrn}.
\end{definition}

The lift operator allows us to retrieve a $n_u$ dimensional vector from a $k$ dimensional one, thus, given $\vec q$, the mechanism $\PQM_{L(\vec q)}$ is well defined.
In the following theorem, we compute the SAR of $\PQM_{L(\vec q)}$ and show that it is the best PQM induced by a vector that has the same entries as $\vec q$.

\begin{theorem}
\label{thm:PQMar}
    Given $\vec q$, the mechanism $\PQM_{L(\vec q)}$ is well-defined.
    Moreover, we have that
    \begin{equation}
    \label{eq:ar_fixedq}
        \SAR(\PQM_{L(\vec q)})=1+\frac{4(1-\lambda)\Dnn(L(\vec q))}{1-2(1-\lambda)\Dnn(L(\vec q))}.
    \end{equation}
    Finally, $L(\vec q)$ induces the PQM with the lowest SAR amid the class of PQM induced by vectors whose entries are the same as the entries of $\vec q$, that is $\SAR(\PQM_{L(\vec q)})\le \SAR(\PQM_{\vec w})$, for every $\vec w\in[0,1]^{n_u}$ such that, for every $i\in [n_u]$ there exists a $j\in [k]$ such that $w_i=q_j$.
\end{theorem}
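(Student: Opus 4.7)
The plan is to split the argument into three parts matching the three claims in the statement: well-definedness of $L(\vec q)$, the exact SAR formula, and the optimality of $L(\vec q)$ within PQMs whose phantom entries are drawn from $\{q_j\}_{j\in[k]}$. For well-definedness, I would observe that the entries of $L(\vec q)$ inherit monotonicity from $\vec q$ (contiguous blocks of length $t_j$ in ascending order of $j$), that each entry lies in $[0,1]$, and that $\sum_j t_j = |\RRn|$ matches the required dimension $n_u$, because every relevant index is assigned to exactly one entry of $\vec q$. Truthfulness of $\PQM_{L(\vec q)}$ then follows directly from Theorem \ref{them:truthfulPQM}.

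For the upper bound on the SAR, I would fix arbitrary $\vec x$ and $\mu$, set $y^* = \argmin_y \ESC(\vec x, y; \mu)$ and $y_M = \PQM_{L(\vec q)}(\vec x)$, and use the derivative identity from the proof of Theorem \ref{thm:problem_opt} to write the excess cost as $\ESC(y_M) - \ESC(y^*) = n \int_{y^*}^{y_M} (2 F_{\lambda,\mu,\vec x}(t) - 1)\, dt$. The key quantitative step is to show that each phantom of $\PQM_{L(\vec q)}$ is displaced in quantile space by at most $\Dnn(L(\vec q))$ from the ideal position $(2l-1)/(2n_u)$ that would produce $y^*$, and to translate this into the uniform bound $|F_{\lambda,\mu,\vec x}(t) - 1/2| \le 2(1-\lambda)\Dnn(L(\vec q))$ on the interval between $y^*$ and $y_M$. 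Combining this with the lower bound on the optimal cost given by the agents and aleatory mass strictly outside $[y^*, y_M]$, then simplifying algebraically, yields exactly the ratio $1 + 4(1-\lambda)\Dnn/(1 - 2(1-\lambda)\Dnn)$.

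For the matching lower bound, I would construct an adversarial instance. Let $j^* \in \RRn$ and $l^* \in [n_u]$ attain the maxmin defining $\Dnn(L(\vec q))$, and assume without loss of generality that $L(\vec q)_{l^*} > (2j^*-1)/(2n_u)$. Take $\mu_\epsilon$ concentrating mass slightly above $(2j^*-1)/(2n_u)$ at a point $a$ and the remainder at $b > a$, and place $\vec x$ at the two points $\{a,b\}$ with a split tuned so that the optimum falls at $a$ (enough phantoms lie below the split under the ideal PQM) while $\PQM_{L(\vec q)}$ outputs $b$ (the shifted phantom at $L(\vec q)_{l^*}$ now lies above the split, which in effect removes one phantom from $a$ and flips the median). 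Passing to the limit $\epsilon \to 0$ and computing the ex-ante social costs of $a$ and $b$ explicitly then recovers the claimed ratio.

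For the final claim, monotonicity of the SAR formula in $\Dnn$ (strictly increasing on the regime $(1-\lambda)\Dnn < 1/2$) reduces optimality to showing $\Dnn(L(\vec q)) \le \Dnn(\vec w)$ for every admissible $\vec w$. This is a clean combinatorial observation: $L(\vec q)$ by construction assigns each $j \in \RRn$ to its closest entry of $\vec q$, so $\min_l |L(\vec q)_l - (2j-1)/(2n_u)| = \min_{m \in [k]} |q_m - (2j-1)/(2n_u)|$, while any $\vec w$ whose entries come from $\{q_j\}_{j\in[k]}$ can only attain a larger minimum. The main obstacle I anticipate is the pointwise bound on $F_{\lambda,\mu,\vec x}$ used in the upper bound step: converting the maxmin over quantile distances that defines $\Dnn$ into a uniform c.d.f.\ bound on $[y^*, y_M]$ requires a careful case analysis of how the agents' positions interleave with the perturbed phantom quantiles, and of matching this bound to the adversarial two-point construction so that the upper and lower bounds coincide.
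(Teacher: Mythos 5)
Your proposal is correct in outline but takes a genuinely different route from the paper on the upper bound. The paper does not integrate the derivative of the ex-ante social cost; instead it reduces to a worst-case two-point instance by a sequence of cost-ratio-increasing modifications (moving all agents and all probability mass onto the optimal point $f_k$ and the mechanism's output $f_{k'}$ without changing either location), and then reads off the ratio from $\Flam(f_k)\ge \tfrac12$ and $\Flam(f_{k'})\le \tfrac12-(1-\lambda)\Delta_q$ with $\Delta_q=|q_k-q_{k'}|$, finally maximizing over $\Delta_q\le\Dnn(L(\vec q))$ by monotonicity. Your integral identity $\ESC(y_M)-\ESC(y^*)=n\int_{y^*}^{y_M}(2\Flam(t)-1)\,dt$ buys a cleaner, instance-free upper bound that sidesteps the somewhat informal "each modification increases the ratio" bookkeeping, at the price of having to establish the uniform c.d.f.\ bound on the whole interval rather than only at the two endpoints; both arguments ultimately rest on the same key fact, namely that $\Flam$ at the mechanism's output is within $(1-\lambda)\Dnn(L(\vec q))$ of $\tfrac12$. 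Your lower-bound construction (two-point adversarial measure flipping the median) and your optimality argument (monotonicity of the SAR formula in $\Dnn$ plus $\Dnn(L(\vec q))\le\Dnn(\vec w)$) are essentially identical to the paper's.

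One concrete slip to fix: the uniform bound you state, $|\Flam(t)-\tfrac12|\le 2(1-\lambda)\Dnn(L(\vec q))$, is off by a factor of $2$ from what your own computation needs. With that constant, the excess cost is at most $4n(1-\lambda)\Dnn|y_M-y^*|$ and the optimal cost is at least $n(\tfrac12-2(1-\lambda)\Dnn)|y_M-y^*|$, which yields $1+\frac{8(1-\lambda)\Dnn}{1-4(1-\lambda)\Dnn}$, not the claimed formula. The correct bound — which is what the phantom-displacement argument actually delivers, and what the paper uses at the endpoint $f_{k'}$ — is $|\Flam(t)-\tfrac12|\le(1-\lambda)\Dnn(L(\vec q))$ on the interval between $y_M$ and $y^*$; with that constant your derivation recovers \eqref{eq:ar_fixedq} exactly. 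Also note that your well-definedness step tacitly assumes $|\RRn|=n_u$ so that $\sum_j t_j=n_u$; the paper makes the same tacit assumption, but it is worth stating explicitly since the lift as defined otherwise produces a vector of dimension $|\RRn|<n_u$.
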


\begin{proof}[Sketch of the Proof]
    Let $j\in\RRn$, then there exists $\vec x$ and $\mu$ such that the optimal solution is $y^*=F_\mu^{[-1]}(\frac{2j-1}{2n_u})$ and $\PQM_{L(\vec q)}(\vec x)=y=F_\mu^{[-1]}(q_j)$.
    Without loss of generality, we assume that $q_j<\frac{2j-1}{2n_u}$, hence $y<y^*$, and set $\Delta_q=\big|q_j-\frac{2j-1}{2n_u}\big|$.
    We then apply a sequence of modifications to the instance in order to maximize the ratio between the mechanism cost and the optimal cost.
    In particular,
    \begin{enumerate*}[label=(\roman*)]
        \item \label{bullet1proofdeltaq} we move all the deterministic agents on the left of $y$ to $y$. Since we are decreasing the optimal and mechanism cost by the same quantity the ratio increases. Similarly, we increase the ratio by considering a sequence of probability measures that concentrate the probability that $\mu$ assigns to the left of $y$ on a small interval close to $y$.  
        \item \label{bullet1proofdeltaq2} We repeat the process in \ref{bullet1proofdeltaq} to the agents and probability on the right of $y^*$. 
        \item \label{bullet1proofdeltaq3} We move all the agents whose position is between $y$ and $y^*$ to $y^*$. Finally, we concentrate all the probability that $\mu$ assigns to $(y,y^*)$ to $y^*$.
    \end{enumerate*}
    Notice that all these modifications do not affect the position of the optimal solution nor the output of $\PQM_{L(\vec q)}$.
    Lastly, we compute the optimal and the mechanism cost.
    With a slight abuse of notation, we denote with $\vec x$ and $\mu$ the agents' reports and the probability measure we obtained after modifying the instance according to points \ref{bullet1proofdeltaq}, \ref{bullet1proofdeltaq2}, and \ref{bullet1proofdeltaq3}.
    By Theorem \ref{thm:problem_opt}, we have that $\Flam(y^*)\ge \frac{1}{2}$.
    By construction, $\Flam(y)=\frac{1}{2}-(1-\lambda)\Delta_q$, therefore the ratio between the mechanism cost and the optimal cost is $\frac{1+2(1-\lambda)\Delta_q}{1-(1-\lambda)\Delta_q}=1+\frac{4(1-\lambda)\Delta_q}{(1-\lambda)\Delta_q}$, which concludes the proof.
\end{proof}

\begin{remark}
    It is worthy of notice that Theorem \ref{thm:PQMar} allows us to define the best PQM for a given vector $\vec q\in[0,1]^k$.
    In particular, the best PQM is $\PQM_{L(\vec q)}$ and its SAR is given by formula \eqref{eq:ar_fixedq}.
\end{remark}

Given $k$, $n_u$, and $n_r$, we explicitly characterize the vector $\vec q$ that minimizes the SAR of $\PQM_{L(\vec q)}$.

\begin{theorem}
\label{thm:optvecqklenu}
    Given $k$, $n_u$, and $n_r$, let $\sigma,\lambda\in\enne$ be the unique pair of natural values such that $n_u=\lambda k+\sigma$.
    If $\RRn=\{\frac{2j-1}{2n_u}\}_{j\in[n_u]}$, the best PQM mechanism is $\PQM_{L(\vec q)}$ where $\vec q$ is 
    \[
        q_s=\frac{2(s-1)(\lambda+1)+\lambda+1}{2n_u}\; \text{if}\;\; s\le\sigma,\quad q_s=\frac{2(\sigma-1)(\lambda+1)+2(s-\sigma-1)\lambda+\lambda}{2n_u}\; \text{if}\;\; s>\sigma.
    \]
\end{theorem}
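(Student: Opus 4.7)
The plan is to convert the problem into a one-dimensional $k$-center problem on the equispaced set of relevant quantiles, and then exhibit a balanced grouping that realises the optimum.

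By Theorem~\ref{thm:PQMar}, $\SAR(\PQM_{L(\vec q)})$ is a strictly monotone increasing function of $\Dnn(L(\vec q))$; hence minimizing the SAR over $\vec q\in[0,1]^k$ is equivalent to minimizing
\begin{equation*}
    \Dnn(L(\vec q)) \;=\; \max_{j\in\RRn}\; \min_{l\in[k]}\; \Bigl|q_l - \tfrac{2j-1}{2n_u}\Bigr|.
\end{equation*}
Under the hypothesis $\RRn=\{\tfrac{2j-1}{2n_u}\}_{j\in[n_u]}$, the target set consists of $n_u$ equispaced points on $[0,1]$ with common gap $\tfrac{1}{n_u}$, so the problem is the classical 1D $k$-center instance on these points.

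Next I would exploit the one-dimensional geometry. Because nearest-centre assignment in 1D is monotone, a sorted $\vec q$ induces (via $L$) a partition of $[n_u]$ into $k$ consecutive blocks $B_1,\dots,B_k$ with sizes $m_1,\dots,m_k$ summing to $n_u$. A block of $m$ consecutive equispaced points admits the optimal single-centre max-distance $\tfrac{m-1}{2n_u}$, attained precisely when the centre is the midpoint of the block's extreme two points. The objective therefore collapses to
\begin{equation*}
    \min_{\substack{m_1+\cdots+m_k=n_u \\ m_s\in\enne}} \; \max_{s\in[k]} \; \tfrac{m_s-1}{2n_u},
\end{equation*}
which by a pigeonhole/balancing argument is minimized when the block sizes are as balanced as possible: writing $n_u=\lambda k+\sigma$, take $\sigma$ blocks of size $\lambda+1$ and $k-\sigma$ blocks of size $\lambda$, which matches the lower bound $\lceil n_u/k\rceil$ on $\max_s m_s$.

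The last step is to fix a specific arrangement (putting the $\sigma$ large blocks first) and to compute the midpoints in closed form. Block $s\le\sigma$ spans the indices $(s-1)(\lambda+1)+1,\dots,s(\lambda+1)$, while block $s>\sigma$ spans a corresponding range past the first $\sigma(\lambda+1)$ indices. Substituting the endpoints into the midpoint formula $\tfrac{p_a+p_b}{2}=\tfrac{a+b-1}{2n_u}$ for $p_j=\tfrac{2j-1}{2n_u}$ produces the closed-form $q_s$ displayed in the statement. The main obstacle, beyond routine arithmetic, is the consecutive-block reduction: one must verify that, for sorted $\vec q$, the lift operator $L$ realises exactly the monotone nearest-centre partition, so that non-monotone relabellings cannot improve on $\Dnn$. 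Once this is in place, the theorem follows from the balancing argument together with the explicit midpoint computation.
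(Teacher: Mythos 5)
Your proposal is correct and follows essentially the same route as the paper's proof: reduce to minimizing $\Dnn(L(\vec q))$ via the monotonicity in Theorem~\ref{thm:PQMar}, observe that each $\vec q$ induces a nearest-centre partition of the equispaced set $\RRn$ into consecutive blocks, place each $q_s$ at the block midpoint, and balance the block sizes ($\sigma$ blocks of size $\lambda+1$ and $k-\sigma$ of size $\lambda$) before computing the midpoints in closed form. Your explicit flagging of the consecutive-block reduction is a point the paper's proof treats implicitly, but the argument is the same.
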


Owing to Theorem \ref{thm:optvecqklenu}, when $k$ divides $n_u$ and $n_r>n_u$, the best PQM is induced by the vector $\vec q=(\frac{1}{2k},\frac{3}{2k},\dots,\frac{2k-1}{2k})$ and its SAR is $1+\frac{2(1-\lambda)(\sigma-1)}{n_u-(1-\lambda)(\sigma-1)}$, where $\sigma\in\enne$ is such that $n_u=\sigma k$.
Moreover, notice that a similar argument allows us to handle the cases in which the cardinality of $\RRn$ is lower than $n_u$.
Lastly, we provide a lower bound on the SAR of any truthful mechanism that places the facility while having access to $k$ quantiles induced $\vec q=(\frac{1}{2k};\frac{3}{2k},\dots,\frac{2k-1}{2k})$.

\begin{theorem}
\label{thm:lbkwhatever}
    Let $k\in\enne$ be an integer such that there exists $\sigma\in\enne$ such that $n_u=\sigma k$.
    Given $\vec q\in [0,1]^k$ the vector containing $k$ equi-distanced values, that is $q_j=\frac{2j-1}{2k}$, any truthful mechanism $M$ that has access only to the quantiles induced by $\vec q$ is such that
    \[
        \SAR(M)\ge \begin{cases}
            1+\dfrac{2\frac{n_u}{k}}{n+n_u-2\frac{n_u}{k}}\quad &\text{if}\; k\,\text{is even}, \\  1+\dfrac{6\frac{n_u}{k}}{n+n_u-5\frac{n_u}{k}} &\text{otherwise.}
        \end{cases}
    \]
\end{theorem}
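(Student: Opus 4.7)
The plan is an adversarial indistinguishability argument. Since $M$ depends only on the report $\vec x$ and on the $k$ values $f_j=F_\mu^{[-1]}\!\left(\tfrac{2j-1}{2k}\right)$, any two distributions $\mu_1,\mu_2\in\PP(\erre)$ agreeing on these $k$ quantile values force $M(\vec x;\mu_1)=M(\vec x;\mu_2)=:y$, and hence
\[
    \SAR(M)\ \ge\ \max_{i\in\{1,2\}}\,\frac{\ESC(\vec x,y;\mu_i)}{\ESC_{opt}(\vec x;\mu_i)}.
\]
It thus suffices to exhibit $(\vec x,\mu_1,\mu_2)$ for which the right-hand side is at least the claimed bound for every possible $y$. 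The adversary's freedom lies in the $1/k$ units of mass inside each gap between two consecutive accessible quantiles: by Theorem \ref{thm:setoffeasibleopt} and the definition of $\RRn$, every such gap contains exactly $\sigma=n_u/k$ candidate optimal positions of the form $F_\mu^{[-1]}\!\left(\tfrac{2j-1}{2n_u}\right)$, and by concentrating the $1/k$ gap mass next to one endpoint all $\sigma$ of them cluster near that endpoint (and symmetrically for the other endpoint), without changing any of the $k$ accessible values observed by the mechanism.

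For $k$ even I would exploit the single \emph{central} gap $(f_{k/2},f_{k/2+1})$, whose $1/k$ of mass straddles the probability $1/2$. Let $a$ and $b$ denote the leftmost and rightmost relevant $n_u$-quantiles inside that gap; place $l$ reports at $a$ and $n_r-l$ at $b$; construct $\mu_1$ to concentrate the central mass just to the right of $a$ and $\mu_2$ to concentrate it just to the left of $b$, leaving all other $k-1$ accessible values and all mass outside the central gap identical in the two measures. Choose the integer $l$ in the unique window for which Theorem \ref{thm:problem_opt} puts the median of $\Flam$ at $a$ under $\mu_1$ and at $b$ under $\mu_2$, so that these two points become the respective optima. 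For every $y$ one then writes both ratios as piecewise-linear functions of $y$ and observes that moving $y$ outside $[a,b]$ strictly inflates the mechanism's cost while leaving both optima unchanged, so the worst case is $y\in[a,b]$; the minimum of $\max_i\mathrm{ratio}_i$ is attained where the two ratios coincide. Substituting $b-a$ as a distance of $\sigma-1$ units on the $1/n_u$-quantile scale, and passing to the mass-concentration limit, the common value of the two ratios simplifies to $1+\frac{2\sigma}{n+n_u-2\sigma}$.

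For $k$ odd the middle accessible quantile $f_{(k+1)/2}$ coincides with the median of $\mu$, so the adversary cannot shift the median of $\mu$ across one central gap. Instead the adversary uses the two flanking gaps $(f_{(k-1)/2},f_{(k+1)/2})$ and $(f_{(k+1)/2},f_{(k+3)/2})$ simultaneously. I would therefore place reports at three positions — the common median $f_{(k+1)/2}$ together with the extreme left relevant quantile $a$ and the extreme right relevant quantile $b$ — and build $\mu_1,\mu_2$ that coincide on all $k$ accessible values while concentrating the interior masses of both flanking gaps towards opposite ends in the two measures. The same median-of-$\Flam$ analysis still pins the two optima at $a$ and $b$, but now two gaps of $1/k$ mass each contribute to the discrepancy; running the identical $\min_y\max_i$ optimisation replaces the coefficient $2\sigma$ of the even case by $6\sigma$, yielding the bound $1+\frac{6\sigma}{n+n_u-5\sigma}$.

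The main obstacle I anticipate is the combinatorial bookkeeping needed to execute the construction: ensuring that the $\mu_i$'s are genuine absolutely continuous probability measures matching all $k$ prescribed quantile values, that in the mass-concentration limit all $\sigma$ relevant $n_u$-quantiles of the targeted gap(s) really collapse onto the chosen endpoint(s), and that the integer $l$ (or the three multiplicities in the odd case) can be chosen to place the median of $\Flam$ exactly at $a$ under $\mu_1$ and at $b$ under $\mu_2$. The parity-dependent coefficients $2$ versus $6$ should emerge cleanly from the fact that one adversarial gap is active in the even case and two in the odd case, but verifying that the arithmetic delivers precisely these constants, rather than close neighbours such as $2$ vs $4$, is the most delicate step of the argument.
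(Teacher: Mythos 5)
Your construction is the right one --- the same instance the paper uses (reports split between the two endpoints of the central accessible-quantile gap, with the adversary free to push the $1/k$ units of hidden mass in that gap towards either endpoint) --- but the argument as you have set it up cannot reach the stated constants, and your own worry about the arithmetic is well founded. The reason is that your proof never uses truthfulness: the displayed inequality only exploits the fact that $M$ is a function of $(\vec x,\vec f)$, so you are computing $\min_{y}\max_i \mathrm{ratio}_i(y)$ on a \emph{fixed} instance. Carrying that computation out for $k$ even, with $l$ reports at $a$, $n_r-l$ at $b$, and mass $A=n_u(\tfrac12-\tfrac1{2k})$ outside the gap on each side, the two ratios are linear in $y$ and their crossing value equals $1+\frac{\sigma^2-(P-Q)^2}{(P-Q)^2+\sigma(P+Q)}$ with $P=l+A$, $Q=n_r-l+A$, $P+Q=n-\sigma$. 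This is maximized at $P=Q$, where it equals $1+\frac{\sigma}{n-\sigma}$, and $\frac{\sigma}{n-\sigma}<\frac{2\sigma}{n+n_u-2\sigma}$ whenever $n_u<n$. So pure indistinguishability on this instance proves a strictly weaker bound: the mechanism can hedge at the midpoint, and the $n_r$ reported agents then pay roughly the same under the hedged location as under either optimum, diluting the penalty.

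The missing ingredient is the standard strategyproofness step, which is exactly what the paper's proof inserts between ``let $y$ be the output'' and ``take the adversarial limit of $\mu$'': once $y=M(\vec x)$ is fixed, relocate the whole block of reports at one endpoint to $y$; truthfulness forces the output on the new instance to remain $y$, and only \emph{then} does the adversary concentrate the gap mass at the opposite endpoint. This relocation subtracts the same quantity ($l\cdot|y-a|$, in the limit) from both the mechanism's cost and the optimal cost, which raises the ratio from $1+\frac{\sigma}{n-\sigma}$ to $1+\frac{2\sigma}{n+n_u-2\sigma}$ at $y=\tfrac12$; the choice of which block to relocate is dictated by $y$, and minimizing over $y$ the larger of the two resulting ratios lands at the symmetric point. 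The same omission affects your odd-$k$ plan, where the paper additionally tunes the split $l$ (as in Theorem \ref{thm:LBmedian}) and bounds the two post-relocation ratios $\mathcal{L}(y)$, $\mathcal{R}(y)$ by monotonicity before evaluating at $y=\tfrac12$; without the relocation step the coefficients $6$ and $5$ will not come out. Everything else in your outline (the indistinguishable pair $\mu_1,\mu_2$, the window for $l$ making $a$ and $b$ the respective optima, the reduction to $y\in[a,b]$) is consistent with the paper's proof.
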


\begin{proof}[Sketch of the Proof]
    Let $M$ be a truthful mechanism.
    For the sake of argument, let us assume that $k$ and $n_r$ are even.
    By definition of $\vec q$, we have that $q_{\frac{k}{2}}=\frac{1}{2}-\frac{1}{2k}$ and $q_{\frac{k}{2}+1}=\frac{1}{2}+\frac{1}{2k}$.
    Let $\mu$ be a probability measure such that $F_\mu^{[-1]}(q_{\frac{k}{2}})=0$, $F_\mu^{[-1]}(q_{\frac{k}{2}+1})=1$, $F_\mu^{[-1]}(q_{j})\in(-\epsilon,0)$ if $j<\frac{k}{2}$, and $F_\mu^{[-1]}(q_{j})\in(1,1+\epsilon)$ if $j>\frac{k}{2}+1$, where $\epsilon$ is a small positive constant. 
    Let $\vec x\in\erre^{n_r}$ be a vector such that $x_i=0$ if $i\le \frac{n_r}{2}$ and $x_i=1$ otherwise.
    Finally, let us denote with $y$ the output of $M$ for this instance.
    Without loss of generality, we assume that $y\in[0,1]$ (as otherwise the SAR of the mechanism is higher).
    Unlike what we did in the proof of Theorem \ref{thm:PQMar}, we cannot restrict $y$ to be either $0$ or $1$, as the truthfulness of $M$ applies only to the agents' reports and not to the quantiles of $\mu$.
    Indeed, to maximize the ratio between the mechanism cost and the optimal cost, we either
    \begin{enumerate*}[label=(\roman*)]
        \item \label{case:AAA} move all the agents locate at $0$ to $y$ and concentrate all the probability that $\mu$ assigns to $(0,1)$ at $1$; or
        \item \label{case:BBB} move all the agents locate at $1$ to $y$ and concentrate all the probability that $\mu$ assigns to $(0,1)$ at $0$.
    \end{enumerate*}
    Whether we modify the instance following \ref{case:AAA} or \ref{case:BBB}, depends on which modification leads to the highest ratio.
    The lower bound is then retrieved by selecting the $y\in[0,1]$ that minimizes the maximum ratio attainable by either applying \ref{case:AAA} or \ref{case:BBB}.
    Owing to the symmetry of the instance, this happens when $y=\frac{1}{2}$.
    The full computation of the lower bounds is deferred to the Appendix.
\end{proof}

\section{Conclusion and Future Works}
\label{sec:conclusion}

In this paper, we introduce and study the Facility Location Problem with Aleatory Agents (FLPAA), where the facility has to accommodate agents whose position is known along with agents whose position is aleatory and described by a probability measure $\mu$.
After characterizing the optimal solution to the FLPAA for any given agent position $\vec x$ and any measure $\mu$, we studied the mechanism design aspects of the FLPAA.
We considered the problem of designing truthful mechanisms that perform well while having only access to partial information about the distribution $\mu$.
In particular, we assumed that the mechanism designer does not have access to $\mu$, but to $k$ quantiles that the mechanism designer can query for.
We introduced the notion of strong approximation ratio (SAR), which measures the ratio between the mechanism cost and the optimal cost on the worst-case input $\vec x$ and the worst-case distribution $\mu$.
We studied the upper and lower bounds for every possible value of $k$ and provided truthful routines with bounded SAR.
In several cases, the upper bound matches the lower bound.
Lastly, we extended our study to the case where we must locate two facilities with capacity $c$.
For future works, we aim to improve the lower bound for the case in which $1<k<n_u$.
It would also be interesting to study whether the i.i.d. assumption on the aleatory agents can be relaxed.
Another interesting development would be to study the FLPAA in a higher dimensional space and to study different costs other than the social cost, such as the Maximum Cost.

\bibliographystyle{alpha}
\bibliography{sample_wine}

\clearpage
\appendix

\section{Appendix}

In this appendix, we report all the material missing from the main body.

\section{Missing Proof}

In this appendix, we report all the proof missing from the main body of the paper.

\begin{proof}[Proof of Theorem \ref{thm:problem_opt}]
    It is well-known that the median of a c.d.f. is the element minimizing the absolute deviation.
    We then focus on the characterization of the optimal solution.
    First, we notice that since $\mu$ is absolutely continuous, we have that $F_{\lambda,\mu,\vec x}$ is discontinuous only at $x\in\mathcal{X}=\{x_i\}_{i\in[n_r]}$, where $x_i$ are the agents' reports.
    Thus $\mathcal{ESC}$ is continuous with respect to $y$.
    Moreover, it is differentiable on every point in $\erre\backslash\mathcal{X}$.
    From a simple computation, we infer that 
    \[
        \partial_y\mathcal{ESC}(y)=\lambda A(y)+(1-\lambda)B(y),
    \]
    where $A(y)=\frac{2\# \{x_i\le y\}-n_r}{n_r}$ and $B(y)=2F_\mu(y)-1$ for every $y\in\erre\backslash\mathcal{X}$.
    It is easy to see that both $A$ and $B$ are non decreasing in $y$.
    Moreover, $\lim_{y\to\infty}\partial_y\mathcal{ESC}(y)>0$ and $\lim_{y\to-\infty}\partial_y\mathcal{ESC}(y)<0$.
    We then define the sets $I_+$ and $I_-$ as $I_+=\{y\in \erre,\;\;\text{s.t.}\;\;\partial_y\mathcal{ESC}(y)>0 \}$ and $I_-=\{y\in \erre,\;\;\text{s.t.}\;\;\partial_y\mathcal{ESC}(y)<0 \}$.
    Finally, we set $\underline{y}=\sup\{I_+\}$ and $\overline{y}=\inf\{I_-\}$.
    Clearly, it holds $\underline{y}\le \overline{y}$.
    We now show that every $y$ such that $\underline{y}\le y\le  \overline{y}$ is a solution to Problem \ref{problem:fac1}.
    If $\underline{y}< \overline{y}$ it means that $\partial_y\mathcal{ESC}(y)=0$ for every $y\in(\underline{y},\overline{y})$, thus every $y$ is a minimizer.
    Since $\mathcal{ESC}$ is continuous, we conclude that every $y\in [\underline{y},\overline{y}]$ is a minimizer.
    Finally, if $\underline{y} = \overline{y}=y^*$, we have that $\mathcal{ESC}$ is decreasing for every $y\le y^*$ and increasing for every $y\ge y^*$, which concludes the proof.
    Lastly, if none of the agents' positions $x_i$ are an optimal solution, it means that $\Flam(x_i)\neq \frac{1}{2}$ for every $i\in[n_r]$.
    It means that $F_\mu(y)>0$ for every $y$ that is optimal, hence $y$ belongs to the support of $\mu$.
    The uniqueness follows from the fact that $F_\mu$ is bijective.
\end{proof}

\begin{proof}[Proof of Theorem \ref{thm:setoffeasibleopt}]
    Let $\vec x$ be the vector containing the reports of the agents.
    If there exists $x_i$ such that $y^*=x_i$ is a solution to Problem \ref{problem:fac1}, we have nothing to prove.
    Let us then consider the case in which none of the agents reports is optimal.
    Let us denote with $y$ the optimal solution.
    By Theorem \ref{thm:problem_opt}, we have that $y\in spt(\mu)$ and that
    \begin{equation}
        \label{eq:card_discreteproof}
        \frac{\#\{x_i\le y\}}{n}+\frac{n_u}{n}F_\mu(y)=\frac{1}{2}.
    \end{equation}
    Denoted with $k$ the cardinality of the set $\{x_i\le y\}$, we infer that $y$ must satisfy the following identity 
    \[
        F_\mu(y)=\frac{n-2k}{2n_u}=\frac{2(\frac{n+1}{2}-k)-1}{2n_u},
    \]
    where $k\le\frac{n-1}{2}$.
    The proof follows, by adopting the change of variable $s=(\frac{n+1}{2}-k)$.
\end{proof}

\begin{proof}[Proof of Corollary \ref{cor:essentialpoints}]
    It follows from the argument used to prove Theorem \ref{thm:setoffeasibleopt}.
    Indeed, since the cardinality of the set $\{x_i\le y\}$ in \eqref{eq:card_discreteproof} is at most $n_r$, it means that $F_\mu(y)$ has to be larger than $\frac{n-2n_r}{2n_u}$.
    By enforcing this further restriction to the set $\{F^{[-1]}_\mu(\frac{2s-1}{2n_u})\}_{s\in[n_u]}$, we conclude the proof.
\end{proof}

\begin{proof}[Proof of Theorem \ref{them:truthfulPQM}]
    It follows from the fact that, for every given $\mu$, the routine of the $\PQM_{\vec q}$ is the same as the routine of a Phantom Peak Mechanism, which is truthful and anonymous \cite{moulin1980strategy}.
\end{proof}

\begin{proof}[Proof of Theorem \ref{thm:SAR0inf}]
Let $\vec x\in\erre^{n_r}$ be the vector containing the agents' reports.
Without loss of generality, we assume that $x_i\neq x_j$ for every couple of indexes $i\neq j$ and that $\vec x$ is ordered increasingly, that is $x_i<x_j$.
Furthermore, we denote with $y$ the output of the median mechanism, so that $y=x_{\floor{\frac{n_r+1}{2}}}$.
Finally, given $\mu$, let us denote with $y^*\in\erre$ the element that minimizes \eqref{eq:defproblemonefacility}. 
Without loss of generality, assume that $y^*=0$; moreover, since the other case is symmetric, we assume that $y<y^*=0$.
To compute the SAR of the median mechanism, we first provide an upper bound on the SAR (Step 1) and then build a sequence of instances such that the ratio of the ESC attained by the mechanism and the optimal ESC converges to the upper bound we obtained (Step 2).

\textbf{Step 1.} 
First, we provide an upper bound on the the ESC attainable by the median mechanism.
Let us fix a distribution $\mu$, owing to the triangular inequality and recalling that $y=M(\vec x)$, we have that
\begin{align}
    \nonumber\ESC_M(\vec x,\mu)&=\sum_{i=1}^{n_r}|x_i-y|+n_u\EE_{X\sim \mu}[|X-y|]\le \sum_{i=1}^{n_r}|x_i-y|+n_u\Big(\EE_{X\sim \mu}[|X|]+\EE_{X\sim \mu}[|y|]\Big)\\
    &=\sum_{i=1}^{n_r}|x_i-y|+n_u\Big(\EE_{X\sim \mu}[|X|]+|y|\Big).
\end{align}
So that, since $y^*=0$, we have that
\begin{equation}
\label{eq:SAR:ratio0inf}
    \frac{\ESC_M(\vec x ;\mu)}{\ESC_{opt}(\vec x;\mu)}\le \frac{\sum_{i=1}^{n_r}|x_i-y|+n_u\Big(\EE_{X\sim \mu}[|X|]+|y|\Big)}{\sum_{i=1}^{n_r}|x_i|+n_u\EE_{X\sim \mu}[|X|]}\le \frac{\sum_{i=1}^{n_r}|x_i-y|+n_u|y|}{\sum_{i=1}^{n_r}|x_i|}.
\end{equation}
Lastly, we upper bound the ratio in \eqref{eq:SAR:ratio0inf} by setting $x_i=y$ if $i\le \floor{\frac{n_r+1}{2}}$ and $x_i=0$ otherwise.
Notice that $M(y,y,\dots,y,0,\dots,0)=y$ by definition, moreover we have that
\begin{enumerate}
    \item changing the position of every agent located between $y$ and $y^*=0$ to $0$ increases the cost of the mechanism and decreases the optimal cost, resulting in a higher ratio in \eqref{eq:SAR:ratio0inf}.
    \item Changing the position of every agent $x_i\le y$ to $y$, we decreases both the numerator and the denominator of \eqref{eq:SAR:ratio0inf} by the same quantity, that is $\sum_{i=1}^{\floor{\frac{n_r-1}{2}}}|x_i-y|$, thus increasing the ratio.
    \item Similarly, moving the position of every agent $x_i\ge y^*=0$ to $0$ increases the ratio in \eqref{eq:SAR:ratio0inf}. 
\end{enumerate}
Moreover, owing to Theorem \ref{thm:problem_opt} and to the definition of $M$, neither the optimal location of the facility nor the facility position returned by the mechanism change position, which allows us to conclude that
\begin{equation}
\label{eq:upperboundproofSAR0info}
    \frac{\ESC_M(\vec x;\mu)}{\ESC_{opt}(\vec x;\mu)}\le\frac{\floor{\frac{n_r-1}{2}}|y|+n_u|y|}{\floor{\frac{n_r+1}{2}}|y|}\le\frac{\floor{\frac{n_r-1}{2}}+n_u}{\floor{\frac{n_r+1}{2}}}.
\end{equation}
Since the latter bound does not depend on either $\vec x$ nor $\mu$, we infer $SAR(M)\le\frac{\floor{\frac{n_r-1}{2}}+n_u}{\floor{\frac{n_r+1}{2}}}$.
\textbf{Step 2.}
We now conclude the proof by building a sequence of instance of the FLPAA such that the ratio in \eqref{eq:SAR:ratio0inf} converges to the upper bound \eqref{eq:upperboundproofSAR0info}.
The sequence of instances is composed by a vector containing the agents reports and an probability measure.
Let us denote with $\vec x$ the instance in which $\floor{\frac{n_r+1}{2}}$ agents are located at $y$ and the remaining agents are located at $0$, to these agents reports, we pair the measure $\mu_\ell$.
We now build the sequence of probability measures $\mu_\ell$.
Given $\ell\in\enne$ and given $\mu$ the probability measure from Step 1, we  define $\mu_\ell$ as the probability measure induced by the density $\ell\rho_\mu(\ell\,x)$, where $\rho_\mu$ is the density function associated with $\mu$.
Owing again to Theorem \ref{thm:problem_opt}, for every $\ell\in\enne$, $y^*=0$ remains an optimal facility location.
From a simple computation, we infer that
\begin{equation}
    \EE_{X\sim\mu_\ell}[|X|]=\int_\erre |x|\ell\rho_\mu(\ell\,x)dx=\int_\erre\Big|\frac{t}{\ell}\Big|\rho_\mu(t)dt=\frac{1}{\ell}\int_\erre|t|\rho_\mu(t)dt,
\end{equation}
where the second equality follows by applying the change of variable $t=\ell\,x$.
Through a similar argument, we have that
\begin{equation}
    \EE_{X\sim\mu_\ell}[|X-y|]=\int_\erre |x-y|\ell\rho_\mu(\ell\,x)dx=\int_\erre\Big|\frac{t}{\ell}-y\Big|\rho_\mu(t)dt.
\end{equation}
Since $\int_\erre|x|\rho_\mu(x)dx<+\infty$, we can apply Lebesgue's theorem to conclude that
\begin{equation}
    \lim_{\ell\to\infty}\EE_{X\sim\mu_\ell}[|X|]=0 \quad\quad\quad \text{and} \quad\quad\quad \lim_{\ell\to\infty}=\EE_{X\sim\mu_\ell}[|X-y|]=|y|.
\end{equation}
We then conclude that
\begin{equation}
    \lim_{\ell\to\infty}\frac{\ESC_M(\vec x;\mu_\ell)}{\ESC_{opt}(\vec x;\mu_\ell)}=\frac{\floor{\frac{n_r-1}{2}}|y|+n_u\EE_{X\sim\mu_\ell}[|X-y|]}{\floor{\frac{n_r+1}{2}}|y|+n_u\EE_{X\sim\mu_\ell}[|X|]}=\frac{\floor{\frac{n_r-1}{2}}+n_u}{\floor{\frac{n_r+1}{2}}},
\end{equation}
which allows us to conclude the proof.
\end{proof}

\begin{proof}[Proof of Theorem \ref{thm:app}]
    Let us consider the instance $\vec x$ defined as $x_i=-1$ if $i\le \frac{n_r+1}{2}$ and $x_i=0$ otherwise.
    To prove our lower bound, we consider a sequence of uniform probability distributions.
    Let us denote with $M$ a truthful mechanism and with $y$ the output of $M$ given $\vec x$ in input.
    Without loss of generality, we assume that $y\in\{-1,0\}$.
    Indeed, if $y\notin\{-1,0\}$, we can use the truthfulness of $M$ to move either all the agents located at $-1$ or all the agents located at $0$ to $y$.
    Thus, up to a scale change, the argument we present next would hold in the same way. 

    \textbf{Case $y=-1$.}
    For every $\ell\in\enne$, we denote with $\mu_\ell$ the uniform probability distribution over the set $[-\frac{1}{2\ell},\frac{1}{2\ell}]$, so that $\mu_\ell=\mathcal{U}_{[-\frac{1}{2\ell},\frac{1}{2\ell}]}$.
    Notice that for every $\ell$, the optimal solution to the problem, namely $y_\ell^*$ belongs to the support of $\mu_\ell$, thus $y_\ell^*\in[-\frac{1}{2\ell},\frac{1}{2\ell}]$.
    In particular, we infer that the optimal solution converges to $y^*=0$ as $\ell$ approaches infinity.
    Moreover, we have that
    \begin{equation}
        \EE_{X\sim\mu_\ell}[|X-y_\ell^*|]\le \EE_{X\sim\mu_\ell}[|X|] + \EE_{X\sim\mu_\ell}[|y_\ell^*|]\le \EE_{X\sim\mu_\ell}[|X|] +\frac{1}{\ell},
    \end{equation}
    therefore $\lim_{X\sim\mu_\ell}\EE_{X\sim\mu_\ell}[|X-y_\ell^*|]=0$.
    We then conclude that
    \begin{equation}
        \lim_{\ell\to\infty}\frac{\ESC_M(\vec x;\mu_\ell)}{\ESC_{opt}(\vec x;\mu_\ell)}=\frac{\floor{\frac{n_r-1}{2}}|y|+n_u\EE_{X\sim\mu_\ell}[|X+1|]}{\floor{\frac{n_r+1}{2}}|y|+n_u\EE_{X\sim\mu_\ell}[|X-y^*_{\ell}|]}=\frac{\floor{\frac{n_r-1}{2}}+n_u}{\floor{\frac{n_r+1}{2}}},
    \end{equation}
    which concludes the proof for this case.

    \textbf{Case $y=0$.} 
    In this case, we consider the same sequence $\mu_\ell$ used to handle the case in which $y=-1$ but translated to include $-1$, that is $\mu_\ell=\mathcal{U}_{[-1-\frac{1}{2\ell},-1+\frac{1}{2\ell}]}$.
    By repeating the argument used in the previous case, we obtain a higher ratio than the one obtained in Case 1, which concludes the proof.
\end{proof}

\begin{proof}[Proof of Theorem \ref{thm:opt_discretemech}]
    The truthfulness of $\PQM_{\vec q}$ follows from Theorem \ref{them:truthfulPQM}.
    We now prove that the output of $\PQM_{\vec q}$ is always an optimal solution.
    We have two cases: either $\PQM_{\vec q}(\vec x)=f_s$ or $\PQM_{\vec q}(\vec x)=x_l$ for some $s\in[n_r]$ or $l\in[n_u]$, respectively.
    For the sake of simplicity, we assume that $n=n_u+n_r$ is odd and that no agent reports the position of the quantile and couple of agents report the same value, i.e. $x_i\neq x_j$ for every $i\neq j$.
    The proof can easily be extended to the case in which $n$ is even and/or two or more agents report the same position.
    {\textbf{Case 1: $\PQM_{\vec q}(\vec x)=f_s$}.}
    Let us set $s^c=\frac{n+1}{2}-s$.
    By definition of the mechanism, we have that $|\{x_i\le f_s\}|=s^c$, then we have that
    \begin{align*}
        \partial_y \mathcal{ESC}(\vec x; f_s, \mu)&=2\frac{s^c}{n}+(1-\lambda)(2F_\mu(f_s))-1=2\frac{s^c}{n}+(1-\lambda)\frac{2s-1}{n_u}-1\\
        &=\frac{2s^c}{n}+\frac{2s-1}{n}-1=\frac{n+1-2s}{n}+\frac{2s-1}{n}-1=0,
    \end{align*}        
    where we used the identity $1-\lambda=\frac{n_u}{n}$.
    Since $\partial_y\mathcal{ESC}(\vec x; f_s, \mu)=0$, we have that $f_s$ the optimal solution to the problem.
    {\textbf{Case 2: $\PQM_{\vec q}(\vec x)=x_l$}.}
    First, let us assume that there exists $s\in [n_u-1]$ such that $f_s\le x_l\le f_{s+1}$.
    By definition of the mechanism, we have that $s+l=\frac{n+1}{2}$, that is $s=\frac{n+1}{2}-l$.
    In this case, we have that
    \begin{align*}
        \partial_y \mathcal{ESC}(\vec x ;x_l,\mu)&=\frac{2l}{n}+(1-\lambda)(2F_\mu(x_l))\ge \frac{2l}{n}+(1-\lambda)\frac{n+1-2l-1}{n_u}-1\\
                                    &=\frac{2l}{n}+\frac{n+1-2l-1}{n}-1=0.
    \end{align*}
    Let us now consider $\epsilon>0$ such that $x_{l-1}<x_{l}-\epsilon$ and $f_s<x_{l}-\epsilon$, then we have
    \begin{align*}
        \partial_y \mathcal{ESC}(\vec x, x_l-\epsilon;\mu)&=\frac{2l-2}{n}+(1-\lambda)(2F_\mu(x_l-\epsilon))\le \frac{2l-2}{n}+(1-\lambda)\frac{n-2l+2}{n_u}-1\\
                                    &=\frac{2l-2}{n}+\frac{n-2l+2}{n}-1=0.
    \end{align*}
    Thus, from Theorem \ref{thm:problem_opt}, we infer that $x_s$ is optimal, which concludes the proof.
    A similar argument applies to the case in which $x_l<f_1$ or $x_l>f_{n_u}$.
\end{proof}

\begin{proof}[Proof of Theorem \ref{thm:armedianmech}]
    From Theorem \ref{thm:setoffeasibleopt} and \ref{thm:opt_discretemech}, we have that $\PQM_{\vec q}$ is optimal for $n_u\in\{0,1,n\}$.
    %
    %
    Without loss of generality, we assume that the worst-case instance occurs when there are more agents reports on the left of $m$, so that the optimal facility position $y^*$ is on the left of $m$, hence $y^*<m$.
    Given $t\in\erre$ such that $t<m$, let us consider the following instance $x_1=\dots=x_{\frac{n-1}{2}}=t$ and $x_j=m$ for all the other $j\in[n_r]$, so that the mechanism places the facility at $m$.
    We denote with $\vec x$ the vector containing the agents' reports.
    Since $\lambda\ge\frac{1}{2}-\frac{1}{2n}$, this instance is well-defined.
    Without loss of generality, let us now consider a measure $\mu$ whose median is $m$ and such that $F_\mu(\frac{1}{2n_u})=t$.
    Indeed, if $t\neq F_\mu(\frac{1}{2n_u})<m$, as otherwise we can move the agents to $F_\mu(\frac{1}{2n_u})$ without altering the outcome of the mechanism.
    By Theorem \ref{thm:opt_discretemech}, if the aleatory agents are distributed according to $\mu$, the optimal position of the facility for instance $\vec x$ is $t$.
    By definition, $\PQM_{\vec q}$ places the facility at $m$.
    We now individually consider the cost of the mechanism and the optimal cost.
    We first start from the optimal cost.
    We have the following
    \[
        \mathcal{ESC}_{opt}(\vec x;\mu)=\Big(n_r-\frac{n-1}{2}\Big)|t-m|+n_u\EE_{X\sim\mu}[|X-t|].
    \]
    By definition of expected value, we have that
    \begin{align*}
        \EE_{X\sim\mu}[|X-t|]&=\int_{-\infty}^t(t-x)d\mu+\int_t^{+\infty}(x-t)d\mu\\
        &=\int_{-\infty}^t(t-x)d\mu+\int_t^{m}(x-t)d\mu+\int_m^{+\infty}(x-t)d\mu\\
        &=\int_{-\infty}^t(t-x)d\mu+\int_t^{m}(x-t)d\mu+\int_m^{+\infty}(x-m)d\mu+\frac{(m-t)}{2}\\
        &\ge \int_{-\infty}^t(t-x)d\mu+\int_m^{+\infty}(x-m)d\mu+\frac{(m-t)}{2}
    \end{align*}
    since $m$ is the median of $\mu$, thus $1-F_\mu(m)=\frac{1}{2}$.
    Similarly, we handle the mechanism cost, which is
    \[
        \mathcal{ESC}_{\PQM_{\vec q}}(\vec x;\mu)=\frac{n-1}{2}|t-m|+n_u\EE_{X\sim\mu}[|X-m|].
    \]
    Again, due to the definition of expected value, we have that
    \begin{align*}
        \EE_{X\sim\mu}[|X-m|]&=\int_{-\infty}^t(t-x)d\mu+\int_t^{m}(m-x)d\mu+\int_m^{+\infty}(x-m)d\mu+\frac{(m-t)}{2n_u}\\
        &\le \int_{-\infty}^t(t-x)d\mu+\int_m^{+\infty}(x-m)d\mu+\frac{(m-t)}{2}.
    \end{align*}
    Since $(m-x)\le (m-t)$ when $x\in[t,m]$, hence $\int_t^{m}(m-x)d\mu\le (m-t)(\frac{1}{2}-\frac{1}{2n_u})$.
    By combining these two estimations, we infer
    \begin{align*}
        ar_\mu(Med_{\vec m})&\le \frac{\frac{n-1}{2}+n_u(\int_{-\infty}^t\frac{t-x}{m-t}d\mu+\int_m^{+\infty}\frac{x-m}{m-t}d\mu+\frac{1}{2})}{n_r-\frac{n-1}{2}+n_u(\int_{-\infty}^t\frac{t-x}{m-t}d\mu+\int_m^{+\infty}\frac{x-m}{m-t}d\mu+\frac{1}{2})} \\
        &\le\frac{\frac{n-1}{2}+\frac{n_u}{2}}{n_r-\frac{n-1}{2}+\frac{n_u}{2}}=\frac{\frac{n-1}{2}+\frac{n_u}{2}}{n_r-\frac{n-1}{2}+\frac{n_u}{2}}
        =\frac{(2-\lambda)n-1}{\lambda n +1}=\frac{2}{\lambda+\frac{1}{n}}-1,
    \end{align*}
    for every $\mu\in\PP(\erre)$.
    Thus, we have $SAR(M)\le\frac{2}{\lambda+\frac{1}{n}}-1$.
    To conclude the proof, we show that there exists a sequence of probability distributions $\mu_\ell$ such that
    $\lim_{\ell\to\infty}\frac{\ESC_{\PQM_{\vec q}(\vec x;\mu_\ell)}}{\ESC_{opt}(\vec x;\mu_\ell)}=\frac{2}{\lambda+\frac{1}{n}}-1$.
    Let us consider the instance $\vec x$ in which $x_1=\dots=x_{\floor{\frac{n-1}{2}}}=0$ and $x_i=1$ otherwise.
    We then define $\mu_{\ell}=\frac{\ell}{2n_u}\mathcal{U}_{[-\frac{1}{2\ell},0]}+\frac{\ell(2n_u-1)}{2n_u}\mathcal{U}_{[1-\frac{1}{2\ell},1]}$.
    By repeating the argument of Step 2 in Theorem \ref{thm:SAR0inf}, we conclude the proof. 
\end{proof}

\begin{proof}[Proof of Theorem \ref{thm:LBmedian}]
    Let $m$ denote the median of $\mu$.
    We first study the case $\lambda\le \frac{1}{3}$ and then the case $\lambda\ge \frac{1}{3}$.
    Let $\lambda\in[0,\frac{1}{3}]$.
    We consider the instance $x_1=\dots=x_{n_r}=t$, where $t\in\erre$ is such that $t<m$.
    Let us denote with $y$ the position returned by the mechanism on this instance.
    Without loss of generality, we assume that $y\in[t,m]$.
    Indeed, if $y\notin [t,m]$ the approximation ratio of the mechanism would be higher than what it achieves by placing $y\in[t,m]$.
    Moreover, due to the truthfulness of the mechanism, we can assume that the output of a mechanism is either $m$ or $t$.
    Indeed, if $y\neq t$, then the mechanism places the facility at $t$ if all the agents placed at $t$ move to $y$.
    If the mechanism places the facility at $t$ we consider a sequence of measures $\mu_\ell$ that concentrates all the mass at $m$, such as $\mu_\ell=2\ell\,\mathcal{U}_{[m-\frac{1}{\ell},m+\frac{1}{\ell}]}$.
    In this case, the limit of the ratios between the mechanism cost and the optimal cost is
    \[
        ar_\mu(M)=\frac{n_u}{n_r}=\frac{1-\lambda}{\lambda}=1+\frac{1-2\lambda}{\lambda}.
    \]
    If the mechanism places the facility at $m$, we consider a sequence of probability distribution $\mu_\ell$ defined as $\mu_\ell=\ell\mathcal{U}_{[t-\frac{1}{\ell},t+\frac{1}{\ell}]}+ell\mathcal{U}_{[m-\frac{1}{\ell},m+\frac{1}{\ell}]}$.
    In this case, the optimal location for the facility is $t$, thus the limit of the ratios between the mechanism cost and the optimal cost is
    \[
        ar_\mu(M)=\frac{(n_r+\frac{n_u}{2})|t-m|}{\frac{n_u}{2}|m-t|}=1+2\frac{\lambda}{1-\lambda}.
    \]
    We notice that $2\frac{\lambda}{1-\lambda} \le \frac{1-2\lambda}{\lambda}$ whenever $\lambda\in[0,\frac{1}{3})$, thus we infer that
    \[
        ar_\mu(M)\ge 1+2\frac{\lambda}{1-\lambda},
    \]
    that is $SAR(M)\ge 1+2\frac{\lambda}{1-\lambda}$, which concludes the proof.
    Let us now consider the case in which $\lambda>\frac{1}{3}$.
    Notice that owing to Theorem \ref{thm:opt_discretemech}, if $n_r=n,n-1$, there is an optimal mechanism, so we tacitly assume that $n_r\le n-2$.
    To prove this lower bound, we need to consider an instance in which part of the agents are located at $t<m$ and the remaining agents are placed at $m$.
    We denote with $l$ (as for \textit{left agents}) the number of agents placed at $t$, so that $r=n_r-l$ (as for \textit{right agents}) is the amount of agents placed at $m$.
    Since we want that the optimal solution of the problem lies between $t$ and $m$, we assume that $l>r$.
    Owing to the truthfulness of the mechanism, we need to consider only the cases in which the mechanism places the facility at $t$ or $m$.
    First, we consider the case in which $y=t$.
    In this case, we consider a sequence of probability distributions $\mu_\ell$ that concentrate all the probability at $m$, as, for example $\mu_\ell=2\ell\mathcal{U}_{[m-\frac{1}{\ell},m+\frac{1}{\ell}]}$.
    The median of each $\mu_\ell$ is $m$, so the mechanism does not change its output.
    By the same argument adopted in Step 2 during the proof of Theorem \ref{thm:SAR0inf}, we have that 
    \[
    \lim_{\ell\to\infty}ar_{\mu_\ell}(M)\ge  \frac{n-l}{l}=\frac{n}{l}-1.
    \]
    We then study the case in which $y=m$.
    In this case, we want to define a sequence of probability distributions $\mu_\ell$ that concentrates half probability to $t$ and half at $m$, so that for every $\ell\in\enne$, the median of $\mu_\ell$ remains $m$.
    Consider, for example, $\mu_{\ell}=\frac{\ell}{2}\mathcal{U}_{[-\frac{1}{\ell},0]}+\frac{\ell}{2}\mathcal{U}_{[1-\frac{1}{\ell},1]}$.
    Again, by taking the limit for $\ell\to\infty$, the ratio of the mechanism cost and the optimal cost converges to 
    \[
    \frac{2l+n_u}{2(n_r-l)+n_u}=\frac{2n}{2n-(2l+n_u)}-1.
    \]
    For every $l$ we then have
    \[
    SAR(M)\ge \min\Big\{\frac{n}{l},\frac{2n}{2n-(2l+n_u)}\Big\}-1
    \]

    We now tune $l$ in order to make the bound on $SAR(M)$ as large as possible.
    To do that, we impose $\frac{n}{l}=\frac{2n}{2n-(2l+n_u)}$, that is $l=n-l-\frac{n_u}{2}$, thus $l=\frac{2n-n_u}{4}$.
    Notice that $l$ has to be an integer, thus we consider $l=\ceil{\frac{2n-n_u}{4}}$, from which we infer that $ar_\mu(M)\ge \frac{n}{\ceil{\frac{2n-n_u}{4}}}-1$ for every $\mu$, thus we infer $SAR(M)\ge \frac{n}{\ceil{\frac{2n-n_u}{4}}}-1$.
    If $\frac{2n-n_u}{4}\in\enne$, we have that
    \[
        \frac{4n}{2n-n_u}-1= \frac{4}{2-(1-\lambda)}-1=\frac{4}{1+\lambda}-1,
    \]
    which concludes the proof.
\end{proof}

\begin{proof}[Proof of Theorem \ref{thm:PQMar}]
    First, we keep $\vec q\in[0,1]^k$ fixed and study the SAR of $\PQM_{L(\vec q)}$.
    By the same argument used to prove Theorem \ref{thm:armedianmech}, we have that the worst-case instance occurs when the optimal position of the facility is $f_{k}\in\RRn$, while the mechanism returns a different position namely $f_{k'}$.
    We denote with $q_k$ and $q_{k'}$ the values for which it holds $f_k=F_{\mu}^{[-1]}(q_k)$ and $f_{k'}=F_{\mu}^{[-1]}(q_{k'})$, respectively.
    Let us set $\Delta_q=|q_k-q_{k'}|$.
    Since the other case is symmetric, we assume that $f_{k'}<f_{k}$, hence $q_{k'}<q_{k}$.
    Let us then consider $\vec x$ such that the optimal location for the facility is $f_k\in\RRn$ and $\PQM_{L(\vec q)}(\vec x)=f_{k'}$.
    We recall that the optimal location $f_k$ and $\PQM_{L(\vec q)}(\vec x)=f_{k'}$ are the median of $(\vec x,\vec f)$\footnote{We recall that $\vec f=(f_1,\dots,f_{n_u})$, where $f_j=F_\mu^{[-1]}(\frac{2j-1}{2n_u})$} and $(\vec x, L(\vec q))$, respectively.
    In particular, we can assume that if $x_i\le f_k$ then $x_i\le f_{k'}$ as well, as otherwise we can move any agent whose position $x_i\in(f_{k'},f_k)$ to $f_k$ without altering the output of $\PQM$ and the optimal position and while increasing the ratio between the mechanism and optimal costs.
    Likewise, the ratio increases if we move all the agents on the left of $f_{k'}$ to $f_{k'}$ and all the agents on the right of $f_k$ to $f_k$.
    Since $f_k$ is optimal, we must have that $\Flam(f_k)\ge \frac{1}{2}$, while, by definition of $f_{k'}$, we have that $\Flam(f_{k'})\le \frac{1}{2}-(1-\lambda)\Delta_q$.
    By the same argument used in the proof of Theorem \ref{thm:armedianmech}, we have that the ratio between the mechanism cost and the optimal cost increases when the measure $\mu$ concentrate as much probability as possible at $f_{k}$ and the remaining at $f_{k'}$.
    We can then build a sequence of probability measures $\mu_\ell$ such that
    \begin{equation}
        \label{eq:optESCproofarkquant}
        \lim_{\ell\to\infty}\ESC_{opt}(\vec x;\mu_\ell)=\Big(\frac{1}{2}-(1-\lambda)\Delta_q\Big)|f_k-f_{k'}|
    \end{equation}
    and
    \begin{equation}
        \label{eq:mechESCproofarkquant}
        \lim_{\ell\to\infty}\ESC_{\PQM}(\vec x;\mu_\ell)=\Big(\frac{1}{2}+(1-\lambda)\Delta_q\Big)|f_k-f_{k'}|
    \end{equation}
    thus the ratio is equal to
    \begin{equation}
    \label{eq:rationonsup}
        \frac{\frac{1}{2}+(1-\lambda)\Delta_q}{\frac{1}{2}-(1-\lambda)\Delta_q}=1+\frac{4(1-\lambda)\Delta_q}{1-2(1-\lambda)\Delta_q}.
    \end{equation}
    Lastly, we notice that the function \eqref{eq:rationonsup} is increasing with respect to $\Delta_q$, thus the approximation ratio is equal to the maximum amongst the possible values of $\Delta_q$, that is $\Dnn(L(\vec q))$, which concludes the first part of the proof.
    To conclude we show that $\PQM_{L(\vec q)}$ is the optimal PQM given $\vec q$.
    Since the approximation ratio of $\PQM_{\vec w}$ (see \eqref{eq:ar_fixedq}) is increasing with respect to $\Dnn(\vec w)$, the mechanism with the lowest approximation ratio is induced by the vector with the smallest $\Dnn(\vec w)$.
    Given $\vec q$, we have that, by definition of $L$, the vector that minimizes $\Dnn$ is $L(\vec q)$, which concludes the proof.
\end{proof}

\begin{proof}[Proof of Theorem \ref{thm:optvecqklenu}]
    By definition of $\Dnn$ and $L$, the optimal $\vec q\in[0,1]^k$ is a solution to the following problem
    \begin{equation}
        \min_{\vec q\in [0,1]^k}\max_{j\in\RRn}\min_{i\in[k]}\Big|q_i-\frac{2j-1}{n_u}\Big|.
    \end{equation}
    Notice that any $\vec q\in [0,1]$ divides the set $\RRn$ into $k$ sets, namely $A_i$ where $i\in[k]$, where $A_i=\{j\in\RRn\quad\text{s.t.}\quad|q_i-\frac{2j-1}{2n_u}|=\min_{i\in[k]}|q_i-\frac{2j-1}{2n_u}|\}$.
    Notice that each $A_i\cap A_l=\emptyset$ for every $i\neq l$, while $\cup_{i\in[k]}A_i=\RRn$.
    Then, the $\vec q$ that minimizes $\Dnn(L(\vec q))$ and induces the same partition $A_i$ is $q_i=mid(A_i)$, where $mid(A_i)$ is the middle point of $A_i$, moreover, we have $\max_{j\in\RRn}\min_{i\in[k]}|q_i-\frac{2j-1}{n_u}|=\max_{i\in[k]}|\max(A_i)-mid(A_i)|$.
    Since the elements in $\RRn$ are equi-distanciated, the partition $A_i$ that minimizes $\max_{i\in[k]}|\max(A_i)-mid(A_i)|$ divides the set $\RRn$ into $k$ sets as evenly as possible.
    In particular, if $R=sk+l$ with $s,l\in\enne$ and $l<R$, the optimal partition is composed of $l$ sets with $s+1$ elements and $R-l$ sets with $s$ elements.
    Lastly, if $\RRn=\{\frac{2j-1}{2n_u}\}_{j\in[n_u]}$, we have that $\hat{A}_i=\{\frac{2j-1}{2n_u}\}_{(s+1)(i-1)+1\le j\le (s+1)i}$ if $i\le l$ and $\hat{A}_i=\{\frac{2j-1}{2n_u}\}_{l(s+1)+s(i-s-1)+1\le j\le (s+1)i+s(i-s)}$ for all the other $i\in[k]$ is an optimal partition of $\RRn$.
    The formula of the optimal percentile vector $\vec q$ is obtained by computing the middle point of each $\hat{A}_i$.
\end{proof}

\begin{proof}[Proof of Theorem \ref{thm:lbkwhatever}]
Let $k$ be the number of equi-distanced quantiles the mechanism designer has access to.
We divide our study into two cases, depending on whether $k$ is even or odd.
First, we consider the case in which $k$ is even.
From our hypothesis, we infer that $q_{\frac{k}{2}}=\frac{1}{2}-\frac{1}{2k}$ and $q_{\frac{k}{2}+1}=\frac{1}{2}+\frac{1}{2k}$.
In particular, $|q_{\frac{k}{2}}-q_{\frac{k}{2}+1}|=\frac{1}{k}$.
Without loss of generality, we assume that $F_\mu^{[-1]}(q_{\frac{k}{2}})=0$ and $F_\mu^{[-1]}(q_{\frac{k}{2}+1})=1$.
Moreover, $F_\mu^{[-1]}(q_i)\in(-\epsilon,0)$ if $i\le\frac{k}{2}$ and $F_\mu^{[-1]}(q_i)\in(1,1+\epsilon)$ for every other $i$, where $\epsilon>0$ is a arbitrarily small parameter. \footnote{This assumption is only necessary to stick with the basic assumptions described in the main body of the paper. The reader can assume that $F_\mu^{[-1]}(q_1)=\dots=F_\mu^{[-1]}(q_{\frac{k}{2}})$ and $F_\mu^{[-1]}(q_{\frac{k}{2}})=\dots=F_\mu^{[-1]}(q_k)$ and the proof would still hold.}
Let us consider $\vec x$ in which the agents are split evenly between $0$ and $1$, that is $x_1=\dots=x_{\ceil{\frac{n_r}{2}}}=0$ and $x_i=1$ otherwise.
Let $y$ be the position at which a truthful mechanism $M$ places the facility on this instance.
Without loss of generality, we assume that $y\in[0,1]$.
It is worthy of notice that we cannot restrict to the case $y=0$ or $y=1$, as altering the position of the quantiles alters the outcome of the mechanism $M$.
For the sake of argument, let us consider $n_r$ to be even, if $n_r$ is odd the proof follows by a similar argument.
If $n_r$ is even, there are the same amount of reporting agents at $0$ and at $1$.
Depending on the position of $y$, we use the truthfulness of $M$ to move all the agents at $0$ to $y$ or all the agents at $1$ to $y$.
If we move all the agents from $0$ to $y$, we have to consider a sequence of probability measures $\mu_\ell$ that assigns all the probability on $(0,1)$ (which we recall is $\frac{1}{k}$) to $1$, so that the optimal facility position is $1$.
Vice-versa, if we move all the agents at $1$ to $y$, we consider a sequence that concentrate all the probability at $0$.
In both cases, we compute the ratio between the mechanism cost and the optimal cost and decide whether to move the agents from $0$ to $y$ or the agents from $1$ to $y$ depending on which actions leads to the highest ratio.
To determine the lower bound, we then need to select the $y$ that minimizes the maximum ratio achievable in this way.
Due to the symmetry of the instance, we have that the best possible position at which the mechanism can place the facility is $y=\frac{1}{2}$.
In which case, we have that
\[
    \lim_{\ell\to\infty}\frac{\ESC_M(\vec x;\mu_\ell)}{\ESC_{opt}(\vec x;\mu_\ell)}=\dfrac{\frac{n+n_u}{2}}{\frac{n+n_u}{2}-\frac{n_u}{k}}=1+\frac{2\sigma}{n+n_u-2\sigma}.
\]
therefore $SAR(M)\ge 1+\frac{2\sigma}{n+n_u-2\sigma}$.
Through a similar argument, we infer that, when $n_r$ is odd, we have
\[
SAR(M)\ge 1+\frac{2\sigma-1}{n+n_u+1-2\sigma},
\]
which concludes the proof for the case in which $k$ is even.
We now consider the case in which $k$ is odd.
In this case, we have that $q_{\frac{k+1}{2}}=\frac{1}{2}$ and $q_{\frac{k-1}{2}}=\frac{1}{2}-\frac{1}{k}$.
As for the case in which $k$ is even, we assume that $F_\mu^{[-1]}(q_{\frac{k}{2}})=0$ and $F_\mu^{[-1]}(q_{\frac{k}{2}+1})=1$.
Moreover, $F_\mu^{[-1]}(q_i)\in(-\epsilon,0)$ if $i\le\frac{k}{2}$ and $F_\mu^{[-1]}(q_i)\in(1,1+\epsilon)$ for every other $i$,  where $\epsilon>0$ is a arbitrarily small parameter. 
In this case, we consider an instance in which the agents are some agents at $0$ and the remaining at $1$.
We denote with $l$ the set of agents at $0$ and with $r=n_r-l$ the number of agents at $1$.
To retrieve the optimal value of $l$, we use the same argument used in the proof of Theorem \ref{thm:LBmedian}.
Therefore we look for the value $l$ that maximizes the ratio between the optimal and the mechanism cost when the mechanism is restricted to place the facility at either $0$ or $1$.
Following the same argument used in the proof of Theorem \ref{thm:LBmedian}, we have that $l$ maximizes the ratio if
\[
    \frac{n}{\frac{n_u}{2}+n_r-l}=\frac{n}{l+(\frac{1}{2}-\frac{1}{k})n_u},
\]
hence $l=\frac{n_r-\frac{n_u}{k}}{2}$.

Let us now denote with $y\in[0,1]$ the position at which the truthful mechanism places the facility on this instance.
For every $y$, we use truthfulness to move either the agents at $0$ or the agents at $1$ to $y$ and define a sequence of probability measures $\mu_\ell$ that makes the ratio as large as possible.
To prove our lower bound, we then need to find the position $y$ that decreases the maximum ratio attainable after moving the agents and taking the limit.
Even though retrieving the optimal $y$ it is hard, we can still provide a lower bound on the SAR thanks to the following argument.
Given $y$, we denote with $\mathcal{L}(y)$ the maximum ratio attainable by moving all the agents at $0$ to $y$.
This function is decreasing in $y$.
Indeed, if we move all the agents from $0$ to $y$, the sequence of probability distributions we need to consider concentrates all the probability contained in $(0,1)$ at $1$, therefore
\begin{equation}
\label{eq:derivablefunction}
    \mathcal{L}(y)= 1 +\dfrac{\frac{3n_u}{k}(1-y)}{(\frac{1}{2}-\frac{1}{k})n_u+(\frac{n_r}{2}-\frac{n_u}{2k})(1-y)}.
\end{equation}
%
By taking the derivative of \eqref{eq:derivablefunction}, obtain a function that is always negative, which allows us to conclude that $\mathcal{L}$ is non-increasing.

Similarly, we denote with $\mathcal{R}(y)$ the maximum ratio attainable by moving all the agents at $1$ to $y$.
By a similar argument, we have that $\mathcal{R}$ is increasing, thus the optimal $y$ is the point at which the graphs of $\mathcal{L}$ and $\mathcal{R}$ do intersect.
We can then retrieve a lower bound on the SAR of the truthful mechanism by fixing $y$ and by considering $\min\{\mathcal{L}(y),\mathcal{R}(y)\}$.
If we fix $y=\frac{1}{2}$, we have that, since $l\le n_r-l$, $\min\{\mathcal{L}(y),\mathcal{R}(y)\}=\mathcal{L}(y)$.
We therefore conclude that
\begin{equation}
    SAR(M)\ge 1+\dfrac{6\frac{n_u}{k}}{n+n_u-5\frac{n_u}{k}}=1+\dfrac{6\sigma}{n+n_u-5\sigma},
\end{equation}
which concludes the proof.
\end{proof}

\section{Extension to Two Facilities}

In this appendix, we extend our study to the case in which we have two facilities to place.

\subsection{Setting Statement}

We now consider the case in which we have two facilities with capacity $c$ to place, so that the total amount of agents the facilities can serve is $n=2c$.
We denote with $n_r\le n=2c$ the number of agents reporting their position to the mechanism.
Since we have two facilities with a capacity limit, the mechanism must elicit the positions of the facilities and then coordinate the agents by determining an agent-to-facility matching \cite{aziz2020facility}.
This ensures that no facility is overloaded with agents.
Given the agent report $\vec x$, let $y_1$ and $y_2$ the positions of the facilities and $\gamma$ be the agent-to-facility matching.
We denote with $n_u^{(1)}$ and $n_u^{(2)}$ the spare capacity of the facility at $y_1$ and the spare capacity of the facility at $y_2$, that is $n_u^{(j)}=c-\#\{i\in[n]\;\;\text{such that}\;\;(i,j)\in \gamma\}$.
To coordinate the aleatory agents, we need to determine a function $\fg:\erre\to\{y_1,y_2\}$ that maps the realization of $X\sim \mu$ into the two facilities.
Since $y_1$ has a spare capacity of $n_u^{(1)}$ and the facility at $y_2$ has a spare capacity of $n_u^{(2)}$, we define
\begin{equation}
    \label{eq:optimal_f}
    \fg:x\to
\begin{cases}
    y_1 \quad &\text{if}\quad x\le F_\mu^{[-1]}(\frac{n_u^{(1)}}{n_u})\\
    y_2 \;&\text{otherwise}
\end{cases}.    
\end{equation}

Notice that $\fg$ is well and uniquely defined by $\vec x$, $\vec y=(y_1,y_2)$, $\gamma$, and $\mu$.
Finally, given the agents' reports $\vec x$ we define the expected ex-ante social cost associated with the tuple $(\vec y,\gamma,\mu)$ as 
%
\begin{equation}
\label{eq:ESCttwofac}
    \ESC(\vec x; \vec y, \gamma, \mu)=\sum_{(i,j)\in \gamma}|x_i-y_j|\gamma_{i,j}+n_u\EE_{X\sim \mu}[|X-\fg(X)|].
\end{equation}
Given a vector $\vec x$ containing all the positions of the deterministic agents, the optimal $2$-FLPAA is the tuple $(y_1,y_2,\gamma)$ that minimizes $\ESC$.
We then denote with $\ESC_{opt}(\vec x;\mu)$ the minimum ESC attainable on instance $\vec x$, that is $\ESC_{opt}(\vec x;\mu)=\min_{\vec y,\gamma}\ESC(\vec x; \vec y, \gamma, \mu)$.
Similarly, we denote with $\ESC_{M}(\vec x;\mu)$ the ESC attained by the mechanism $M$, that is $\ESC_{M}(\vec x;\mu)=\ESC(\vec x; M(\vec x), \mu)$, where $M(\vec x)=(\vec y,\gamma)$.

\begin{theorem}
\label{thm:2fac_manyinfers}
    The function $f$ has the following properties:
\begin{enumerate*}[label=(\roman*)]
    \item \label{prop1} If we take $n_u$ samples of $X$, the expected number of agents that $f$ assigns to $y_1$ is $n_u^{(1)}$. Similarly, the expected number of agents that $f$ assigns to $y_2$ is $n_u^{(2)}$, that is $n_u\EE_{X\sim\mu}[\mathbf{Id}_{f(X)=y_j}]=n_u^{(j)}$ for $j=1,2$.
    Therefore the expected number of agents assigned to each facility is equal to the capacity of the facility.
    \item Amongst the functions $f$ that satisfy property \ref{prop1}, $f$ is the one that minimizes the expected ex-ante social cost of the allocation, that is $\EE[|X-f(X)|]\le \EE[|X-g(X)|]$ for every $g$ that satisfies  property \ref{prop1}.
    \item The function $f$ allows us to split $\mu$ into two measures with disjoint support, that is $\mu=F_{\mu}(z)\mu_{\le z} + (1-F_{\mu}(z))\mu_{>z}$, where $z=F_\mu^{[-1]}(\frac{n_u^{(1)}}{n_u})$.
    In particular, the expected ex-ante social cost can be written as 
    %
\end{enumerate*}
\[
\ESC(\vec x; \vec y, \gamma, \mu)=\sum_{(i,j)\in\gamma}|x_i-y_j|+n_u^{(1)}\EE_{X_1\sim\mu_{\le z}}[|X_1-y_1|]+n_u^{(2)}\EE_{X_2\sim\mu_{>z}}[|X_2-y_2|].
\]
\end{theorem}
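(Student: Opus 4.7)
My plan is to exploit the fact that $f_\gamma$ is a threshold rule based on the quantile $z := F_\mu^{[-1]}(n_u^{(1)}/n_u)$, i.e.\ $f_\gamma$ sends every $x \le z$ to $y_1$ and every $x > z$ to $y_2$. I assume without loss of generality that $y_1 \le y_2$ throughout (the reverse case is symmetric up to relabelling). Property (i) is then immediate: since $\mu$ is absolutely continuous, $F_\mu$ is continuous and thus $F_\mu(z) = n_u^{(1)}/n_u$. Hence $\mathbb{P}(f_\gamma(X) = y_1) = F_\mu(z) = n_u^{(1)}/n_u$, giving $n_u \,\mathbb{E}[\mathbf{Id}_{f_\gamma(X) = y_1}] = n_u^{(1)}$; the identity for $y_2$ is analogous.

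For Property (ii), the key step is a rearrangement (swap) argument. For any measurable $g : \mathbb{R} \to \{y_1, y_2\}$ satisfying Property (i), set $A_g := \{x : g(x) = y_1\}$, so $\mu(A_g) = n_u^{(1)}/n_u$. Defining $h(x) := |x - y_1| - |x - y_2|$, I rewrite
\[
\mathbb{E}[|X - g(X)|] = \mathbb{E}[|X - y_2|] + \int h(x)\,\mathbf{Id}_{A_g}(x)\,d\mu(x).
\]
Since $y_1 \le y_2$, the function $h$ is non-decreasing (constant $-(y_2-y_1)$ on $(-\infty, y_1]$, linear on $[y_1,y_2]$, constant $y_2-y_1$ on $[y_2,\infty)$). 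Hence minimizing $\int h\,\mathbf{Id}_A\,d\mu$ over all measurable $A$ with $\mu(A) = n_u^{(1)}/n_u$ is achieved by placing $A$ where $h$ is smallest, namely $A = (-\infty, z] = A_{f_\gamma}$. I would make this precise by decomposing $\int h\,(\mathbf{Id}_{A_g} - \mathbf{Id}_{A_{f_\gamma}})\,d\mu$ over the symmetric differences $A_g \setminus A_{f_\gamma} \subseteq (z,\infty)$ and $A_{f_\gamma} \setminus A_g \subseteq (-\infty, z]$, which have equal $\mu$-measure, then lower-bounding the first contribution by $h(z)\,\mu(A_g \setminus A_{f_\gamma})$ and upper-bounding the second by $h(z)\,\mu(A_{f_\gamma} \setminus A_g)$, so that the difference is non-negative.

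Property (iii) follows by a direct disintegration of $\mu$ along the threshold $z$. Using $\mu = F_\mu(z)\,\mu_{\le z} + (1-F_\mu(z))\,\mu_{> z}$ together with the fact that $|x - f_\gamma(x)|$ equals $|x - y_1|$ on $(-\infty, z]$ and $|x - y_2|$ on $(z,\infty)$ yields
\[
n_u \,\mathbb{E}[|X - f_\gamma(X)|] = n_u F_\mu(z)\,\mathbb{E}_{X_1\sim\mu_{\le z}}[|X_1 - y_1|] + n_u(1-F_\mu(z))\,\mathbb{E}_{X_2\sim\mu_{> z}}[|X_2 - y_2|],
\]
and invoking Property (i) to substitute $n_u F_\mu(z) = n_u^{(1)}$ and $n_u(1-F_\mu(z)) = n_u^{(2)}$ gives the stated formula. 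The main obstacle is the rearrangement argument in Property (ii): recognizing $f_\gamma$ as the solution of a one-dimensional optimal-transport-type problem (minimizing expected distance under a marginal capacity constraint) makes the monotonicity step clean, but one must be careful about measurability and about the possibility of $F_\mu$ having plateaus, which would make $z$ non-unique---the standing absolute-continuity assumption on $\mu$ rules this out and keeps the argument lightweight.
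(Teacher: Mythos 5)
Your proof is correct, and for properties (i) and (iii) it coincides with the paper's argument: (i) is the observation that $\mathbb{P}(f_\gamma(X)=y_1)=F_\mu(z)=n_u^{(1)}/n_u$ by continuity of $F_\mu$, and (iii) is the direct disintegration $\mu=F_\mu(z)\mu_{\le z}+(1-F_\mu(z))\mu_{>z}$ plugged into the cost. The interesting divergence is in property (ii). The paper identifies any admissible $g$ as a transport map from $\mu$ to the two-point measure $\nu=\frac{n_u^{(1)}}{n_u}\delta_{y_1}+\frac{n_u^{(2)}}{n_u}\delta_{y_2}$ and simply cites the classical fact that one-dimensional optimal transport maps for the distance cost are monotone, delegating the whole argument to the optimal-transport literature. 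You instead prove the same inequality from scratch with a rearrangement argument: writing $\mathbb{E}[|X-g(X)|]=\mathbb{E}[|X-y_2|]+\int h\,\mathbf{Id}_{A_g}\,d\mu$ with $h(x)=|x-y_1|-|x-y_2|$ non-decreasing, and comparing $A_g$ with $A_{f_\gamma}=(-\infty,z]$ via the symmetric difference, using that $h\ge h(z)$ on $A_g\setminus A_{f_\gamma}\subseteq(z,\infty)$ and $h\le h(z)$ on $A_{f_\gamma}\setminus A_g$ together with the equality of the two measures. This buys a self-contained, elementary proof at the cost of a little bookkeeping; the paper's route is shorter but leans on an external theorem. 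One small caveat applying to both proofs: property (ii) as stated requires $y_1\le y_2$ (otherwise the threshold map sending the left mass to $y_1$ is not the monotone, hence not the optimal, coupling); your ``relabelling'' remark should be read as adopting the paper's standing convention that $y_1$ is the left facility, since relabelling also swaps $n_u^{(1)}$ and $n_u^{(2)}$ and hence moves the threshold $z$.
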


\begin{proof}
We divide the proof into three points.
    \begin{enumerate}
        \item By definition $\fg(X)=y_1$ if and only if $X\le F_\mu^{[-1]}(\frac{n_u^{(1)}}{n_u})$, thus $P(\fg(X)=y_1)=\frac{n_u^{(1)}}{n_u}$.
        We then infer that $n_u\EE[\#\{X_i\,\text{s.t.}\;\fg(X_i)=y_1\}]=n_u^{(1)}$, which concludes the proof.
        \item Let $g$ be a function such that $\sum_{i=1}^{n_u}\EE[\#\{X_i\,\text{s.t.}\;g(X_i)=y_j\}]=n_u^{(j)}$ for $j\in[2]$.
        Then it must be that $g$ is a transportation map between $\mu$ and the probability measure $\nu=\frac{n_u^{(1)}}{n_u}\delta_{y_1}+\frac{n_u^{(2)}}{n_u}\delta_{y_2}$.
        It is well known that the a map minimizing $\sum_{i=1}^{n_u}\EE[\#\{X_i\,\text{s.t.}\;g(X_i)=y_j\}]=n_u^{(j)}$ must be monotone (see \cite{villani2009optimal}) which concludes the proof of this point.
        \item First, notice that $\mu=F_\mu(z)\mu_{\le z}+(1-F_\mu(z))\mu_{>z}$ for every $z\in \erre$, thus the relation holds for $z=F_\mu^{[-1]}(\frac{n_u^{(1)}}{n_u})$, in which case $\mu=\frac{n_u^{(1)}}{n_u}\mu_{\le z}+(1-\frac{n_u^{(1)}}{n_u})\mu_{>z}$.
        To conclude the proof, we need to plug this relation into \eqref{eq:ESCttwofac}.
    \end{enumerate}
\end{proof}

Given a facility location $\vec y$ and an agent-to-facility matching $\gamma$, the cost of a reporting agent is $c_i(x_i,\vec y)=|x_i-y_j|$, where $(i,j)$ is the unique edge in $\gamma$ adjacent to agent $i$.
The ex-ante cost of an aleatory agent is $c_j(y_1,y_2)=\EE_{X\sim \mu}[|X-\fg(X)|]$, where $\fg$ is defined as in \eqref{eq:optimal_f}.
Before we move to the mechanism design aspects of the $2$-FLPAA, we characterize the optimal solution to the problem.
Like for the one facility case, the optimal solution to the $2$-FLPAA has a closed form.

\begin{theorem}
\label{thm:opt2facilities}
    Given two facilities with capacity $c$ and the agents' reports $\vec x$, let $F_{\lambda,\mu,\vec x}$ be defined as in \eqref{eq:flam}.
    If there exists $z\in\erre$ such that $z=F^{[-1]}_{\lambda,\mu,\vec x}(0.5)$, the optimal solution to the $2$-FLPAA is the tuple $(y_1,y_2,\gamma)$ where
    \begin{enumerate*}[label=(\roman*)]
        \item $y_1=F^{[-1]}_{\lambda,\mu,\vec x}(0.25)$, $y_2=F^{[-1]}_{\lambda,\mu,\vec x}(0.75)$; and
        \item $\gamma=\{(i,j)\}$ is defined as $(i,j)\in\gamma$ if and only if $x_i\le F^{[-1]}_{\lambda,\mu,\vec x}(0.5)$ and $j=1$ or $x_i>F^{[-1]}_{\lambda,\mu,\vec x}(0.5)$ and $j=2$.
    \end{enumerate*}
    If there is no point $z\in\erre$ such that $\Flam(z)=0.5$, $\gamma$ assigns every agent whose position is strictly lower than $z$ to $y_1$, the agents strictly higher than $z$ to $y_2$. 
    The agents at $z$ is assigned to either $y_1$ or $y_2$ depending on which allocation induces the lowest cost.
\end{theorem}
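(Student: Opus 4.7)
The plan is to reduce the two-facility optimization to two decoupled one-facility problems and then invoke Theorem~\ref{thm:problem_opt}. The starting observation is a no-crossing property on the line: because the cost is $|x-y|$ and the facilities are located at $y_1 \le y_2$, the optimal agent-to-facility matching $\gamma$ must be monotone, meaning there exists a split point $z_d$ such that every deterministic agent with $x_i \le z_d$ is routed to $y_1$ and every agent with $x_i > z_d$ is routed to $y_2$. Theorem~\ref{thm:2fac_manyinfers} already gives the analogous statement for the aleatory agents: the optimal $f_\gamma$ sends $X$ to $y_1$ iff $X \le F_\mu^{[-1]}(n_u^{(1)}/n_u)$, so the aleatory split also occurs at a single threshold $z_a$.

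Next I would parameterize solutions by the number $k$ of deterministic agents sent to $y_1$; this fixes $n_u^{(1)} = c-k$ and $n_u^{(2)} = c-(n_r-k)$, and hence the aleatory threshold $z_a = F_\mu^{[-1]}((c-k)/n_u)$. Using the decomposition from point~(iii) of Theorem~\ref{thm:2fac_manyinfers}, the ESC becomes a sum of two independent one-facility FLPAA costs, one on $(-\infty,z_a]$ with $k$ deterministic agents, capacity $c$, and measure $\mu_{\le z_a}$, the other on $(z_a,+\infty)$. For each sub-problem, Theorem~\ref{thm:problem_opt} says the optimal facility location is the median of its associated combined c.d.f.\ $F_{\lambda^{(j)},\mu^{(j)},\vec x^{(j)}}$.

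The key algebraic computation is to show that, on the left half, this sub-problem c.d.f.\ equals $2 F_{\lambda,\mu,\vec x}(t)$ for $t \le z_a$: plugging $\lambda^{(1)} = k/c$, the identity $F_{\vec x^{(1)}}(t) = (n_r/k)F_{\vec x}(t)$, and $F_{\mu_{\le z_a}}(t) = (n_u/(c-k))F_\mu(t)$ telescopes everything into $(n/c)(\lambda F_{\vec x}(t)+(1-\lambda)F_\mu(t))$. Hence the median of the left sub-problem is exactly $F_{\lambda,\mu,\vec x}^{[-1]}(0.25)$, and symmetrically the median of the right sub-problem is $F_{\lambda,\mu,\vec x}^{[-1]}(0.75)$. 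The same rescaling shows that $F_{\lambda,\mu,\vec x}(z_a) = 0.5$ when the two sub-problems are ``glued together'' optimally, which forces $z_d = z_a = z := F_{\lambda,\mu,\vec x}^{[-1]}(0.5)$.

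The main obstacle is proving that the deterministic split $z_d$ and aleatory split $z_a$ coincide at the median $z$ in any optimum; I would argue this by a local exchange: if $z_d < z_a$, moving a deterministic agent at position just above $z_d$ from facility $y_2$ to facility $y_1$ (and a sliver of aleatory mass the other way) strictly lowers the ESC, since the reporting agent is closer to $y_1$ while the exchanged mass is near $z_a$, closer to $y_2$; an analogous exchange handles $z_d > z_a$. Once the common split $z$ is forced, the identity $F_{\lambda,\mu,\vec x}(z)=0.5$ follows because the total number of agents assigned to each facility must equal $c=n/2$, i.e.\ $\lambda F_{\vec x}(z)+(1-\lambda)F_\mu(z)=c/n=1/2$. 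Finally, for the degenerate case when no $z$ satisfies $F_{\lambda,\mu,\vec x}(z)=1/2$ exactly, a flat part of the c.d.f.\ creates an atom of deterministic agents at $z$, and the same decomposition argument shows the only remaining choice is how to route that atom, which is resolved by directly comparing the two costs.
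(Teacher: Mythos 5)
Your proposal is correct and follows essentially the same route as the paper's proof: both establish that the optimal assignment partitions the line into two intervals meeting at the median of $\Flam$ (forced by the equal-capacity constraint $\lambda F_{\vec x}(z)+(1-\lambda)F_\mu(z)=\tfrac{1}{2}$), and then place each facility at the median of its half, yielding the $0.25$ and $0.75$ quantiles. Your write-up is in fact more explicit than the paper's in two places it merely asserts — the exchange argument forcing the deterministic and aleatory thresholds to coincide, and the rescaling identity showing the sub-problem c.d.f.\ equals $2\Flam(t)$ — but these are elaborations of the same argument, not a different one.
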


\begin{proof}
    We first prove the statement for the case in which there exists $z\in\erre$ such that $\Flam(z)=\frac{1}{2}$.
    The proof for the other case follows by a similar but more delicate argument.
    Given $\vec x$, let $(\vec y,\gamma)$ be the optimal facility location.
    First, we notice that $\gamma$ and $\fg$ split $\erre$ into two sets, namely $A_1$ and $A_2$, such that $A_1\cap A_2=\emptyset$ and $A_1\cup A_2=\erre$.
    The partition has the following property if $x_i\in A_1$ then $(i,1)\in\gamma$ and if $x\in A_1$ then $\fg(x)=y_1$, hence $A_1$ is the set of agents that are served by the facility at $y_1$.
    Similarly, $A_2$ is defined in such a way that if $x_i\in A_2$ then $(i,2)\in\gamma$ and if $x\in A_2$ then $\fg(x)=y_2$.
    By definition of $\fg$ and since $\gamma$ respects the capacity limits of the facilities, we have that $\#\{x_i\in A_1\}+n_u\mu(\fg^{(-1)}(A_1))=k$, thus we have $\frac{\#\{x_i\in A_1\}}{n}+(1-\lambda)\mu(\fg^{(-1)}(A_1))=\frac{1}{2}$.
    If $(\vec y,\gamma)$ is optimal, we must have that $A_1$ and $A_2$ are connected intervals.
    Since the capacity of the facilities are equalt, it must be that $A_1=(-\infty,\Flam^{[-1]}(\frac{1}{2})]$ and $A_2=(\Flam^{[-1]}(\frac{1}{2}),-\infty)$.
    Lastly, since $\vec y$ is optimal, it must be that $y_1$ minimizes the $\ESC$ of the agents in $A_1$, thus $y_1=\Flam^{[-1]}(0.25)$.
    By the same argument, we infer $y_2=\Flam^{[-1]}(0.75)$.
    Lastly, notice that by definition of $A_1$ and $A_2$, the optimal $\gamma$ is such that $(i,j)\in\gamma$ if and only if $x_i\le \Flam^{[-1]}(0.5)$ and $j=1$ or $x_i>\Flam^{[-1]}(0.5)$ and $j=2$, which concludes the proof.
    When there is no $z\in\erre$ such that $F_\mu(z)=\frac{1}{2}$, it must be the case that one or more agents are located at $y^*=\sup\{t\in\erre\;\text{s.t.}\;F_\mu(t)\le \frac{1}{2}\}=\inf\{t\in\erre\;\text{s.t.}\; F_\mu(t)\ge \frac{1}{2}\}$.
    In this case, we need to check all the possible cases on how to split the agents at $y^*$ and compute the associated ex-ante social cost.
    Once the matching $\gamma$ minimizing the ex-ante social cost is retrieved, we define the function $\fg$ is determined accordingly.
\end{proof}

As for the case in which we need to locate a single facility, given $\mu$, $n_r$, and $c$ the optimal position for the facilities belong to a discrete set.

\begin{corollary}
\label{corr:2FLPAA}
    The optimal locations $y_1$ and $y_2$ belong to a discrete set $y_1,y_2\in\{x_i\}_{i\in[n_r]}\cup\{F_\mu^{[-1]}(\frac{2j-1}{2n_u})\}_{j\in[n_u]}$.
\end{corollary}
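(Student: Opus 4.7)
The plan is to invoke Theorem \ref{thm:opt2facilities} to reduce the two-facility problem to two independent single-facility FLPAAs, and then apply the single-facility discrete characterization (Theorem \ref{thm:setoffeasibleopt}) to each of them.

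First, by Theorem \ref{thm:opt2facilities}, an optimal tuple $(y_1,y_2,\gamma)$ splits the line into intervals $A_1=(-\infty,z]$ and $A_2=(z,+\infty)$, where $z=F_{\lambda,\mu,\vec x}^{[-1]}(0.5)$ (the degenerate case with no exact $z$ is handled by the second branch of the theorem, which only changes how agents exactly at $z$ are assigned). All deterministic agents in $A_j$ are matched to facility $j$, and by the construction of $\fg$ in \eqref{eq:optimal_f} together with Theorem \ref{thm:2fac_manyinfers}, the aleatory agents' contribution splits analogously. Consequently the ex-ante social cost decomposes as
\[
\ESC(\vec x;\vec y,\gamma,\mu)=C_1(y_1)+C_2(y_2),
\]
where $C_j(y_j)=\sum_{i:x_i\in A_j}|x_i-y_j|+n_u^{(j)}\EE_{X\sim\mu_{A_j}}[|X-y_j|]$ and $\mu_{A_j}$ denotes $\mu$ conditioned on the event $X\in A_j$.

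Since $C_1$ and $C_2$ depend on disjoint variables, the optimal $y_j$ is itself the solution of a single-facility FLPAA with $n_r^{(j)}:=|\{i:x_i\in A_j\}|$ deterministic agents, $n_u^{(j)}$ aleatory agents drawn from $\mu_{A_j}$, and total capacity $c=n_r^{(j)}+n_u^{(j)}$. Applying Theorem \ref{thm:setoffeasibleopt} to each subproblem, I conclude that
\[
y_j\in\{x_i:x_i\in A_j\}\cup\Bigl\{F_{\mu_{A_j}}^{[-1]}\!\bigl(\tfrac{2k-1}{2n_u^{(j)}}\bigr):k\in[n_u^{(j)}]\Bigr\}.
\]

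It then remains to translate the conditional quantiles back to quantiles of $\mu$. Because $\gamma$ respects the capacity limits, $n_r^{(j)}+n_u\mu(A_j)=c$, which forces $\mu(A_1)=n_u^{(1)}/n_u$ and $F_\mu(z)=n_u^{(1)}/n_u$. Using the identity $F_{\mu_{A_1}}(t)=F_\mu(t)/F_\mu(z)$ for $t\le z$, a direct calculation gives $F_{\mu_{A_1}}^{[-1]}\bigl(\tfrac{2k-1}{2n_u^{(1)}}\bigr)=F_\mu^{[-1]}\bigl(\tfrac{2k-1}{2n_u}\bigr)$ for $k\in[n_u^{(1)}]$; a symmetric computation on $A_2$ yields $F_{\mu_{A_2}}^{[-1]}\bigl(\tfrac{2k-1}{2n_u^{(2)}}\bigr)=F_\mu^{[-1]}\bigl(\tfrac{2(n_u^{(1)}+k)-1}{2n_u}\bigr)$ for $k\in[n_u^{(2)}]$. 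Taking the union over $j=1,2$ reproduces exactly $\{F_\mu^{[-1]}(\tfrac{2j-1}{2n_u})\}_{j\in[n_u]}$, which combined with the agents' reports is the claimed discrete set.

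The main obstacle is the bookkeeping for the quantile-index translation: one has to verify that the per-subproblem quantile positions $\tfrac{2k-1}{2n_u^{(j)}}$ of the conditional measures land precisely at positions of the form $\tfrac{2j-1}{2n_u}$ in $\mu$, which is exactly where the capacity identity $F_\mu(z)=n_u^{(1)}/n_u$ does the work. A secondary nuisance is the degenerate case in which $F_{\lambda,\mu,\vec x}$ does not hit $\tfrac12$ and the parity mismatch that can arise when the single-facility result is applied to a subproblem whose total size $c$ has a given parity; both can be absorbed into the alternative branches of Theorems \ref{thm:opt2facilities} and \ref{thm:setoffeasibleopt} without altering the final set on the right-hand side.
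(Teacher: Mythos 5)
Your proof is correct, but it takes a genuinely different route from the paper's. The paper argues directly on the global c.d.f.: by Theorem \ref{thm:opt2facilities}, $y_1=\Flam^{[-1]}(0.25)$ and $y_2=\Flam^{[-1]}(0.75)$, so if $y_1$ is not an agent's report one solves $\Flam(y_1)=\tfrac14$ for $F_\mu(y_1)$ and obtains $F_\mu(y_1)=\frac{c-2\#\{x_i\le y_1\}}{2n_u}$, which lands in the claimed quantile grid exactly as in Corollary \ref{cor:essentialpoints}; no decomposition into subproblems is needed. You instead split the instance at $z=\Flam^{[-1]}(0.5)$ into two independent single-facility FLPAAs over the conditional measures $\mu_{A_1},\mu_{A_2}$, apply Theorem \ref{thm:setoffeasibleopt} as a black box, and translate the conditional quantiles back via $F_\mu(z)=n_u^{(1)}/n_u$. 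Your index bookkeeping checks out: $F_{\mu_{A_1}}^{[-1]}\bigl(\tfrac{2k-1}{2n_u^{(1)}}\bigr)=F_\mu^{[-1]}\bigl(\tfrac{2k-1}{2n_u}\bigr)$ and the $A_2$ quantiles fill in the indices $n_u^{(1)}+1,\dots,n_u$, so the union is exactly the stated set. The paper's argument is shorter (one linear equation per facility); yours is more modular and makes explicit why the two-facility problem really is two decoupled single-facility problems, at the price of the conditional-quantile translation. One caveat applies equally to both proofs: the form $\tfrac{2j-1}{2n_u}$ requires the per-facility capacity $c$ (the total size of each subproblem, equivalently the parity of $c-2\#\{x_i\le y_1\}$) to be odd; when $c$ is even both arguments produce quantiles of the form $\tfrac{j}{n_u}$ instead, matching the even branch of Theorem \ref{thm:setoffeasibleopt}, so your remark that the parity mismatch "does not alter the final set" is slightly too optimistic, but this is an imprecision already present in the corollary as stated rather than a defect of your argument.
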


\begin{proof}
    By Theorem \ref{thm:opt2facilities}, we have that $y_1=\Flam^{[-1]}(0.25)$ and $y_2=\Flam^{[-1]}(0.75)$.
    Let us consider $y_1$, the same argument can be applied to $y_2$. 
    If $y_1=x_j$ for some $j\in[n_r]$, then the proof is complete.
    If $y_1\neq x_j$ for every $j$ it means that $\Flam(y_1)=\frac{1}{4}$, hence
    \[
        \Flam(y_1)=\lambda\frac{\#\{x_i\le y_1\}}{n_r}+(1-\lambda)F_\mu(y_1)=\frac{\#\{x_i\le y_1\}}{2k}+(1-\lambda)F_\mu(y_1)=\frac{1}{4},
    \]
    where we used the fact that $\lambda=\frac{n_r}{n}$ and $n=2c$.
    We then have that
    \[
        \frac{n_u}{2c}F_\mu(y_1)=\frac{1}{4}-\frac{\#\{x_i\le y_1\}}{2c} \; \iff \; F_\mu(y_1)=\frac{c}{2n_u}-\frac{\#\{x_i\le y_1\}}{n_u}.
    \]
    We then conclude the proof by following the same argument used in the proof of Corollary \ref{cor:essentialpoints}.
\end{proof}

\subsection{Mechanism Design for the \texorpdfstring{$2$}{2}-FLPAA}

We now extend the results obtained for the one facility case to the $2$-FLPAA.
We divide the presentation into three sections depending how many quantiles the mechanism designer can query.

\subsubsection{\texorpdfstring{$n_u$}{k}-quantiles Case}

We first consider the case in which the mechanism designer has access to $n_u$ quantiles of $\mu$.
Owing to Corollary \ref{corr:2FLPAA}, we focus our attention to the case in which the quantiles to query are  the ones induced by $\vec q=(\frac{1}{2n_u},\frac{3}{2n_u},\dots,\frac{2n_u-1}{2n_u})$.
Unfortunately, in this case, the optimal mechanism is not truthful.

\begin{example}
    Let us consider $c=5$, so that the total capacity of the agents is $10$, let us set $n_r=8$.
    Let $\mu$ be the uniform distribution on the interval $[0,1]$.
    Let $\vec x=(0,1,1,2,9,9,9,9)\in\erre^8$ be the vector containing the agents' position.
    The optimal solution places $y_1$ at $0.75$, $y_2$ at $9$ and assigns the agents located at $0$ and $1$ to $y_1$, and all the others at $y_2$.
    The agent at $2$ can manipulate.
    Indeed, if it manipulates by reporting $0.75$ rather than $2$, the mechanism still places $y_1$ at $0.75$ and $y_2$ at $9$, however, in this case the manipulative agent is assigned to $y_1$, thus its cost decreases from $7$ to $1.25$.
\end{example}
When $n_r\le c$, there exists a truthful mechanism that places the facilities at the optimal positions.

\begin{mechanism}
    For every $c\in\enne$ and $n_r\le c$, let $\vec f=(F_\mu^{[-1]}(\frac{1}{2n_u}),\dots,F_\mu^{[-1]}(\frac{2n_u-1}{2n_u}))$.
    Given $\vec x$ in input the Pseudo Optimal Mechanism (POM) places the facilities at $z_{\floor{\frac{c+1}{2}}}$ and $z_{n-\floor{\frac{c}{2}}}$, where $\vec z$ is the vector obtained by reordering $(\vec x,\vec f)$.
    Then POM assigns every agent to the facility that is closer to the position they report. 
\end{mechanism}
Since $n_r\le c$ the POM never overloads a facility, thus is well defined.

\begin{theorem}
\label{thm:POMproperties}
    The POM is truthful. Moreover, we have that $\SAR(POM)=3$. 
\end{theorem}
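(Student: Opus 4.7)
The plan is to treat the two claims separately, first establishing truthfulness via a monotonicity-plus-closer-facility argument, and then bounding the $\SAR$ from above by $3$ through a displacement argument and from below by $3$ through an explicit worst-case construction.

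\textbf{Truthfulness.} Fix any deterministic agent $i$ with true report $x_i$ and any vector of reports $\vec x_{-i}$ from the remaining agents. The two facility positions are the order statistics $z_{\lfloor (c+1)/2 \rfloor}$ and $z_{n - \lfloor c/2 \rfloor}$ of the combined vector $(\vec x, \vec f)$, where $\vec f$ is a vector of quantiles of $\mu$ that does not depend on $\vec x$. I will first observe that each such order statistic is a generalized median in the sense of Moulin \cite{moulin1980strategy}: varying the single report $x_i$ perturbs the combined list by exactly one coordinate, so both $y_1(x_i, \vec x_{-i})$ and $y_2(x_i, \vec x_{-i})$ are monotone non-decreasing in $x_i$, continuous except at those values of $x_i$ where $x_i$ passes through one of the other entries of the combined list, and invariant outside a bounded interval determined by $(\vec x_{-i}, \vec f)$. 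The hypothesis $n_r \le c$ guarantees that the closer-facility matching is feasible for every outcome, so agent $i$'s cost under any report $x_i'$ is $\min(|x_i - y_1(x_i', \vec x_{-i})|, |x_i - y_2(x_i', \vec x_{-i})|)$. I would then carry out a short case analysis on whether truthful reporting yields $y_1 \le x_i \le y_2$, $x_i \le y_1$, or $y_2 \le x_i$, and in each case use the monotonicity of $y_1$ and $y_2$ in the report to rule out profitable deviations: a decrease (respectively increase) in the report can only move both facilities weakly to the left (respectively right), so the closer facility's distance to $x_i$ cannot strictly decrease.

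\textbf{Upper bound $\SAR(\PQM) \le 3$.} Let $(\vec y^\ast,\gamma^\ast)$ denote an optimal solution, whose facility positions are the $0.25$- and $0.75$-quantiles of $\Flam$ by Theorem \ref{thm:opt2facilities}. On the other hand, the POM outputs the order statistics $z_{\lfloor (c+1)/2 \rfloor}$ and $z_{n-\lfloor c/2 \rfloor}$ of the combined list $(\vec x,\vec f)$, which can be interpreted as placing facilities at the $0.25$- and $0.75$-quantiles of a discretized approximation of $\Flam$ obtained by assigning weight $1/n$ to each entry of $(\vec x, \vec f)$. The plan is to couple the two solutions: for each reporting agent $i$ matched to facility $y^\ast_{j(i)}$ in the optimum and to $y_{j'(i)}$ in POM, the triangle inequality gives
\[
|x_i - y_{j'(i)}| \le |x_i - y^\ast_{j(i)}| + |y^\ast_{j(i)} - y_{j'(i)}|.
\]
An analogous inequality holds on the aleatory side after coupling the two $\fg$ functions via a common transport plan. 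Summing over all agents and using a standard bound showing that each $|y_j - y^\ast_j|$ is at most the average per-agent transportation cost on the side that is being rebalanced, the total excess cost is at most twice the optimal transportation cost, giving $\ESC_{\PQM}(\vec x;\mu) \le 3 \cdot \ESC_{opt}(\vec x;\mu)$.

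\textbf{Lower bound $\SAR(\PQM) \ge 3$.} To match the upper bound, I would consider an instance with $n_r = c$ deterministic agents placed at a common point $x^\circ$ together with a probability measure $\mu$ whose mass is asymmetrically concentrated around two atoms, one arbitrarily close to $x^\circ$ and one far away, chosen so that the combined list $(\vec x, \vec f)$ forces the POM to place one facility at $x^\circ$ and the other at the far atom while the true $0.25$- and $0.75$-quantiles of $\Flam$ sit in between. The resulting mismatch assigns a fraction of aleatory agents to the wrong facility, and taking the far atom to infinity drives the cost ratio to $3$.

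\textbf{Main obstacle.} The hardest part is the upper bound, because the POM's matching and the optimal matching can partition the deterministic and aleatory agents differently, so one cannot simply compare the two solutions facility-by-facility. The key technical step is to build the coupling between the two $\fg$ functions (which depend on the respective spare capacities $n_u^{(j)}$) in such a way that the extra cost paid by each reassigned aleatory agent is absorbed into twice the optimal aleatory cost, which is where the constant $3$ ultimately comes from.
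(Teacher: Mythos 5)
Your truthfulness argument is essentially the paper's (monotonicity of the two order statistics of $(\vec x,\vec f)$ in a single report, plus a case analysis on whether $x_i$ lies left of $y_1$, right of $y_2$, or in between), and that part is sound. The SAR part, however, misses the structural fact on which the paper's whole argument rests: by Theorem \ref{thm:opt2facilities} and Corollary \ref{corr:2FLPAA}, the POM places the facilities \emph{exactly} at the optimal positions $\Flam^{[-1]}(0.25)$ and $\Flam^{[-1]}(0.75)$; the entire efficiency loss comes from the agent-to-facility matching, not from the positions. Your upper bound instead treats the POM's output as a ``discretized approximation'' of the optimal positions and bounds the excess by terms $|y^\ast_{j(i)}-y_{j'(i)}|$. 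For a reassigned agent this quantity equals $|y_1-y_2|$, which is not small, and the ``standard bound'' you invoke to absorb it into twice the optimal cost is precisely the missing step. What is actually needed is the accounting: every deterministic agent that the mechanism sends to the closer facility while the optimum sends it to the farther one lies in $[y_1,y_2]$ and therefore has optimal cost at least $|y_1-y_2|/2$, and each such misassignment displaces at most one unit of aleatory mass across the gap at extra cost at most $|y_1-y_2|$; this is where $1+2=3$ comes from, and it is not in your sketch.

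The lower bound construction also does not work. With all $n_r=c$ deterministic agents at a single point $x^\circ$, the closest-facility rule and the optimal split essentially agree: whichever mass must travel to the far atom pays roughly the same under both solutions, and since the POM's positions coincide with the $\Flam$-quantiles by construction, the premise that ``the true $0.25$- and $0.75$-quantiles sit in between'' the POM's choices cannot occur. Working out your instance gives a ratio tending to $1$, not $3$. The tight instance is the opposite configuration: two well-separated (and optimal) positions $y_1<y_2$, deterministic agents concentrated at the midpoint $\frac{y_1+y_2}{2}$ so that the closest-facility rule sends them to $y_1$, and aleatory mass concentrated just to the right of $y_1$, which the mechanism must then displace to $y_2$ because $y_1$ is filled; the optimum sends the midpoint agents to $y_2$ and keeps the aleatory mass at $y_1$, yielding the ratio $\bigl(\tfrac{1}{2}+1\bigr)/\tfrac{1}{2}=3$.
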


\begin{proof}
    %
    First, we show that POM is truthful.
    Notice that since every agent is assigned to its closest facility, a manipulative agent must manipulate in such a way that at least one of the two facilities gets closer to their position.
    Therefore, it is easy to see that an agent whose real position is on the left of $y_1$ cannot manipulate in a way that makes $y_1$ get closer to its position, namely $x_i$.
    Indeed, if the agent at $x_i$ reports a position $x_i'\le x_i$, the output of the mechanism does not change, while if $x_i'>x_i$ then the position of the facilities move further to the right, hence the cost of the agent increases.
    In a similar way, we handle the case in which $x_i>y_2$ and $y_1<x_i<y_2$.
    We now compute the SAR of the POM.
    Owing to Theorem \ref{thm:opt2facilities}, we have that the mechanism places the facilities at the optimal positions, however the agent-to-facility matching $\gamma$ is sub-optimal.
    By the same argument used in the proof of Theorem \ref{thm:SAR0inf}, we have that the ratio between the mechanism cost and the optimal cost increases if we consider a sequence of measures that concentrate all the mass on the left of $y_1$ at $y_1$.
    Similarly, we increase the ratio by moving every agent on the left of $y_1$ to $y_1$.
    Likewise, the ratio between the cost of the mechanism and the optimal cost increases if we move all the agents and the mass that the measure locates to the right of $y_2$ to $y_2$.
    Notice that all these modifications do not alter the output of the mechanism nor the optimal position of the facilities.
    Let us now consider the agents whose position $x_i$ is between $y_1$ and $y_2$, that is $y_1<x_i<y_2$.
    Without loss of generality, we assume that all these agents are closer to $y_1$ than to $y_2$, so that the mechanism assigns them to $y_1$.
    If the optimal solution assigns them to $y_1$ as well, there is nothing to prove, as the optimal cost and the mechanism cost do coincide.
    To avoid this, we consider instances in which the probability measure concentrates all the probability between $y_1$ and $y_2$ close to $y_1$.
    Notice that, in this case, the ratio between the cost of the mechanism and the optimal cost increases as the agents are located closer to $\frac{y_1+y_2}{2}$.
    Moreover, the ratio increases as all the probability that $\mu$ assigns to the set $(y_1,y_2)$ concentrates at $y_1$.
    Taken a sequence of probability measures $\mu_\ell$ satisfying all these properties, we have that
    \[
        \lim_{\ell\to\infty}\ESC_{POM}(\vec x;\mu_\ell)=\ell \Big|\frac{y_1-y_2}{2}\Big|+\ell|y_1-y_2|
    \]
    while for the optimal cost we have that
    \[
        \lim_{\ell\to\infty}\ESC_{opt}(\vec x;\mu_\ell)=ell \Big|\frac{y_1-y_2}{2}\Big|,
    \]
    since both quantities do not depend on $\mu$, we retrieve that the SAR of the POM is
    \[
    SAR(POM)=\frac{\ell|y_1-y_2|+2\ell|y_1-y_2|}{\ell|y_1-y_2|}=3.
    \]
\end{proof}

Notice that when $n_r>c$, the POM is no longer well-defined.
For this reason, we need to introduce a different mechanism to handle this case.

\begin{mechanism}
    Let $n_r > c$ and let $\vec f$ be the the quantiles associated with $\vec q=(\frac{1}{2n_u},\dots,\frac{2n_u-1}{2n_u})$.
    Given $\vec x$, let $\vec z$ be the vector obtained by reordering the entries of $(\vec x,\vec f)$ increasingly.
    Then, the output of the Amended Quartiles Mechanism (AQM) on $\vec x$ is defined as follows
    \begin{enumerate*}[label=(\roman*)]
        \item we define $\vec y=(y_1,y_2)$ where $y_1=\max\{x_{n_r-k},z_{\ceil{\frac{c}{2}}}\}$ and $y_2=\min\{x_{k+1},z_{n-\floor{\frac{c}{2}}}\}$; and
        \item $\gamma$ as $(i,j)\in \gamma$ if and only if $|x_i-y_j|=\min\{|x_i-y_1|,|x_i-y_2|\}$; that is every agent is assigned to the facility that is closer to their report.
    \end{enumerate*}
\end{mechanism}

\begin{theorem}
\label{thm:AQMtheorem}
    The AQM is well-defined and truthful.
    Moreover, $\SAR(AQM)\le 3(c-1)$.
\end{theorem}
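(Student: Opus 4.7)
\textbf{Proof plan for Theorem \ref{thm:AQMtheorem}.} The statement bundles three claims: well-definedness, truthfulness, and the bound $\SAR(AQM)\le 3(c-1)$. For well-definedness I would first verify $y_1\le y_2$ by checking the four corner pairings of the $\max$/$\min$: the inequality $x_{n_r-c}\le x_{c+1}$ holds since $n_r\le 2c$; the inequality $z_{\lceil c/2\rceil}\le z_{n-\lfloor c/2\rfloor}$ is immediate from sorting; and the two cross inequalities follow from a counting argument on how many entries of $\vec f$ can sit between two given reports. The non-trivial part of well-definedness is the capacity constraint under the closest-facility assignment. By construction there are exactly $c$ reports $\ge x_{n_r-c+1}>y_1^{-}$ (where $y_1^-$ is just below $y_1$) and $c$ reports $\le x_{c+1}$, so whenever the midpoint $(y_1+y_2)/2$ is used to split the reports between the two facilities, each side receives at most $c$ agents; this is the role of the $\max$ and $\min$ amendments.

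For truthfulness I plan to exploit the fact that both $y_1=\max\{x_{n_r-c},z_{\lceil c/2\rceil}\}$ and $y_2=\min\{x_{c+1},z_{n-\lfloor c/2\rfloor}\}$ are \emph{monotone non-decreasing} functions of every single report $x_j$, because each of the four building blocks is an order statistic of $\vec x$ or of $(\vec x,\vec f)$. I would split the analysis by whether agent $i$ is assigned to $y_1$ or to $y_2$, and within each branch by whether a deviation keeps the agent on the same side of the midpoint or flips them. For same-side deviations the standard Phantom-Peak reasoning of \cite{moulin1980strategy} used in Theorem \ref{them:truthfulPQM} carries over: monotonicity of $y_1$ (resp. $y_2$) implies that the only way to move the assigned facility closer is to report closer to it, which cannot be beneficial. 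For side-flipping deviations the key observation is that flipping requires pulling the midpoint past $x_i$, which, by monotonicity of both facility positions, simultaneously drags the newly assigned facility away from $x_i$; a direct comparison of the resulting distance with $|x_i-y_j|$ concludes that the flip cannot be strictly beneficial.

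For the SAR bound I would mimic the distillation used in the proof of Theorem \ref{thm:POMproperties}: by concentrating the reports and the mass of $\mu$ at finitely many points and using Theorem \ref{thm:problem_opt}, the worst case reduces to a configuration where a single cluster of ``excess'' agents is forced by capacity to cross the midpoint. The AQM's facility locations differ from the optimal quartile locations of Theorem \ref{thm:opt2facilities} only by the amendment, which displaces at most one facility by an amount bounded by $|y_2-y_1|$; at the same time at most $c-1$ reports (the overflow past capacity on the crowded side) get re-assigned to the farther facility. Each such re-assigned agent pays at most $|y_2-y_1|$ in extra cost, while the optimal cost on the same instance is at least of order $|y_2-y_1|$ (since the two clusters sit at opposite quartiles). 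A triangle-inequality argument identical in spirit to the one giving the constant $3$ in Theorem \ref{thm:POMproperties}, but now accumulated over up to $c-1$ mis-routed agents, yields $\SAR(AQM)\le 3(c-1)$.

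The main obstacle, I expect, is truthfulness. Unlike for the POM where facilities are simply generalized medians, here a single misreport can shift $y_1$, shift $y_2$, and \emph{also} move the closest-facility boundary; all three effects interact with the agent's cost. The delicate step is the side-flipping case: one must show that whenever a deviation reduces $|x_i-y_{3-j}|$ enough to trigger reassignment, the new distance is still at least $|x_i-y_j|$ computed under truthful reporting, and this has to be checked separately in the regime where $y_1=x_{n_r-c}$ (or $y_2=x_{c+1}$) is saturated versus where it equals the $\vec z$-quantile.
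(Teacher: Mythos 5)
Your plan follows essentially the same route as the paper: the truthfulness argument via the case split on whether $x_i$ lies left of $y_1$, right of $y_2$, or in between (your monotonicity-of-order-statistics framing is just a cleaner packaging of the paper's observation that misreporting can only push both facilities away), and the SAR bound via the same distillation used for Theorem \ref{thm:POMproperties} combined with the count of at most $c-1$ mis-routed agents each paying a multiple of the inter-facility distance, exactly as in the paper's Theorem \ref{thm:BEMtrutful+sar} accounting with $\Delta_1$ and $\Delta_y$. The side-flipping deviation you flag as the delicate step is indeed the point where the paper's own proof is also only sketched, so your outline matches the paper's argument in both structure and level of detail.
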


\begin{proof}
    First, we show that the mechanism is truthful.
    Toward a contradiction, let us assume that there exists a instance $\vec x$ in which the agent at $x_i$ can manipulate.
    Let $\vec y$ be the vector containing the positions of two facilities on the truthful input and $\gamma$ the associated matching.
    If $x_i\le y_1$, we notice that the agent cannot manipulate.
    Indeed, by definition of AQM, reporting a position $x_i'$ that is on the left of $x_i$ does not change the output of the mechanism, while reporting a position that is on the right of $x_i$ moves the two facilities to the right, which increases the cost.
    Similarly, we can show that no agent located at the right of $y_2$ can manipulate the mechanism.
    If we show that no agent located between $y_1$ and $y_2$ can manipulate, we conclude the proof.
    Since $x_i$ is assigned to its closest facility on the truthful input, the manipulating agent lowers its cost only if it is able to move one of the two facilities closer to them.
    We now show that the agent cannot misreport in such a way that $y_1$ becomes closer to $x_i$.
    Notice that if the agent at $x_i$ reports a position $x_i'\ge x_i$, the position of $y_1$ does not change.
    Through a similar argument, we handle the other case.
    Indeed, let $y_1'$ be the position of the leftmost facility returned by AQM on the input $(x_i',x_{-i})$ where $x_i'<x_i$ and $x_{-i}$ is the vector containing the reports of all the agents except the one located at $x_i$.
    Lastly, we compute the SAR of AQM.
    Let $\mu$ be a probability distribution.
    Owing to Theorem \ref{thm:opt2facilities} and to the definition of AQM, we have that the facility positions returned by the AQM, namely $\vec y=(y_1,y_2)$ are such that $y_1^*\le y_1\le y_2\le y_2^*$, where $y_1^*$ and $y_2^*$ are the optimal facility location.
    Owing to this property, we assume that no agents is such that $x_i<y_1^*$ or $x_i>y_2^*$.
    Likewise, we consider only probability measures that do not assign any probability to sets not included in $(y_1^*,y_2^*)$, as otherwise the resulting ratio between the mechanism cost and the optimal cost would be smaller.
    First notice that on every instance such that $y_1=y_1^*$ and $y_2=y_2^*$, we can repeat the argument used to prove Theorem \ref{thm:POMproperties} and show that the maximum ratio between the mechanism cost and the optimal cost is $3$.
\end{proof}

\subsubsection{Zero Information Case.}

When the mechanism designer cannot query any quantile, defining a truthful mechanism that has bounded SAR might be impossible depending on how many agents report their positions.
Indeed, if $n_r\le c$, no truthful mechanism can attain a bounded SAR.

\begin{theorem}
\label{thm:impoxresultfac2}
    Let $M$ be a truthful mechanism, if $n_r\le c$, then $\SAR(M)=+\infty$.
\end{theorem}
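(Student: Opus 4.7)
The plan is to exhibit, for any truthful mechanism $M$, a fixed input $\vec x$ and a sequence of absolutely continuous probability measures $\mu_\ell$ for which the ratio $\ESC_M(\vec x;\mu_\ell)/\ESC_{opt}(\vec x;\mu_\ell)$ diverges. The underlying intuition is that, since $M$ cannot query $\mu$, its facility positions are fixed once $\vec x$ is known, while the optimal solution can place facilities exactly where $\mu$'s mass lies; the hypothesis $n_r\le c$ gives the aleatory mass enough ``slack'' to fit perfectly into one empty facility.

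First, I would fix the report vector $\vec x = \vec 0 \in \erre^{n_r}$. By the zero-information assumption, the mechanism's output $M(\vec x) = ((y_1,y_2),\gamma)$ depends only on $\vec x$, so $y_1, y_2$ and the spare capacities $n_u^{(1)}, n_u^{(2)}$ are fixed real numbers. Next, I would pick $L > \max(|y_1|, |y_2|) + 1$ and let $\mu_\ell$ be an absolutely continuous sequence concentrating mass $1 - p$ near $0$ and mass $p$ near $L$, where $p := c/n_u$. The hypothesis $n_r \le c$ gives $n_u = 2c - n_r \ge c$, hence $p \in (0,1]$ and $1-p=(c-n_r)/n_u\ge 0$, so the construction is legitimate.

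For the optimal cost, I would consider the placement $(y_1^*, y_2^*) = (0, L)$ together with a matching sending all $n_r$ reporters to $y_1^*$. In the limit, the facility at $0$ absorbs exactly $n_r + (1-p)n_u = n_r + (c - n_r) = c$ agents located near $0$, and the facility at $L$ absorbs exactly $pn_u = c$ aleatory agents near $L$; both capacity constraints are tight and all costs vanish, so $\ESC_{opt}(\vec x;\mu_\ell) \to 0$ as $\ell \to \infty$. For the mechanism cost, the key observation is that, in the limit, $pn_u = c$ aleatory agents sit near $L$ but must be routed by $f_\gamma$ to one of the fixed facilities $y_1$ or $y_2$, each at distance at least $L - \max(|y_1|, |y_2|) \ge 1$. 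Hence $\ESC_M(\vec x;\mu_\ell) \ge c\bigl(L - \max(|y_1|, |y_2|)\bigr) - o(1) > 0$ in the limit, and the ratio diverges, proving $\SAR(M) = +\infty$.

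The main obstacle will be bookkeeping the action of $f_\gamma$ under the two-cluster approximation so that the lower bound on the mechanism's aleatory cost is rigorous: since $f_\gamma$ is a monotone routing determined by the quantile $F_{\mu_\ell}^{[-1]}\bigl(n_u^{(1)}/n_u\bigr)$, the $c$ aleatory agents near $L$ are routed to either $y_1$ or $y_2$ (possibly split between the two), but in any case each of them sits at distance at least $L - \max(|y_1|, |y_2|)$ from its assigned facility, so their aggregate contribution is at least $c\bigl(L - \max(|y_1|, |y_2|)\bigr)$. This step is elementary once the quantile structure of $\mu_\ell$ is unpacked but is the only part of the argument that requires care; note also that truthfulness is not explicitly invoked in the proof, as the argument works for any mechanism whose output is a function of $\vec x$ alone.
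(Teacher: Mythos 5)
Your proposal is correct and follows essentially the same strategy as the paper's proof: fix a single-cluster report vector, construct a two-cluster sequence of measures with mass proportions $c/n_u$ and $(c-n_r)/n_u$ so that the optimal solution fills both facilities exactly to capacity with vanishing cost, and lower-bound the mechanism's cost by the distance from the far cluster to its fixed facility positions. Your choice of placing the far cluster at a point $L$ selected after observing $(y_1,y_2)$ is a slightly cleaner way of handling the degenerate case that the paper dispatches with ``without loss of generality $y_1,y_2\neq 0$,'' but the argument is otherwise the same.
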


\begin{proof}
    Let $M$ be a truthful mechanism and let us consider the instance $\vec x=(1,1,\dots,1)\in\erre^{n_r}$.
    Let us denote with $\vec y=(y_1,y_2)=M(\vec x)$.
    Without loss of generality we can assume that $y_1,y_2\neq 0$.
    Let $\mu$ be the uniform distribution over the interval $[-1,1]$, so that $\rho_\mu(t)=\frac{1}{2}$ if $t\in[-1,1]$ and $\rho_\mu(t)=0$ otherwise.
    We define $\alpha_\ell$ as the probability distribution whose density is $\ell\rho_\mu(\ell x)$ and $\beta_\ell$ as the density whose probability distribution is $\ell\rho_\mu(\ell (x-1))$.
    We then define $\mu_\ell =\frac{k}{n_u}\alpha_\ell+\frac{n_u-k}{n_u}\delta_\ell$.
    By the same argument used in the proof of Theorem \ref{thm:SAR0inf}, we have that
    \[
        \lim_{\ell\to\infty}\ESC_{opt}(\vec x;\mu_\ell)=0,
    \]
    however, we have that $\lim_{\ell\to\infty}\ESC_{M}(\vec x;\mu_\ell)\ge |y_1|$, which allows us to conclude the proof.
\end{proof}

When $n_r>c$, the InnerGap Mechanism, introduced in \cite{ijcai2022p75}, is truthful, does not overload any facility, and attains bounded SAR.

\begin{mechanism}
    Given $\vec x\in\erre^{n_r}$ with $n_r>c$, the InnerGap Mechanism (IGM) returns $y_1=x_{n-c}$ and $y_2=x_{c+1}$.  
    Then the mechanism assigns every agents to the facility that is closer to the position they reported.
\end{mechanism}

\begin{theorem}
\label{thm:BEMtrutful+sar}
    The IGM is truthful and we have that $\SAR(IGM)\le 3(c-1)$.
\end{theorem}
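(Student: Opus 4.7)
The plan is to establish truthfulness first by a case analysis, and then to bound the strong approximation ratio via worst-case reductions and a comparison with the optimal placement and matching, following the template of the proofs of Theorems \ref{thm:POMproperties} and \ref{thm:AQMtheorem}. IGM differs from AQM in that it uses no quantile information about $\mu$; the bound is nonetheless of the same order $3(c-1)$ because the facility positions are constrained to lie at actual reports, which limits how far off IGM can be from the optimum.

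For truthfulness, I would observe that the two facility locations $y_1$ and $y_2$ are order statistics of the reports, so IGM belongs to Moulin's class of generalized median mechanisms and the placement rule is hence strategyproof. Since each agent is matched to the closer of $\{y_1,y_2\}$, I would then verify by case analysis on the position of an agent $x_i$ that no deviation improves the matched cost: if $x_i\le y_1$ the agent is assigned to $y_1$ and any report moves $y_1$ to the right or leaves it unchanged, so the matched distance weakly increases; the case $x_i\ge y_2$ is symmetric; and for $y_1<x_i<y_2$ any report in $(y_1,y_2)$ leaves the output unchanged, while any report outside $(y_1,y_2)$ pushes one facility further away from $x_i$. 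Combining the strategyproofness of the placement with the fact that agents are always matched to their closer facility yields truthfulness.

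For the SAR bound, I would fix an arbitrary instance $(\vec x,\mu)$ with IGM output $(y_1,y_2,\gamma)$ and optimal $(y_1^*,y_2^*,\gamma^*)$, and apply worst-case reductions analogous to those used in the proof of Theorem \ref{thm:PQMar}: shift every reporting agent strictly to the left of $y_1$ onto $y_1$, every agent strictly to the right of $y_2$ onto $y_2$, and concentrate the $\mu$-mass outside $[y_1,y_2]$ onto $y_1$ and $y_2$ via an absolutely continuous approximating sequence. Because $y_1,y_2$ are rank-based, these moves do not change the IGM output, and they decrease both the mechanism and the optimal cost by the same amount, so the ratio $\ESC_{IGM}/\ESC_{opt}$ weakly increases. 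After a rescaling we may take $y_1=0$ and $y_2=1$, so all reports and mass lie in $[0,1]$. In this canonical instance the optimal facilities lie at the quartiles of $\Flam$ by Theorem \ref{thm:opt2facilities}, while IGM is fixed at $\{0,1\}$. Bounding the per-agent (and per-unit-mass) contribution to the ratio by $3$ in the spirit of the proof of Theorem \ref{thm:POMproperties}, and the number of mis-assigned agents (relative to the optimal assignment $\gamma^*$) by $c-1$, then combines to yield $\SAR(IGM)\le 3(c-1)$.

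The main obstacle is the interplay between the capacity-constrained closest-facility matching and the facility placement. Even when IGM's two facility positions coincide with the optimum, the forced matching can route up to $c-1$ reporting or aleatory agents to the farther facility, each contributing up to $|y_2-y_1|$ to the cost. This per-agent penalty must be absorbed into the optimal cost, which can itself be small when reports and mass are nearly degenerate; the delicate step is therefore to show that, after the reductions above, the optimal ex-ante cost is bounded below by a suitable constant multiple of $(c-1)|y_2-y_1|$, leveraging the $n_u$-term of the ESC together with the capacity constraints. Once this lower bound is in place, the factor $3$ coming from a POM-style cost-decomposition multiplies with the factor $c-1$ from the matching discrepancy and the claimed bound follows.
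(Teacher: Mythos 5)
Your high-level skeleton (worst-case reductions, then ``per-agent factor $3$ times at most $c-1$ mis-assigned agents'') matches the paper's, but the one step you single out as delicate is stated in a way that cannot work. You propose to lower-bound the optimal ex-ante cost by a constant multiple of $(c-1)\,|y_2-y_1|$. That bound is false in general: the optimal cost of an instance need not grow with $c$ at all (e.g.\ a single agent caught between the mechanism's split point $\tfrac{y_1+y_2}{2}$ and the optimal split point $\Flam^{[-1]}(0.5)$ already realizes the mis-assignment penalty while $\ESC_{opt}$ stays of order $|y_2-y_1|$). Worse, if such a bound \emph{did} hold it would cancel the $c-1$ in the numerator and yield a constant SAR, which is not what is being proved; the fact that your two ingredients combine to a constant rather than to $3(c-1)$ is the signal that the combination is mis-calibrated. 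What the paper actually does is lower-bound $\ESC_{opt}$ by $\Delta_y+\Delta_1$ where $\Delta_y=\tfrac{|y_1-y_2|}{2}$ and $\Delta_1=|y_1-y_1^*|$ --- i.e.\ a quantity of order $|y_2-y_1|$ with \emph{no} factor of $c$ --- upper-bound the mechanism cost by $(3\Delta_y+2\Delta_1)(c-1)$, and conclude via $\frac{(3\Delta_y+2\Delta_1)(c-1)}{\Delta_y+\Delta_1}\le 3(c-1)$. It also splits into the case where no agent lies between the two split points (matchings coincide, ratio at most $c-1$ by reducing to two independent single-facility subproblems) and the case where up to $c-1$ agents are mis-routed; your write-up does not make this case distinction.

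Two smaller points. For truthfulness the paper simply cites the prior work that introduced the IGM, whereas you argue it directly; your case analysis on $x_i\le y_1$, $x_i\ge y_2$, and $y_1<x_i<y_2$ is the right argument, but the appeal to Moulin's generalized median characterization is not directly applicable to a two-facility rule with a capacity-respecting matching, so the case analysis has to carry the full weight (which it essentially does). Also, your reduction fixes the interval $[y_1,y_2]$ determined by the mechanism, but for the IGM the optimal locations $y_1^*,y_2^*$ need not lie inside $[y_1,y_2]$ (they depend on $\mu$, which the IGM ignores), so the concentration step must be phrased relative to both pairs of locations, as the paper does by keeping the split point $\Flam^{[-1]}(0.5)$ fixed while concentrating mass at $y_1^*$ and $y_2^*$.
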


\begin{proof}
    The truthfulness of the IGM has been shown in \cite{ijcai2022p75}.

    We now compute the SAR of the IGM.
    Given $\vec x$, let us denote with $y_1\le y_2$ the position at which the mechanism places the facilities
    We denote with $y_1^*\le y_2^*$ the optimal position of the facilities.
    Notice that by the same argument used in the proof of Theorem \ref{thm:POMproperties}, if $y_i^*=y_i$ for $i=1,2$, then the maximum ratio between the mechanism cost and the optimal cost is $3$.    
    We can then assume that at least one facility is located at a non optimal position.
    Let $z^*=\Flam^{[-1]}(0.5)$ and $z=\frac{y_1+y_2}{2}$.
    According to Theorem \ref{thm:opt2facilities}, every agent to the left of $z^*$ is assigned to $y_1^*$ in the optimal solution, while the rest are assigned to $y_2^*$. 
    Moreover, by the mechanism's definition, every agent left of $z$ is assigned to $y_1$, while others going to $y_2$.
    Without loss of generality, assume $z^*<z$.
    If no agents lie between $z^*$ and $z$, the optimal matching and the mechanism's matching do coincide.
    Thus, the worst-case scenario arises from independently studying the FLPAA induced by agents to the left and right of $z^*$.
    It is then easy to see that, in this scenario, the ratio between the mechanism's cost and the optimal cost is at most $c-1$.
    Consider now the case in which there is at least one agent $x_i$ in $[z^*,z]$. 
    The maximum number of agents in $[z^*,z]$ is at most $c-1$ by mechanism definition. 
    Here, the optimal matching differs from the mechanism's matching, allocating agents in $[z^*,z]$ to $y_1$, contrary to $y_2^*$ in the optimal case.
    The mechanism's cost is then maximized when all agents are at $z$, while the optimal cost decreases.
    Moreover, the ratio increases as the measure concentrates at $y_1^*$ and $y_2^*$ without shifting $z^*$.
    Denote $\Delta_i=|y_i-y_i^*|$ and $\Delta_y=\frac{|y_1-y_2|}{2}$. 
    It is easy to see that the optimal cost is larger than $\Delta_1+\Delta_y$, while the mechanism's cost is below $(3\Delta_y+2\Delta_1)(c-1)$.
    Finally, we have
    \[
        \frac{(3\Delta_y+2\Delta_1)(c-1)}{\Delta_y+\Delta_1}\le 3(c-1),
    \]
    concluding the proof.
\end{proof}

\subsubsection{\texorpdfstring{$k$}{k}-quantile case.}

We now study the case in which we have access to $k$ quantiles of $\mu$, where $k\in\{1,2,\dots,n_u-1\}$.
Notice that this framework includes the one quantile case.
First, we notice that depending on the quantiles we select, query, there might be no truthful mechanism that has bounded SAR.

\begin{example}
    Let us fix $c=3$, so that $n=6$.
    We fix $n_r=2$, thus $n_u=4$.
    Assume that we query for $3$ quantiles.
    If we query the quantiles associated with $q_1=0.05$, $q_2=0.1$, and $q_3=0.15$, any truthful mechanism has unbounded SAR.
    Indeed, let us consider the following instances indexed by $\ell\in\enne$.
    The report of the agents are $x_1=x_2=0$ for every $\ell$, while $\mu_\ell=\frac{\ell}{4}\mathcal{U}_{(-\frac{1}{\ell},0)}+\frac{3\ell}{4}\mathcal{U}_{(T-\frac{1}{\ell},T)}$, where $T$ is a parameter to fix.
    Given a truthful mechanism $M$, let $\vec y_\ell$ be the position at which it places the facilities on the instance.
    If we set $T=y_1+y_2$, we have that the optimal cost of the instance converges to zero, while the cost of the mechanism is always larger than $0$.
\end{example}

When $\vec q=(\frac{1}{2k},\frac{3}{2k},\dots,\frac{2k-1}{2k})$, it is possible to define a truthful routine with bounded SAR.

\begin{mechanism}
    Let $\vec f=(F_\mu^{[-1]}(L(q)_1),\dots,F_\mu^{[-1]}(L(q)_{n_u}))$, where $\vec q=(\frac{1}{2k},\frac{3}{2k},\dots,\frac{2k-1}{2k})$. 
    For any $\vec x\in\erre^{n_r}$, let $\vec z\in\erre^{n}$ be the vector obtained by reordering $(\vec x,\vec f)\in\erre^n$.
    The Capped EndPoint Mechanism (CEM) is defined as follows.
    If $n_r\le c$, then 
    \[
    CEM(\vec x)=(z_1,z_n).
    \]
    If $n_r >c$, $CEM(\vec x)=(\max\{x_{n-k},z_{\ceil{\frac{c}{2}}}\},\min\{x_{k+1},z_{n-\floor{\frac{c}{2}}}\})$.
    In both cases, every agent is assigned to the facility closer to the position they reported.
\end{mechanism}

Since there are at most $k$ agents on the right of $y_1$ and $k$ agents on the left of $y_2$, CEM is well-defined and it does not overload any facility.

\begin{theorem}
\label{thm:2fackwhatever}
    The CEM is truthful.
    Moreover, we have $\SAR(CEM)\le 3(c-1)$.
\end{theorem}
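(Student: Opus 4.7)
The plan is to mirror the proofs of Theorems \ref{thm:AQMtheorem} and \ref{thm:BEMtrutful+sar}, splitting the argument by whether $n_r \le c$ or $n_r > c$, and establishing truthfulness first and the SAR bound second. In both branches the auxiliary vector $\vec f$ is determined by $\mu$ and the pre-queried $\vec q$, so it is a fixed constant vector as far as manipulations are concerned, and each agent is always assigned to the closer of the two facilities.

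For truthfulness, I would argue by monotonicity as in the proof of Theorem \ref{thm:AQMtheorem}. In the $n_r\le c$ branch, $y_1=z_1$ and $y_2=z_n$ are monotone in every component of $\vec x$; in the $n_r>c$ branch, both the $\max$ and $\min$ selectors preserve monotonicity. An agent whose true position is to the left of $y_1$ cannot pull $y_1$ leftward by understating (the output is unchanged for reports further to the left) and overstating only shifts both facilities to the right, which increases its cost. The symmetric argument handles the agents to the right of $y_2$. For an agent between the two facilities, a beneficial misreport would have to bring its assigned facility strictly closer, but monotonicity rules this out: reporting toward $y_1$ can only move $y_1$ further left, and reporting toward $y_2$ can only move $y_2$ further right.

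For the SAR bound in the $n_r\le c$ case, the facility positions coincide with the extreme entries $z_1$ and $z_n$ of the reordered combined vector, so the argument of Theorem \ref{thm:POMproperties} applies verbatim: concentrating the mass of $\mu$ that lies to the left of $y_1$ at $y_1$, moving all agents to the left of $y_1$ to $y_1$, and performing the symmetric operations around $y_2$ only increases the ratio while leaving the mechanism's output and the optimal positions invariant. The worst case then reduces to an instance concentrated inside $[y_1,y_2]$, which yields a ratio of $3$. For the $n_r>c$ case, the capping construction forces $y_1^*\le y_1\le y_2\le y_2^*$, where $y_1^*,y_2^*$ are the optimal locations from Theorem \ref{thm:opt2facilities}; the nearest-facility matching of the mechanism can disagree with the optimal matching only on agents lying in the gap between $z=\tfrac{y_1+y_2}{2}$ and $z^*=F_{\lambda,\mu,\vec x}^{[-1]}(0.5)$, and at most $c-1$ such agents can exist by the mechanism's construction. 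Following the final estimate in the proof of Theorem \ref{thm:BEMtrutful+sar}, I would set $\Delta_i=|y_i-y_i^*|$ and $\Delta_y=\tfrac{|y_1-y_2|}{2}$, upper bound the mechanism cost by $(3\Delta_y+2\Delta_1)(c-1)$ and lower bound the optimal cost by $\Delta_1+\Delta_y$, producing the ratio $3(c-1)$.

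The main obstacle is controlling the interplay between the coarser quantile information and the matching error: the lift operator yields an $\vec f$ with only $k$ distinct entries, so consecutive entries of the sorted combined vector may poorly resolve the true relevant quantiles of $\mu$. The delicate point is showing that this coarseness does not enlarge the bound. In the $n_r>c$ branch this follows because the facility positions are capped by actual agent positions (so the quantile approximation only helps, never hurts); in the $n_r\le c$ branch only the extreme entries $z_1$ and $z_n$ are used, and the worst-case reductions above completely eliminate the dependence on intermediate quantiles. In both cases the $(c-1)$ factor absorbs any slack inherited from using $k<n_u$ quantiles, and no additional quantitative control on quantile spacing is required.
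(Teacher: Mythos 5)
Your truthfulness argument and your treatment of the $n_r>c$ branch track the paper's proof closely (the paper likewise reduces the SAR bound to the arguments of Theorems \ref{thm:BEMtrutful+sar} and \ref{thm:AQMtheorem}, using $y_1^*\le y_1\le y_2\le y_2^*$ and the $(3\Delta_y+2\Delta_1)(c-1)$ versus $\Delta_1+\Delta_y$ estimate). However, there is a genuine gap in your $n_r\le c$ branch. You claim the argument of Theorem \ref{thm:POMproperties} applies ``verbatim'' and yields a ratio of $3$. It does not: the POM proof hinges on the fact that POM places the facilities at $z_{\floor{\frac{c+1}{2}}}$ and $z_{n-\floor{\frac{c}{2}}}$, which by Theorem \ref{thm:opt2facilities} are the \emph{optimal} positions, so that only the agent-to-facility matching is suboptimal. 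The CEM with $n_r\le c$ instead places the facilities at the extreme entries $z_1$ and $z_n$, which are in general far from $\Flam^{[-1]}(0.25)$ and $\Flam^{[-1]}(0.75)$. Concretely, take $n_r=1$ with the single agent at $0$ and $\mu$ concentrated near $1$: then $y_1=0$, $y_2\approx 1$, the facility at $0$ is assigned one deterministic agent and must absorb spare capacity $n_u^{(1)}=c-1$, so by the definition of $\fg$ roughly $c-1$ units of mass are hauled from near $1$ back to $0$, giving a mechanism cost of about $c-1$ against an optimal cost of about $1$. The ratio is $\Theta(c)$, not $3$, so your reduction to the POM argument fails for this branch.

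The $3(c-1)$ bound for $n_r\le c$ therefore requires an argument that explicitly accounts for the misplacement of the facilities at the endpoints and the resulting forced backhaul of at most $c-1$ units of capacity — an Extended-Endpoint-type analysis in which the factor $c-1$ counts the agents (deterministic or aleatory) that the nearest-facility assignment plus capacity constraint routes to the wrong side, and the factor $3$ comes from bounding each such agent's detour by three times a quantity that lower-bounds the optimal cost. Your closing remark that ``the $(c-1)$ factor absorbs any slack'' is asserted rather than proved; in the $n_r\le c$ branch it is precisely this factor that carries the whole weight of the bound, so it cannot be obtained from the ratio-$3$ POM argument as a black box.
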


\begin{proof}
    First, we show that the CEM is truthful.
    When $n_r\le c$, it follows from the truthfulness of the EndPoint Mechanism.
    Let us consider the case $n_r>c$.
    Let $\vec x$ an instance on which the agent $x_i$ is able to manipulate by reporting $x_i'$ instead of its real position $x_i$.
    Let us denote with $y_1$ and $y_2$ the position at which the facilities are placed according to the truthful input and let $y_1'$ and $y_2'$ be the positions of the facilities when the agent at $x_i$ reports $x_i'$, i.e. when the mechanism is given in input $\vec x'=(x_i',\vec x_{-i})$.
    By definition of the mechanism, we have that $y_1\le y_2$ and $y_1'\le y_2'$.
    First, notice that since every agent is assigned to its closest facility, if $y_i=y_i'$ for $i=1,2$ the agents does not benefit from manipulating, therefore it must be that either $y_1\neq y_1'$ or $y_2\neq y_2'$.
    Second, we notice that if $x_i\le y_1$, then no agent can benefit by misreporting, since 
    \begin{enumerate*}[label=(\roman*)]
        \item if $x_i'\le x_i$ the output of the mechanism does not change; and 
        \item if $x_i\le x_i'$ then $y_i\le y_i'$, which increases the cost of the agent.
    \end{enumerate*}
    Similarly, no agent $x_i\ge y_2$ can manipulate.
    Lastly, a similar argument allows to handle the agents that are located in between $y_1$ and $y_2$.
    Indeed, the only way that an agent whose position is between $y_1$ and $y_2$ is to report a position that is outside $[y_1,y_2]$.
    If $x_i'\le y_1$, then the mechanism returns $y_1'<y_1$ and $y_2'=y_2$.
    Since the manipulative agent is assigned to $y_1'<y_1$, its cost increases, since $|y_1'-x_i|\ge |y_1-x_i|\ge \min\{|y_1-x_i|,|y_2-x_i|\}$.
    Finally, the bound on the SAR of the mechanism is obtained by the same arguments used to prove Theorem \ref{thm:BEMtrutful+sar} and \ref{thm:AQMtheorem}.
    Indeed, the bounds used in Theorem are applicable to this case and allows us to conclude that $SAR(CEM)\le 3(c-1)$.
\end{proof}

\end{document}